\newcommand{\ket}[1]{|#1\rangle}
\newcommand{\bra}[1]{\langle#1|}
\newcommand{\ketbra}[2]{|#1\rangle\! \langle #2|}
\newcommand{\Tr}{\operatorname{Tr}}
\newcommand{\tr}[1]{\Tr\left(#1\right)}
\newcommand{\eps}{\varepsilon}
\def\01{\{0,1\}}
\newcommand{\cA}{\ensuremath{\mathcal{A}}}
\newcommand{\cB}{\ensuremath{\mathcal{B}}}
\newcommand{\A}{\mathcal A}
\newcommand{\cE}{\mathcal{E}}
\newcommand{\B}{\mathcal{B}}
\renewcommand{\H}{\mathcal{H}}
\newcommand{\id}{\mathrm{id}}
\newcommand{\supp}{\operatorname{supp}}
\newcommand{\W}{\mathcal{W}}
\newcommand{\V}{\mathcal{V}}
\newcommand{\CP}{\mathrm{CP}}
\newcommand{\CB}{\mathrm{CB}}
\newcommand{\hide}[2][]{{\color{blue} #1}}
\newcommand{\RR}{\mathbb{R}}
\newcommand{\KK}{\mathbb{K}}
\DeclarePairedDelimiterX{\norm}[1]{\lVert}{\rVert}{#1}
\newcommand{\emp}[1]{\textbf{#1}}
\definecolor{corlinks}{RGB}{200,0,0}
\definecolor{cormenu}{RGB}{200,0,0}
\definecolor{corurl}{RGB}{200,0,0}
\newtheorem{theorem}{Theorem}
\numberwithin{theorem}{section}
\newtheorem{definition}[theorem]{Definition}
\newtheorem{lemma}[theorem]{Lemma}
\newtheorem{corollary}[theorem]{Corollary}
\newtheorem{proposition}[theorem]{Proposition}
\newtheorem*{proposition*}{Proposition}
\theoremstyle{definition}
\newtheorem{remark}[theorem]{Remark}
\def\01{\{0,1\}}
\DeclareDocumentCommand{\dist}{o}{%
  \idfNoValueTF{#1}{d}{d_{\mathrm{#1}}}%
}
\NewDocumentCommand{\Prob}{e{_} m}{%
  \idfNoValueTF{#1}{%
    \Pr \set*{#2}
  }{%
    \Pr_{#1} \set*{#2}
  }
}
\newcommand{\C}{\ensuremath{\mathcal{C}}}
\newcommand{\E}{\ensuremath{\mathbb{E}}}
\newcommand{\F}{\ensuremath{\mathcal{F}}}
\newcommand{\K}{\ensuremath{\mathcal{K}}}
\newcommand{\1}{\mathds{1}}
\newcommand{\beq}{\begin{equation}}
\newcommand{\beql}[1]{\begin{equation}\label{#1}}
\newcommand{\eeq}{\end{equation}}
\newcommand{\eeqp}{\,\,\,.\end{equation}}
\newcommand{\eeqc}{\,\,\,,\end{equation}}
\DeclareMathOperator{\poly}{poly}
\def\and{%
  \end{tabular}%
  \hskip 0.5em \@plus.17fil\relax
  \begin{tabular}[t]{c}}
\begin{document}

\title{\vspace{-20pt}\huge \bf Learning finitely correlated states:\protect\\ stability of the spectral reconstruction}

\author{Marco Fanizza\thanks{Department of Mathematical Sciences, University of Copenhagen, Universitetsparken 5, 2100 Denmark. \texttt{mf@math.ku.dk}. Previously at F\'{\i}sica Te\`{o}rica: Informaci\'{o} i Fen\`{o}mens Qu\`{a}ntics, Departament de F\'{i}sica, Universitat Aut\`{o}noma de Barcelona, ES-08193 Bellaterra (Barcelona), Spain.}
\and Niklas Galke \thanks{{F\'{\i}sica Te\`{o}rica: Informaci\'{o} i Fen\`{o}mens Qu\`{a}ntics, Departament de F\'{i}sica, Universitat Aut\`{o}noma de Barcelona, ES-08193 Bellaterra (Barcelona), Spain}. \texttt{niklas.galke@uab.cat}} 
\and Josep Lumbreras\thanks{{Centre for Quantum  Technologies,  National University of Singapore, Singapore}. \texttt{josep.lumbreras@u.nus.edu}}
\and Cambyse Rouzé\thanks{{Inria, Télécom Paris - LTCI, Institut Polytechnique de Paris, 91120 Palaiseau, France.} \texttt{rouzecambyse@gmail.com}}
\and Andreas Winter\thanks{{ICREA \&{} F\'{\i}sica Te\`{o}rica: Informaci\'{o} i Fen\`{o}mens Qu\`{a}ntics, Departament de F\'{i}sica, Universitat Aut\`{o}noma de Barcelona, ES-08193 Bellaterra (Barcelona), Spain. AW is Hans Fischer Senior Fellow with the Institute for Advanced Study, Technische Universit\"at M\"unchen, Lichtenbergstra{\ss}e 2a, D-85748 Garching, Germany}. \texttt{andreas.winter@uab.cat}}}

\date{\today} 

\maketitle
\vspace{-8mm}

\thispagestyle{empty}

\abstract{
Matrix product operators allow efficient descriptions (or realizations) of states on a 1D lattice. We consider the task of learning a realization of minimal dimension from copies of an unknown state, such that the resulting operator is close to the density matrix in trace norm. For finitely correlated translation-invariant states on an infinite chain, a realization of minimal dimension can be exactly reconstructed via linear algebra operations from the marginals of a size depending on the representation dimension. We establish a bound on the trace norm error for an algorithm that estimates a candidate realization from estimates of these marginals and outputs a matrix product operator, estimating the state of a chain of arbitrary length $t$. This bound allows us to establish an $O(t^2)$ upper bound on the sample complexity of the learning task, with an explicit dependence on the site dimension, realization dimension and spectral properties of a certain map constructed from the state. 
A refined error bound can be proven for $C^*$-finitely correlated states, which have an operational interpretation in terms of sequential quantum channels applied to the memory system. We can also obtain an analogous error bound for a class of matrix product density operators on a finite chain reconstructible by local marginals. In this case, a linear number of marginals must be estimated, obtaining a sample complexity of $\tilde{O}(t^3)$. The learning algorithm also works for states that are sufficiently close to a finitely correlated state, with the potential of providing competitive algorithms for other interesting families of states.

}

\maketitle

%Comment this out for previous, depth-two TOC
%\setcounter{tocdepth}{1}
%\tableofcontents

\section{Introduction}

Quantum state tomography is a fundamental task that allows the accurate characterization and validation of quantum devices. When provided with a source, such as a quantum circuit or experiment, that generates an unknown quantum system comprising $n$ qudits (or spins), the goal of quantum state tomography is to represent the density matrix of the quantum state through measurements and classical post-processing. It is well established that, in the absence of prior knowledge about the state, attempting to produce an estimate that closely resembles the unknown state requires resources—such as the number of copies, memory or computational time—that increase exponentially with the number of qudits $n$~\cite{ODonnell2016, Haah2017}, even when a fully functional quantum computer capable of executing any measurement allowed by quantum mechanics is available. In practice, however, a more manageable situation often arises, as the quantum states of interest typically exhibit a well-defined, physically motivated structure, requiring fewer parameters for full characterization. In these scenarios, one can aim to reconstruct efficient descriptions, with examples of such states including outputs of quantum circuits with bounded gate complexity~\cite{zhao2023learning} or thermal states~\cite{bakshi2023learning}. Alternatively, recent approaches like shadow tomography~\cite{aaronson2018shadow,buadescu2021improved} or classical shadows~\cite{huang2020predicting} have emerged, focusing on producing efficient descriptions that yield the same predictions as the density matrix on a subset of observables. 
In this work, we focus on learning \emph{tensor network} realizations for states on a one-dimensional lattice, a.k.a Matrix Product Density Operators (MPDOs), with the promise that an efficient realization exists for the otherwise unknown state. When there are no constraints on positivity, these tensor network representations are called simply matrix product operators (MPOs) or tensor-trains (TTs). Translation-invariant states on an infinite chain whose marginals are MPDOs of bounded number of parameters are known as finitely correlated states (FCSs). {Tensor network realizations} are one of the most widely studied \emph{ansatz} classes for many-body systems, even beyond the one-dimensional setting we consider. A substantial body of literature has demonstrated the effectiveness of tensor network states for computing or approximating ground and thermal properties of quantum systems. This effectiveness is supported both by theoretical results \cite{CiracMPDOGibbs,hastings2006solving,Alhambra2021} and in practice~\cite{Banuls2023}.

Our work includes the following:

\begin{itemize}
\item A quantum state tomography algorithm for finitely correlated states that can utilize both local and entangled measurements to efficiently reconstruct marginals of the original state using a Matrix Product Density Operator (MPDO) realization (see~\cref{Figintro}). Very similar algorithms appeared previously, e.g. in~\cite{Baumgratz_2013}.
\item A rigorous sample complexity bound on the trace norm error for these MPDO representations, which scales polynomially with the system size. Our results include both pure and mixed states, providing the first sample complexity guarantee for the tomography of Matrix Product Density Operators from local measurements~\cite{anshu2023survey}.
\item An extension of our approach to handle the non-translation invariant case, which is particularly relevant for learning physically motivated states in finite-size systems. The learning algorithm is also robust, meaning that it can also be applied to states that are sufficiently close to having an efficient MPDO realization. 
\end{itemize}

\begin{figure}
    \centering
\includegraphics[scale=0.4]{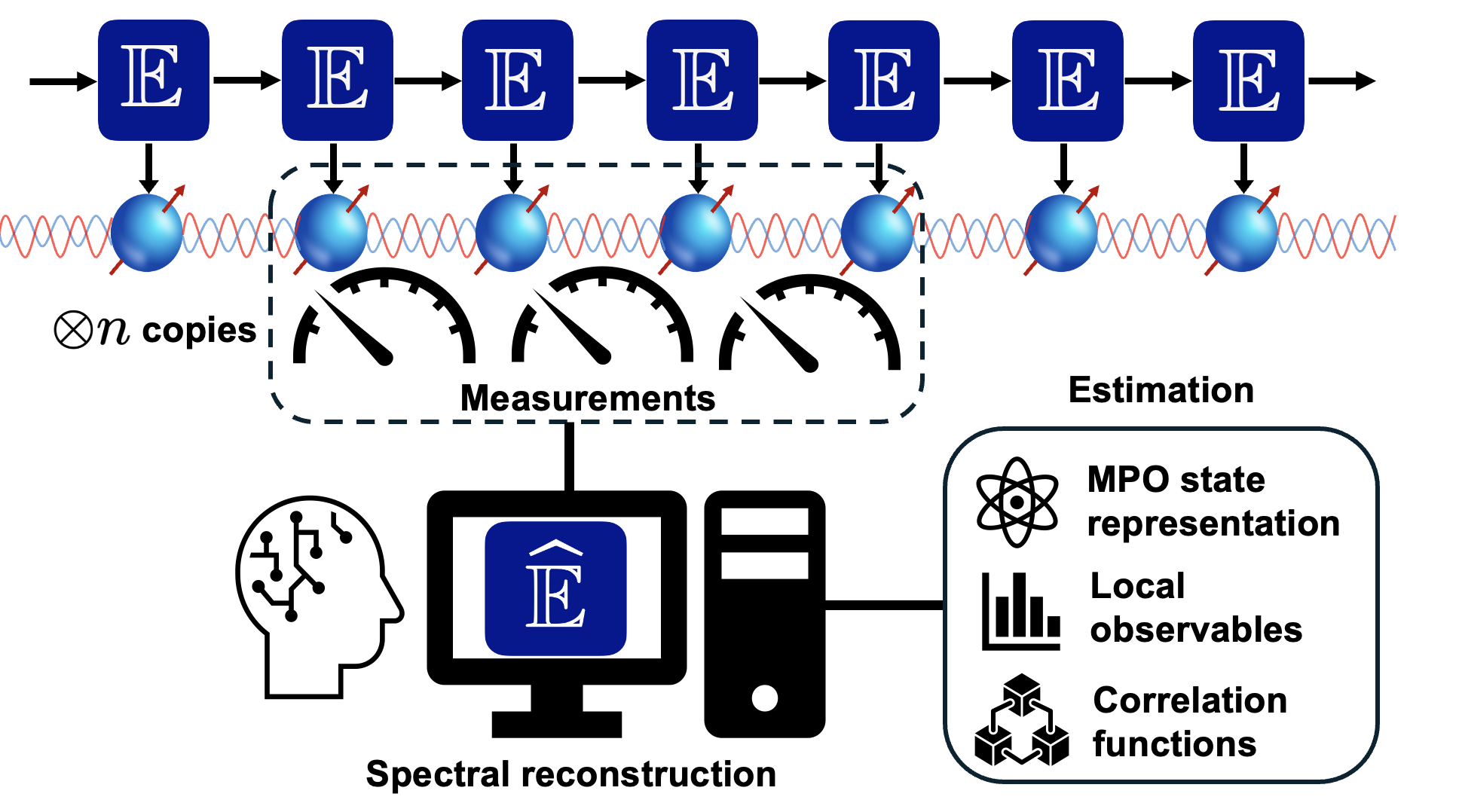}
    \caption{{Sketch for learning a matrix product operator representation of a finitely correlated state. In the measurement phase we aquire data from $n$ copies of a finite size marginal of the state through local or global measurements. Then the spectral reconstruction phase is a classical post-processing technique that allows us to obtain approximations of the parameters of a linear model that generates the state. Using this approximations we then can generate an accurate estimation of an MPO representation. This allows us to obtain estimations of expected values of local observables or correlations functions on larger marginals of the state.  }}\label{fig:cartoon_intro}
\end{figure}

FCSs were originally introduced to characterize translation-invariant states on quantum spin chains with finite correlations~\cite{Fannes1992}. The initial motivation for these studies was to provide efficient representations of ground states for finite-range interactions, % generalized valence-bond solid models~\cite{nachtergaele1996spectral}, 
such as the AKLT Hamiltonian~\cite{Fannes1992}.  
In fact, our approach rests on the crucial observation, elucidated in~\cite{Fannes1992}, that finitely correlated states (FCSs) can be viewed as a quantum generalization of stochastic processes which admit finite-dimensional linear models, known as quasi-realizations~\cite{Vidyasagar2014}. At a high level, a quasi-realization of a stochastic process is a collection of matrices and vectors that can reproduce the probabilities of outcomes through simple matrix multiplications and traces. In the classical case, a special subclass of these models is known as hidden Markov models~\cite{Rabiner1989, Vidyasagar2014} (also referred to as positive realizations). A hidden Markov model is a statistical model where the system being modeled follows a Markov process with unobservable (hidden) states that evolve stochastically, and observable outputs are generated based on measurements of these hidden states. A natural quantum generalization (quantum hidden Markov models) are states generated through consecutive applications of a physical map, which is a quantum channel, on a hidden quantum memory system. We call these models $C^*$-realizations, following the nomenclature of~\cite{Fannes1992}, where $C^*$-finitely correlated states are states admitting such descriptions. 
This interpretation also motivates the physical interest in learning MPDOs as they include entangled multipartite states that can be generated with a number of local physical operations that scales linearly with the size of the system. % operations in typical quantum systems available in laboratories, including quantum processors~\cite{}. 
Note, however, that the class of finitely correlated states is not completely exhausted by states with finite-dimensional 
$C^*$-models, as a recent result by some of the present authors shows~\cite{fanizza2023quantum} in the case of the infinite chain, while similar statements had been obtained for non-translation invariant states on finite chains~\cite{Cuevas2013} and for translation-invariant states on periodic chains~\cite{Cuevas2016}. In fact, the most general class of FCSs can be interpreted as being generated via a memory system described by a general probabilistic theory (GPT), which is a mathematical characterization of plausible physical theories that admit state preparation and measurements. The physical interest of states that do not admit finite-dimensional quantum models is an important and unexplored question. In this paper, dealing also with this more general class of states is a natural choice, given the mathematical tools we use.

In the classical literature on stochastic processes, learning guarantees for hidden Markov models have been established for the so-called \textit{spectral algorithms}: a series of works starting with~\cite{Hsu2008} has shown that an estimate of the relevant marginal at precision $\epsilon^{-1}$ gives rise to a reconstruction of the process with an error in total variation distance that can be bounded as $O(\poly(t,\epsilon^{-1},m))$, where $t$ is the timespan considered, and $m$ is the size of the hidden memory. This, in turn, implies a rigorous sample complexity bound that scales quadratically with the system memory size.
In the quantum setting, reconstruction algorithms using a similar idea have been proposed, notably in the non-translation invariant setting~\cite{PhysRevLett.111.020401}, and their scalability have been demonstrated with simulations and experiments. In general these approaches involve learning a small marginal of the state using quantum state tomography, reconstructing an approximation of the hidden system, and then using that reconstruction to output a larger marginal of the state. Maximum likelihood methods have also been considered~\cite{Baumgratz_2013}. Alternative methods have been devised for Matrix Product States (MPSs), for which certifiable tomography protocols exist~\cite{cramer2010} and have been tested in experiments~\cite{lanyon2017}. Moreover, when the reconstruction can be achieved from the knowledge of local marginals of sufficiently small size, local measurements suffice, and the quantum complexity of the algorithm is much more manageable than in the general case.

 Our algorithm and proof strategy build on the classical case~\cite{Hsu2008}, generalizing the original argument to handle the presence of entangled observables and memory systems beyond classical (and even quantum) systems. At a high level, our algorithm employs what we call the \textit{spectral reconstruction} technique, in analogy to the classical case. Using this method, given as input the density matrix of marginals of large enough size (say $t^*$), %of the state we want to learn {\color{red}SAY IT BETTER}. Then 
 using linear algebra (matrix inversion and multiplications), one can reconstruct a valid MPDO representation. %the model parameters of the hidden memory system, leveraging the fact that it admits a finite-dimensional representation. 
In our learning algorithm, we assume that an estimate of the density matrix of the marginal of size $t^*$ is provided; errors will naturally occur due to the finite statistics of the experiments or any kind of physical noise; using a modification of the spectral reconstruction, that can better tolerate errors in the estimate, we can then output estimates of model parameters. These can be, in turn, used to construct estimates of marginals at any size $t$, possibly larger than $t^*$. %Studying stability is crucial because the input of the algorithm will naturally contain some error due to the finite statistics of the experiments or any kind of physical noise. 
%This estimate of the hidden memory system is then used to output a marginal of the quantum state of size $t$.
While classically this technique was formulated to handle translation-invariant stochastic processes, we also generalize it to address non-translation invariant case. %More at a technical level, we address the stability or robustness of the spectral reconstruction when only approximate estimates are provided, understanding how an error in the marginals of size $t^*$ propagates to the reconstructed state of larger size $t \geq t^*$. 
At the technical level, our analysis fully exploits the fact that an FCS naturally defines an operator system~\cite{paulsen_completely_2003} on which one can speak of completely positive maps and completely bounded norms, and that the map generating the FCS is contractive. This allows us to address not only states generated by repeated quantum measurements but also models for which no finite-dimensional quantum memory explanation exists. Notably, this even provides error bounds for learning classical states that were not previously available.

\subsection{Background and related work}
As already mentioned, our approach draws inspiration from the literature on hidden Markov models, specifically focusing on what are commonly referred to as ``spectral algorithms''. These algorithms are designed to reconstruct a linear model that explains a stochastic process through the estimation of marginals and matrix algebra operations. They aim to overcome the limitations of maximum likelihood approaches, which often lack rigorous convergence guarantees. Spectral algorithms have been applied in more general graphical models as well~\cite{Anandkumar2012}. 
One challenge in ensuring the accuracy of these algorithms lies in their reliance on matrix inversion, which makes them sensitive to small singular values. In a prior study~\cite{Hsu2008} (see also~\cite{MosselRoch} for a related approach), it was shown that the spectral algorithm provides an error bound in terms of total variation distance, assuming that the process is described by a hidden Markov model satisfying certain assumptions, including that certain matrices constructed from the hidden Markov model parameters have singular values lower bounded by a known constant. This established a learning guarantee akin to PAC (Probably Approximately Correct) learning, which becomes more tractable by narrowing the class of target processes. Subsequent works, among them~\cite{Siddiqi2009} and~\cite{balle2013learning}, relaxed some of the assumptions made in~\cite{Hsu2008}, such as the requirement of reconstructing from marginals of size three and invertibility assumptions on the so-called observation matrix. %(see Section \ref{sec.observablereal}). 
Nevertheless, a key premise in these works is indeed the existence of an approximating hidden Markov model. To the best of our knowledge, no error bound in total variation distance has been derived for general models (as discussed in~\cite{balle2013learning} and~\cite{Balle2014}). Notably, an error bound for learning stochastic processes generated by hidden quantum Markov models with the spectral algorithm is also missing, which have been proven to be more expressive than classical ones~\cite{Monras2010,monras2016}, although they are not as expressive as the most general class of processes one could consider~\cite{fanizza2023quantum}. As a result, the idea of investigating spectral algorithms in the context of hidden quantum Markov models was suggested in~\cite{pmlr-v130-adhikary21a}. We note that for a specific type of quantum hidden Markov models it was shown that it is possible to learn efficiently~\cite{juba2012} minimizing the relative entropy between the empirical distribution and distributions in an $\epsilon$-net of the quantum hidden Markov models of fixed memory system. While it is possible that a similar argument could work for a more general class or even for genuinely quantum states, the relative entropy (or any other distinguishability measure) minimization is not computationally efficient, while the spectral algorithms are.

In the quantum information literature, a reconstruction method related to the spectral algorithm we propose is the direct tomography algorithm in~\cite{cramer2010}, which learns a circuit preparing an MPS by knowledge of marginals of sufficiently large size and uses collective operations on subchains of length scaling with the logarithm of the bond dimension. An error guarantee was provided for this algorithm, but it is not clear how to generalize this algorithm for the infinite chain or mixed states. For MPDOs, the state reconstruction scheme in~\cite{PhysRevLett.111.020401} from the exact knowledge of the marginals, is analogous to the one we present, but is lacking an error bound on the precision of the reconstruction. Moreover, our algorithm uses a slightly different prescription for the reconstruction from empirical data. Note that these algorithms were directly proposed in the non-translation-invariant (non-homogeneous) setting, in contrast to the classical ones. In terms of our error analysis, there is no substantial difference between these two cases except for the fact that, in the non-homogeneous case, one is required to learn several marginals of the state instead of a single one. The robustness of the matrix reconstruction in the bipartite case was already discussed in~\cite{PhysRevLett.111.020401} and investigated in more depth in~\cite{Holzapfel18}, where the authors obtained a bound on the accuracy in terms of the operator norm. However, an error bound in trace distance was not explicitly stated, and it is not clear how to adapt their analysis to get a bound for a chain as opposed to a bipartite state. To perform a similar step in our setting, we instead adapt the analysis carried by~\cite{Hsu2008, Siddiqi2009, balle2013learning} to the quantum case. 
Learning the marginals themselves can be done in several ways, and we take the necessary size of the marginals as a parameter of our class. In the translation invariant case, this size is at most of the order of the minimal dimension of the memory system (bond dimension), while in the non-homogenous case it can be arbitrarily large. That being said, in addition to the use of strictly local measurements to learn the marginals, one can use other methods such as constrained maximum likelihood learning algorithms based on measurement statistics. A very recent result \cite{qin2023stable} in that direction provides a bound in Hilbert-Schmidt norm for the error in the reconstruction of the state which is polynomial in the size and requires global (in fact Haar random), yet independent, single-copy measurements. It is unclear to us whether this bound directly leads to a good error bound in trace distance. Instead, we use it as a means to learn the marginals with an error in Hilbert-Schmidt distance, which serves our needs adequately as we will show in Section \ref{sec.statereconstruction}.

\begin{figure}
    \centering
\includegraphics[scale=0.4]{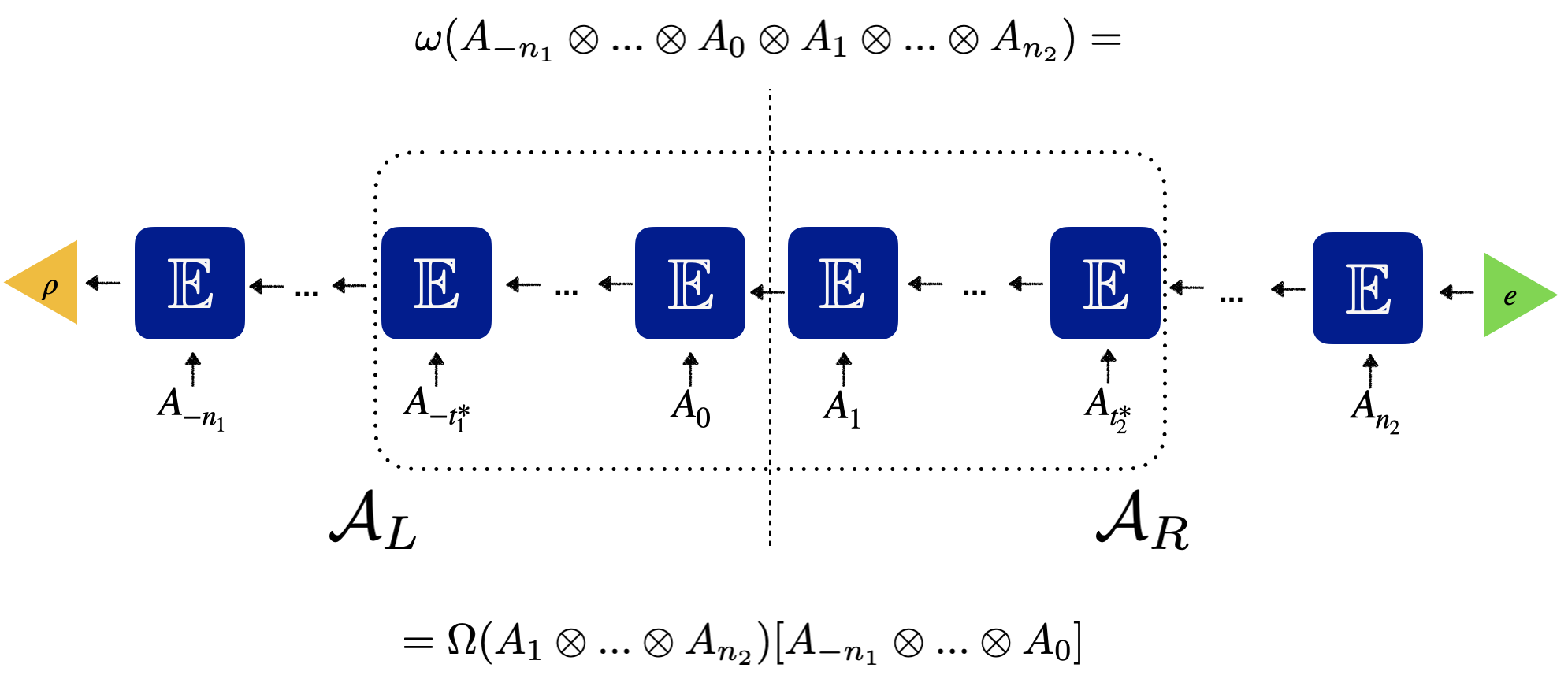}
    \caption{{Schematic of a finitely correlated state. The marginal state corresponding the dotted rectangle completely determines the state. An approximate estimate of the marginal can be used to reconstruct an estimate of the state. We bound the error in trace distance of the reconstruction up to a desired size.}}\label{Figintro}
\end{figure}

After the first version of the present work appeared,~\cite{gondolf2024conditional} showed an efficient algorithm to learn an MPO approximation of a Gibbs state, with methods different from ours and which do not apply to general FCSs.

\section{Results}
\subsection{Models for finitely correlated states}

 Let a spin system be described by a $d_{\mathcal{A}}$-dimensional Hilbert space $\mathcal{H}_{\mathcal{A}}$, and let us denote $p\times q$ complex matrices as $\mathbb{M}_{p,q}(\mathbb{C})$. A state on a finite chain with Hilbert space $\mathcal{H}_n:=\mathcal{H}_{\mathcal{A}}^{\otimes n}$ is described by a density matrix $\omega\in\mathbb{M}_{d_{\mathcal{A}}^n,d_{\mathcal{A}}^n}(\mathbb{C})$, $\omega\geq 0$, $\Tr[\omega]=1$. An MPDO representation (with open boundary condition)  of $\omega$ is specified by 
 \begin{itemize}
 \item a list of numbers $m_{i}\in \mathbb{N}$, $i\in [n+1]$, 
 \item a list of matrices $\mathbb{E}^{(i)}_{k_i,l_i}\in \mathbb{M}_{m_i,m_{i+1}}(\mathbb{C})$, $i\in[n],\, k_i,l_i\in[d_{\A}]$, \item a row vector $\rho\in \mathbb{C}^{m_1}$,
 \item and a column vector $e\in \mathbb{C}^{m_n}$ 
 \end{itemize}
 such that
 \begin{equation}
 \bra{k_1,...k_t}\omega\ket{l_1,...l_t}=\rho\mathbb{E}_{k_1,l_1}^{(1)}...\mathbb{E}_{k_t,l_t}^{(n)}e. 
 \end{equation}

Alternatively, the state of the system can be represented as a positive functional on the Hilbert space of $d_{\mathcal{A}}^n\times d_{\mathcal{A}}^n$ matrices. The functional $\tilde{\omega}$ corresponding to the density matrix $\omega$ is $\tilde{\omega}(A):=\Tr[\omega A]$. The normalization condition can be expressed as the unitality of the functional, i.e. $\tilde{\omega}(\1)=\Tr[\omega]=1$.  On the infinite chain, each spin system is associated with an index $i\in\mathbb{Z}$. While the the density matrix picture is not applicable, it is still possible to define states as normalized positive functionals on an appropriate Hilbert space describing observables that look like identity at infinity, using the formalism of $C^*$-algebras. Such states are completely determined by the sequence of their finite-size marginals, i.e. the restriction to finite subchains $[t_1,t_2]$, $t_1,t_2\in\mathbb{Z}$. 
These marginals can be of course described by density matrices $\omega_{[t_1,t_2]}$. A state is translation invariant if $\omega_{[t_1,t_2]}$=$\omega_{[t_1+t,t_2+t]}$ for every $t_1,t_2,t\in\mathbb{Z}$. Thus, we also use the notation $\omega_{s}\coloneqq \omega_{[t+1,t+s]}$. A translation invariant state is an FCS if there exist 

\begin{itemize}
 \item a number $m\in \mathbb{N}$
 \item a list of matrices $\mathbb{E}_{k_i,l_i}\in \mathbb{M}_{m,m}(\mathbb{C})$, $\, k_i,l_i\in[d_{\A}]$, 
 \item a row vector $\rho\in \mathbb{C}^{m_1}$ such that $\sum_{k_i=1}^{d_{\mathcal{A}}}\sum_{l_i=1}^{d_{\mathcal{A}}}\rho \mathbb{E}_{k_i,l_i}=\rho$,
 \item and a column vector $e\in \mathbb{C}^{m_n}$ such that $\sum_{k_i=1}^{d_{\mathcal{A}}}\sum_{l_i=1}^{d_{\mathcal{A}}} \mathbb{E}_{k_i,l_i}e=e$,
 \end{itemize}

such that the marginals of size $t$, $\omega_t:=\omega_{[1,t]}$ can be written as

\begin{equation}
 \bra{k_1,...k_t}\omega_{t}\ket{l_1,...l_t}=\rho\mathbb{E}_{k_1,l_1}...\mathbb{E}_{k_t,l_t}e. 
 \end{equation}

For the corresponding functionals, denoting $\mathbb{E}_A:=\sum_{k=1}^{d_{\mathcal{A}}}\sum_{l=1}^{d_{\mathcal{A}}} \mathbb{E}_{\bra{i}A\ket{j}}$  we have, for every $A_{i}\in \mathbb{M}_{{d_{\mathcal{A}}},{d_{\mathcal{A}}}}(\mathbb{C})$, $i\in{[t]}$

\begin{equation}
\tilde{\omega}_{t}(A_1\otimes A_2\otimes ...\otimes A_t)=\rho\mathbb{E}_{A_1}...\mathbb{E}_{A_t}e,
\end{equation}

and $\tilde{\omega}_{t}$ is extended to all matrices in $\mathbb{M}_{d_{\mathcal{A}}^t,d_{\mathcal{A}}^t}(\mathbb{C})$ by linearity. We say that $\{\mathbb{E},\rho,\tau\}$ define a realization of $\omega$.

As it is natural for states on the infinite chain, the functional representation was used in~\cite{Fannes1992}. 
In the Appendix we will also use the functional notation (and just write $\omega$ instead of $\tilde{\omega}$), as in this case it is better suited to the type of manipulations we do in the technical analysis. In the main text, we will use the density matrix notation. However, we want to stress that since we will always be interested in finite-size marginals of states on the infinite chain, we could in principle work with just density matrices, and going back and forth between the two pictures is immediate. In the following, we will focus our attention on (marginals of) translation-invariant states. Similar ideas can be applied to non-translation-invariant states on a finite chain, and we redirect to \cref{secnontrans} for the details.

Any state admits an MPDO representation with sufficiently high $m_i$. Similarly, any translation-invariant state on the infinite chain can be obtained as a limit of FCSs (in the weak-* topology)~\cite{Fannes1992}, but $m$ may need to diverge. Clearly, a realization of $\omega$ with in a vector space of dimension $m\leq m'$ can be also be seen as a realization in a vector space with dimension $m'$. It is then natural to use the dimension of the realization as an expressivity parameter: the larger the dimension, more states can be realized. We can thus consider the problem of learning states with realizations with dimension bounded by a constant $m$. As we will show, the sample complexity of this problem scales polynomially in $m$. However, this is not the only parameter appearing in our bounds. Other parameters appear, which are related to the hardness of reconstructing a realization from knowledge of the marginals. We now explain how they arise.

A realization of an FCS of minimal dimension, say $m$, is called regular. For any integers $t_1<0$ and $t_2\geq 0$ we can associate the segments $[t_1,0]$ and $[1,t_2]$ of the chain to matrices $\mathbb{M}_{d^{t_1+1},d^{t_1+1}}(\mathbb{C})$, with an orthonormal basis $\mathcal{B}_{R}$ and $\mathbb{M}_{d^{t_2},d^{t_2}}(\mathbb{C})$, with an orthonormal basis $\mathcal{B}_{L}$,
and define a linear map $\Omega$ from $\mathbb{M}_{d^{t_2},d^{t_2}}(\mathbb{C})$ matrices to functionals on $\mathbb{M}_{d^{t_1+1},d^{t_1+1}}(\mathbb{C})$, as 
\begin{equation}
\Omega(Y)[X]:=\tilde{\omega}(X\otimes Y)]=\Tr[\omega_{[t_1,t_2]}(X\otimes Y)].
\end{equation}

Identifying $\Omega$ with its corresponding matrix $\Omega$ in a self-adjoint basis (with a slight abuse of notation), one obtains a $\mathbb{M}_{d^{2(t_1+1)},d^{2t_2}}(\mathbb{C})$ real matrix. 
As from the definition, its matrix elements can be obtained simply by expectation values of $\omega$: the matrix element $\Omega_{ij}$ is equal to the expectation value $\Tr[\omega_{[t_1,t_2]}(X_i\otimes Y_j)]$, where $X_i$ and $Y_j$ are the observables in respectively $\mathcal{B}_{L}$,  with index $i$, and $\mathcal{B}_{R}$, with index $j$. In fact, the matrix elements of $\Omega$ are simply a rearrangement in a matrix form of the coefficients of $\omega_{[t_1,t_2]}$ in a product basis (across the bipartition), which means that learning $\Omega$ is equivalent to learning $\omega_{[t_1,t_2]}$. We can then construct the real singular value decomposition $\Omega=UDO$, with $U\in \mathbb{M}_{d^{2(t_1+1)},m}(\mathbb{R})$ being a matrix with orthonormal columns, $O\in \mathbb{M}_{m,d^{2t_2}}(\mathbb{R})$ being a matrix with orthonormal rows, and $D\in \mathbb{M}_{m,m}(\mathbb{R})$ diagonal and nonnegative. Similarly, we can define the maps $\Omega_A(Y)[X]:=\omega(X\otimes A\otimes Y)$ from $\mathbb{M}_{d^{t_2},d^{t_2}}(\mathbb{C})$ matrices to functionals on $\mathbb{M}_{d^{t_1+1},d^{t_1+1}}(\mathbb{C})$. One can thus form the matrices $U^{\intercal} \Omega\in \mathbb{M}_{m,d^{2t_2}}(\mathbb{C})$, $U^{\intercal} \Omega_A\in \mathbb{M}_{m,d^{2t_2}}(\mathbb{C})$, $(U^{\intercal}\Omega)^+\in \mathbb{M}_{d^{2t_2},m}(\mathbb{C})$, which is the Moore-Penrose inverse of $U^{\intercal}\Omega$. 

One can show that the rank of $\Omega$ can only be smaller or equal than $m$, and it cannot decrease by increasing $-t_1$ or $t_2$.  A condition equivalent to this rank saturation is that for a realization of minimal dimension $m$, 
\begin{itemize}
\item the column vectors $\{\mathbb{E}_{k_1,l_1}...\mathbb{E}_{k_{t_1},l_{t_1}}e\}$, $k_{i},l_{i}\in[d_{\mathcal{A}}]$, $1 \leq i\leq  t_2$ span the full space $\mathbb{C}^{m}$, and
\item the row vectors $\{\rho\mathbb{E}_{k_{-t_1},l_{-t_1}}...\mathbb{E}_{k_{t_0},l_{t_0}}\}$, $k_{i},l_{i}\in[d_{\mathcal{A}}]$, $-t_1\leq i\leq 0$ span the full space $\mathbb{C}^{m}$.
\end{itemize}
For $-t_1,t_2$ large enough it becomes equal to $m$: see Corollary~\ref{thm:Vij} for a proof, where we also show that it is enough to take $-t_1\geq m-1,t_2\geq m$, although this is just a worst-case scenario.

This motivates to include another complexity parameter for the states that we want to learn, which is the minimum $t^*$ such that the rank of $\Omega$ computed on $[-t^*+1,t^*]$ saturates to its maximum value. We also say that a state is $s$-reconstructible if $\Omega$ computed on $[-s+1,s]$ saturates to its maximum value. 

Let us now denote by $v_{\1}$ the column vector made of the coeffients of the identity operator on $[1,t_2]$, in the chosen basis, and similarly $w_{\1}$ the row vector of coeffients of the identity operator on $[-t_1,0]$ in the chosen basis. 

As soon as the rank of $\Omega$ is $m$, a regular realization (called the regular realization) can be obtained as follows:
\begin{itemize}
\item A column vector $e:=U^{\intercal} \Omega v_{\1}\in \mathbb{C}^{m}$,\label{eqe1intro}
\item A row vector $\rho:= w_{\1} U =w_{\1}(\Omega (U^{\intercal}\Omega)^{+})\in \mathbb{C}^{m}$,\label{eqrho1intro}
\item A list of matrices $\mathbb{K}_{k,l}:=U^{\intercal}\Omega_{\ketbra{k}{l}} (U^{\intercal}\Omega)^+\in \mathbb{M}_{m,m}(\mathbb{C})$, $k,l\in[d]$.
\end{itemize}

This fact suggest that an approximate realization for $\tilde{\omega}$ can be obtained by an approximate estimation of the density matrix $\omega_{[-t_1,t_2]}$. The sensitivity of this procedure to errors in the estimation of $\omega_{[-t_1,t_2]}$ is determined by the sensitivity of the linear algebra operations. The most delicate operation is the Moore-Penrose inverse of $(U^{\intercal}\Omega)^+$. This motivates another complexity parameter, i.e. a lower bound $\eta>0$ on the the smallest non-zero singular value of $\Omega$ computed on $[-s+1,s]$, denoted by $\sigma_m(\Omega,s)$. 

A more restricted class of models are those where the maps $\mathbb{E}_{k,l}$ come from an actual quantum channels: in this case $\rho$ is a state of a hidden memory system, which evolves via consecutive applications of a channel $\mathcal{E}^{\dagger}$ from the memory system to the joint system of the memory and a site of the chain. Here, the memory system has the special structure of a space of operators on a finite-dimensional Hilbert space, and we use the notation $\mathcal{L}(\mathcal{H})$ to denote the linear operators on $\mathcal{H}$. The dimension of this space is $\dim(\mathcal{H})^2$. To keep the formalism closer to that of general models, it is useful to consider the unital completely positive map $\mathcal{E}$, the adjoint of $\mathcal{E}^{\dagger}$. More in detail, a $C^*$-realization (or quantum realization) of $\tilde{\omega}$ is composed of:

\begin{itemize}
 \item a Hilbert space $\mathcal{H_{\mathcal{B}}}$ of dimension $d_{\mathcal{B}}\in \mathbb{N}$, with identity operator $\1_{\mathcal{B}}$.
 \item a unital completely positive map $\mathcal{E}:\mathcal{L}(\mathcal{H_{\mathcal{B}}}\otimes \mathcal{H_{\mathcal{A}}})\rightarrow \mathcal{L}(\mathcal{H_{\mathcal{B}}})$, 
 \item a density matrix $\rho$ on $\mathcal{H_{\mathcal{B}}}$ such that $ \Tr_{\mathcal{H_{\mathcal{A}}}}[\mathcal{E}^{\dagger}(\rho)]=\rho$,
 \end{itemize}

such that, denoting $\mathcal{E}_A:\mathcal{L}(\mathcal{H_{\mathcal{B}}})\rightarrow \mathcal{L}(\mathcal{H_{\mathcal{B}}})$ the map defined as $\mathcal{E}_A(B)=\mathcal{E}(A\otimes B)$, we have

\begin{equation}
\bra{k_1,...k_t}\omega_{t}\ket{l_1,...l_t}=\Tr_{\mathcal{H_{\mathcal{B}}}}[\rho\mathcal{E}_{\ketbra{k_1}{l_1}}... \mathcal{E}_{\ketbra{k_t}{l_t
}}]=\tilde{\rho}(\mathcal{E}_{\ketbra{k_1}{l_1}}... \mathcal{E}_{\ketbra{k_t}{l_t}}(\1_{\mathcal{B}})).
\end{equation}

In functional notation, this reads

\begin{equation}
\omega_{t}(A_1\otimes \cdots A_t)=\tilde{\rho}(\mathcal{E}_{A_1}... \mathcal{E}_{A_t}(\1_{\mathcal{B}})).
\end{equation}

We can thus consider the following class of states:

\begin{definition}\label{defrelclass}
We define $\mathcal{S}(m,s,\eta)$ as the class of translation invariant states on the infinite chain such that they admit a realization with dimension less than $m$, $s$-reconstructible, and with $\sigma_{m}(\Omega,s)\geq \eta$.
\end{definition}

\begin{definition}\label{defrelclassc}
We define $\mathcal{S}_q(d_{\mathcal{B}},s,\eta)$ as the class of translation invariant states on the infinite chain such that they admit a quantum realization with a memory Hilbert space of dimension less than $d_{\mathcal{B}}$, $s$-reconstructible, and with $\sigma_{m}(\Omega,s)\geq \eta$.
\end{definition}

From the definition, it is clear that $\mathcal{S}_q(d_{\mathcal{B}},s,\eta)\subseteq \mathcal{S}(d_{\mathcal{B}}^2,s,\eta)$. As we will now see, assuming the existence of a quantum realization can give an improved bound on the sample complexity.

\subsection{Learning algorithm for finitely correlated states}
Our goal is to learn an estimate of a realization of $\tilde\omega$ with the promise that $\tilde{\omega}$ belongs to $\mathcal{S}(m,s,2\eta)$ or $\mathcal{S}_q(d_{\mathcal{B}},s,2\eta)$. To this end, we use Algorithm~\ref{alg:learnFCS},  named \textsf{LearnFCS}, which 
\begin{itemize}
\item takes as input copies of $\hat{\omega}_{2s+1}$, $s$, $\eta$ and a choice of bases, 
\item  outputs estimated realization parameters $\hat{\rho}, \hat{e}, \hat{\mathbb{K}}_{Z_t}$,
\end{itemize}

\begin{figure}[ht]

\centering
\begin{minipage}{.9\linewidth}
\begin{algorithm}[H]
\caption{\textsf{LearnFCS}} 
	\label{alg:learnFCS}
	\begin{algorithmic}[1]
        \State Input: State $\omega_{2s+1}^{\otimes n}$, $s\in\mathbb{N}$, $\eta\in\mathbb{R}$, bases $\mathcal{B}_{\mathcal{A}} = \lbrace Z_i \rbrace_{i=1}^{d^2_{\mathcal{A}}} ,\mathcal{B}_{L} = \lbrace Y_i \rbrace_{i=1}^{d^{2s}_{\mathcal{A}}} , \mathcal{B}_{R} = \lbrace X_i \rbrace_{i=1}^{d^{2s}_{\mathcal{A}}} $
        \State Use \textsf{Tomography}$(\omega_{2s+1}^{\otimes n},s)$ and produce empirical estimates: \begin{itemize}
        \item $\widehat{\Omega(\1)}[Y_j]=\Tr[\hat{\omega}_s Y_j ]$
                \item $\widehat{\tau\Omega}[X_i]=\Tr[\hat{\omega}_s X_i]$ 
            \item 
        $\hat{\Omega}(X_i)[Y_j] = \Tr[\hat{\omega}_{2s}( Y_j \otimes X_i )] $
        \item $\hat{\Omega}_{Z_k}(X_i)[Y_j] = \Tr[\hat{\omega}_{2s+1}(Y_j\otimes Z_k \otimes X_i)]$
        \end{itemize}
        for all $Z_k\in\mathcal{B}_{\mathcal{A}}$, $Y_j\in\mathcal{B}_{\mathcal{A}_L}$ and $X_i\in\mathcal{B}_{\mathcal{A}_R}$.
        \State Compute the SVD of $\hat{\Omega} = \hat{U}\hat{D}\hat{O}^\intercal$ and truncate the columns of $\hat{U}$ such that   we keep the columns corresponding the singular values $\sigma_i ( \hat{\Omega})  \geq \eta/2$.
        \State Output the estimated realization operators:
        \begin{itemize}
            \item $\hat{e}={\hat{U}}^{\intercal}\widehat{\Omega(\1)} $
            \item $\hat{\rho}=\widehat{\tau \Omega}(\hat{U}^{\intercal}\hat\Omega )^+$
            \item $\hat{\mathbb{K}}_{Z_i}={\hat{U}}^{\intercal}(\hat\Omega_{Z_i})({\hat{U}}^{\intercal}\hat\Omega)^+$ for all basis elements $Z_i\in\mathcal{B}_\mathcal{A}$.
        \end{itemize}
	\end{algorithmic} 
\end{algorithm}
\caption{Finitely correlated state learning algorithm. \textsf{Tomography}$(\omega_{2s+1}^{\otimes n},s)$ is any subroutine that takes input $n$ copies of any state $\omega_{2s+1}$ and estimates the marginals $\hat{\omega}_{s}$, $\hat{\omega}_{2s}$ and $\hat{\omega}_{2s+1}$ on a subset of the chain of length respectively $2s$ and $2s+1$.}
\end{minipage}
\end{figure}

We then let $\hat{\omega}_t$ the operator reconstructbile from these parameters, i.e. the operator such that %Reconstruct $\hat{\omega}_t$ using
        \[  \Tr[\hat\omega_{t}(Z_1\otimes\cdots\otimes Z_t)]:=\hat{\rho}\hat{\mathbb{K}}_{Z_1}\cdots\hat{\mathbb{K}}_{Z_t}\hat{e}\] for all $Z_i\in\mathcal{B}_{\mathcal{A}}$.
We stress that the output of \textsf{LearnFCS} are a pair of vectors in $\mathbb{C}^{m}$, and $d^2_{\mathcal{A}}$ matrices $\mathbb{M}_{m,m}(\mathbb{C})$. The density matrix $\hat{\omega}_t$ requires $d_{\mathcal{A}}^{2t}$ computations of the matrix elements to be completely specified, but each matrix element can be computed in linear time once the estimated realization is obtained.

The first step of \textsf{LearnFCS} is to do tomography from copies of $\omega_{2s+1}$ to find estimates $\hat{\omega}_s,\hat{\omega}_{2s}, \hat{\omega}_{2s+1}$ of ${\omega}_s,{\omega}_{2s}, {\omega}_{2s+1}$ respectively. This can be done with any preferred quantum algorithm. Then, the realization parameters are estimated plugging the marginal estimates in the formulas for the observable realization. An important step is to remove spurios singular values (stictly smaller than $\eta$) from the SVD of the estimation of $\Omega$, which allows to obtain a realization of minimal dimension $m$ when the error in the estimation of $\omega_{2s}$ is small enough. As a matter of fact, the algorithm does not need to know a bound on $m$ in advance, while $s$ and $\eta$ are parameters of the algorithm. Once the realization parameters have been estimated, each matrix element of $\hat\omega_t$, in a product basis, can be computed in time linear in $t$. In the same way, expectation values of product observables can be computed in time linear in $t$.
A similar procedure can be used for non-translation invariant states on a finite chain, as we discuss in \cref{secnontrans}.

The parameters $s,\eta$ appear in the definition of the class because the algorithm and the sample complexity bounds we obtain depend explicitly on them. Some remarks are in order.

\begin{remark}
As we already stated, any state with realization of dimension $m$ is $m$-reconstructible, which means $\mathcal{S}(m,s,\eta)\subseteq \mathcal{S}(m,m,\eta)$. However, as we mentioned in the introduction we expect that, generically, a state $m$-dimensional realization is $O(\mathrm{poly}(\log m))$-reconstructible. This is at least true for models generated by random quantum channels, due to a generic quantum Wielandt inequality \cite{jia2024generic,KLEP201656}. An heuristic argument for the general case is the following. As we mentioned above, the rank saturation condition is equivalent to the column vectors  $\{\mathbb{E}_{k_1,l_1}...\mathbb{E}_{k_{t_1},l_{t_1}}e\}$, $k_{i},l_{i}\in[d_{\mathcal{A}}]$, and
the row vectors $\{\rho\mathbb{E}_{k_{-t_1},l_{-t_1}}...\mathbb{E}_{k_{t_0},l_{t_0}}\}$ spanning the full space $\mathbb{C}^{m}$. If the maps $\mathbb{E}_{k,l}$ are generic linear maps, $m$ row vectors and $m$ column vectors constructed as above are already generically linearly independent, meaning that we can take $s=O(\log m)$. This is not a complete argument as we have additional conditions on the maps $\mathbb{E}_{k,l}$, $\rho$ and $e$, that is, they give rise to marginals of a bona-fide translation invariant state on an infinite chain. However, there is no apparent reason for these additional conditions to be an obstacle to a complete argument.

If $s$ is of the order of $m$,  the learning algorithm requires to deal with density matrices $\hat\omega_{m}$, $\hat\omega_{2m}$, $\hat\omega_{2m+1}$ of size exponential in $m$, which may be undesirable. The argument just made shows that in practice, one may reasonably hope to target a class $\mathcal{S}(m,O(\mathrm{poly}(\log m)),\eta)$, in which case the density matrices required are only of size $O(\mathrm{poly}(m))$.
\end{remark}

\begin{remark}
    
 We could not find a compelling physical interpretation of the singular value $\sigma_{m}(\Omega,s)$ in terms of usual properties of bipartite states, such as their local spectra of the marginals.
Nonetheless, while it is clear that $\sigma_{m}(\Omega,s)$ (and thus $\eta$) can appear in the analysis of the error on the estimation of the realization parameters due to the sensitivity of the matrix inversion, it also plays a role in bounding the error propagation in computing $\hat{\omega}_t$ from approximate realization parameters, as we will explain in the overview of the proofs. This has to do with the role that $\sigma_{m}(\Omega,s)$ has in establishing inequalities between a weighted Euclidean norm and an operator norm on the memory system, viewed as an operator system. Operationally, it relates an Euclidean distance in a coordinate system for the vector space of the general probabilistic theory with a quantifier of distinguishability via measurements in the same theory.
 \end{remark}

\begin{remark}

 Let us denote $\Omega_{[1-s,s]}$ the map $\Omega$ constructed from the subchain $[1-s,s]$. An upper bound on $\sigma_{m}(\Omega,s)$ is $\sigma_{m}(\Omega,s)^2\leq \Tr[\Omega_{[1-s,s]}^2]=\Tr[\omega^2_{[1-s,s]}]$. This means that $\sigma_{m}(\Omega,s)^2$ cannot be larger than the purity of the marginals, and this upper bound can decrease exponentially fast in $s$, in the worst case. In fact, this happens if the state is a product state. In that case, the minimal realization has $m=1$, $s=1$ is sufficient for the reconstruction, and $\sigma_{m}(\Omega,s)^2=\sigma_{1}(\Omega,s)^2=\Tr[\omega^2]^{2s}\geq \frac{1}{d_{\mathcal{A}}^{2s}}$. Therefore, implementing the algorithm with arbitrary $s$ would require to have $\eta$ exponentially small in $s$ to have the guarantee to correctly reconstruct every product state. To avoid this shortcoming, one could choose the $s$ and $\eta$ to give as input to \textsf{LearnFCS} only after a pre-learning phase, where $\Omega_{[1-s,s]}$ is estimated for every $s$ larger than some fixed $s_{\max}$ that we choose beforehand. The number of copies of $\Omega_{[1-s,s]}$ must be such that the singular values are learned with error at most $\eta/3$.  We then choose the minimum $s$ such that the number of singular values larger than $\eta/2$, of the estimated $\hat\Omega_{[1-s,s]}$, is maximal. If we have the promise that there is at least one $s\leq s_{\max}$ such that the rank of $\Omega_{[1-s,s]}$ is maximal and all the singular value are larger than $\eta$, we are guaranteed to find the minimal such $s$ with high probability. In particular, $\eta\leq \frac{1}{d_{\mathcal{A}}}$ is sufficient to reconstruct product states. This preprocessing adds a subleading sample and computational complexity overhead to the analysis of \textsf{LearnFCS}.
\end{remark}

 To illustrate the effectiveness and practicality of 
 \textsf{LearnFCS}, we simulated the reconstruction of the AKLT ground state~\cite{affleck1988valence}.
 In this case, while $m=8$, $s=1$ is sufficient for the reconstruction, showing that the error behaves as predicted, see Figure~\ref{fig:results_simulation}. Technical details are reported in~\cref{sec.numer}.
 This supports the significance of our analysis for practical applications, as the {size of the necessary marginals is typically small with respect to the local dimensions}. In fact, for small $s_{\max}$, the classical post-processing can be easily implemented on a laptop.

\begin{figure}[ht]
  \begin{minipage}{0.5\textwidth}
    \centering
    \includegraphics[width=\linewidth]{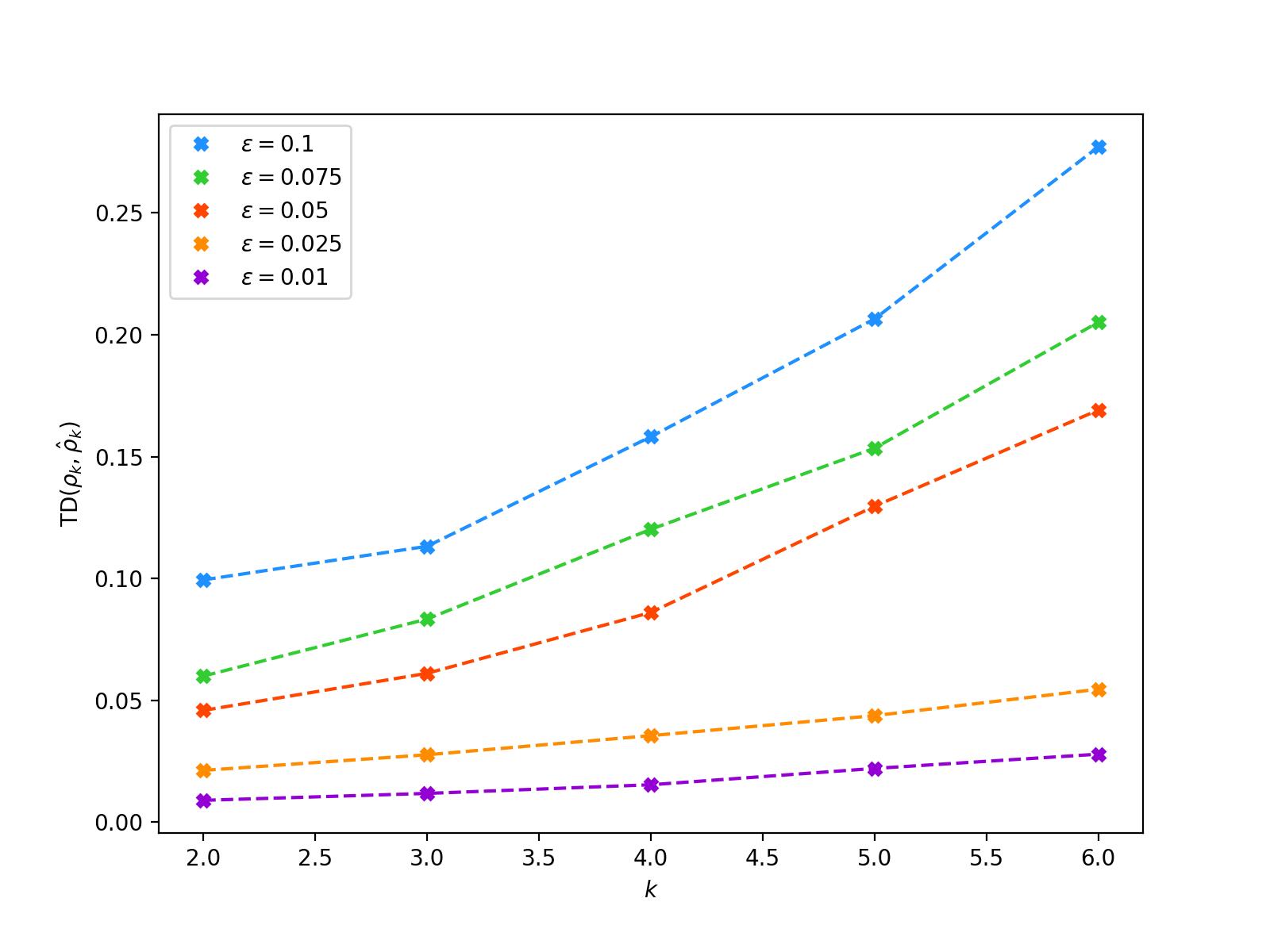}
    \label{fig:figure1}
  \end{minipage}%
  \begin{minipage}{0.5\textwidth}
    \centering
    \includegraphics[width=\linewidth]{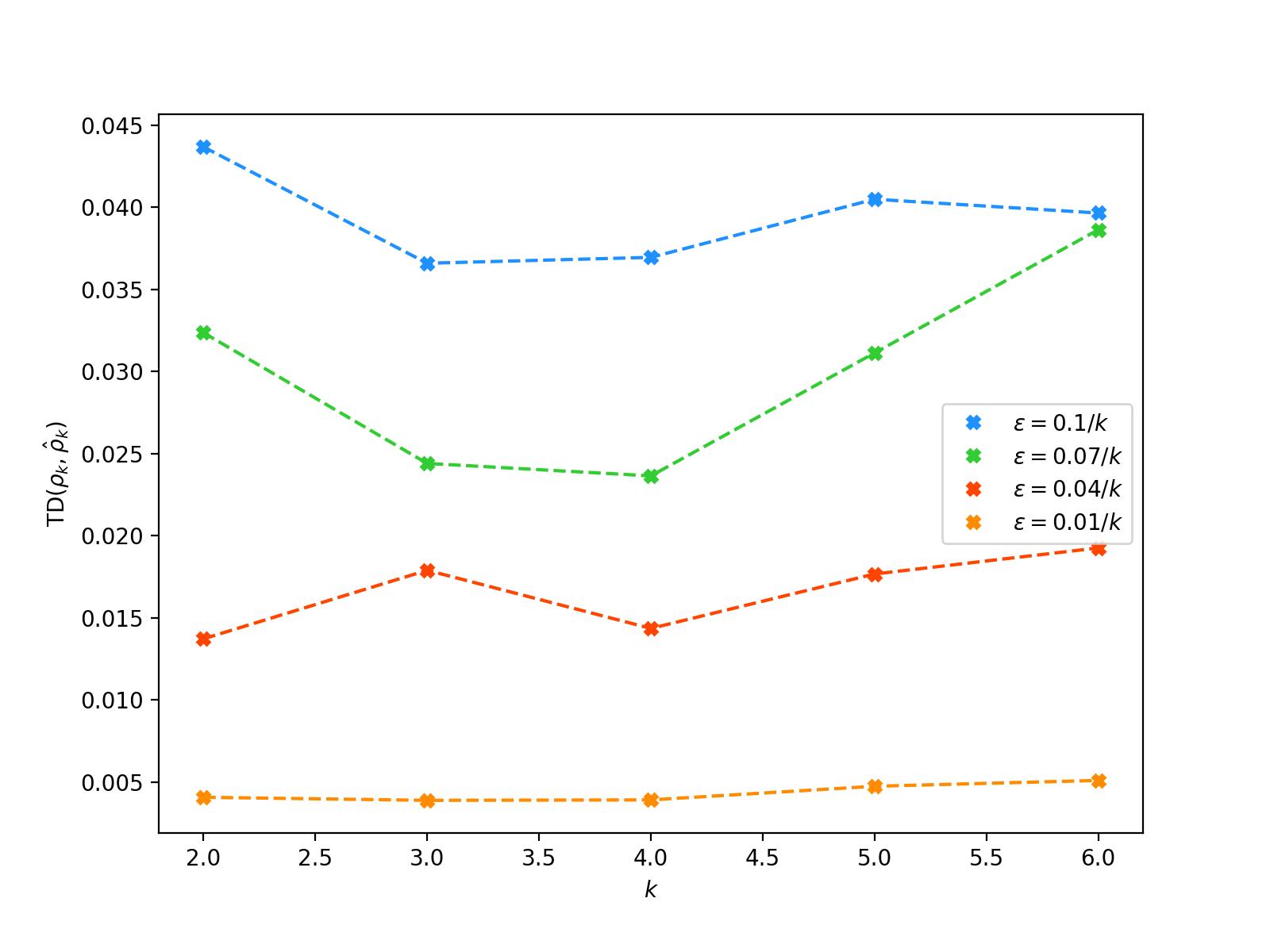}
    \label{fig:figure2}
  \end{minipage}
  \caption{Scaling of trace norm error $TD(\omega_k,\hat{\omega}_k ) = \frac{1}{2}\| \rho_k - \hat{\rho}_k \|_1$ between the true reduced density matrix $\rho_k$ of the ground state of the AKLT model and the reconstructed one $\hat{\rho}_k$ from the maps $\hat{\Omega},\hat{\Omega}_{(\cdot )}$. We study the scaling for different errors $\epsilon = \|\hat{\Omega} - \Omega \|_2 = \| \hat{\Omega}_{(\cdot)} - \Omega_{(\cdot)}  \|_2$ %  where $\lambda_i$ are the normalized basis elements of the one site algebra $\mathbb{M}_3$. 
  \textbf{Left plot:} each line corresponds to a fixed $\epsilon$ and the reconstruction of $\omega_k$ is done using the same $\hat{\Omega},\hat{\Omega}_{(\cdot)}$. This plot illustrates the exponential behavior of the error propagation Theorem~\ref{thmerrorpropagation} coming from the factor $(1+\Delta)^t$.
  \textbf{Right plot:} each dot corresponds to an independent experiment and the error is normalized by the number of sites. This plot illustrates the linear behavior if we renormalize the errors since $(1+\Delta/k)^k \leq 1+2k$ (for $\Delta\leq 1/2$).}
  \label{fig:results_simulation}
\end{figure}

\subsection{Main theorems}

We are now ready to state our main results. The first theorem establishes how the error in Hilbert Schmidt distance on the marginals, $D_{HS}(A,B):=\Tr[(A-B)^\dagger(A-B)]^{1/2}$, propagates to the the trace norm error $d_{\Tr}(A,B)=\frac{\|A-B\|_1}{2}$ of the reconstructed operators:

\begin{theorem}\label{theomainintro}
Let $\tilde{\omega}\in \mathcal{S}(m,s,\eta)$. Let $\hat{\omega}_{s},\hat{\omega}_{2s}, \hat{\omega}_{2s+1}$ be estimates of ${\omega}_{s},{\omega}_{2s}, {\omega}_{2s+1}$ respectively, such that $D_{HS}(\hat{\omega}_{s},{\omega}_{s})$, $D_{HS}(\hat{\omega}_{2s},{\omega}_{2s})$, $D_{HS}(\hat{\omega}_{2s+1},{\omega}_{2s+1})$ are smaller than $\frac{\epsilon\eta^3}{20 t m \sqrt{d_{\A}}}$. Then, $\hat{\omega}_t$ constructed from the output of Algorithm~\ref{alg:learnFCS} satisfies
\begin{equation}
\frac{\|\hat{\omega}_t-\omega_t\|_1}{2}\leq \epsilon.
\end{equation}
\end{theorem}

An alternative bound can be established if we also have the promise that a quantum realization exists.

\begin{theorem}\label{theomainintroc}
Let $\tilde{\omega}\in \mathcal{S}_q(d_{\mathcal{B}},s,\eta)$. Given $\hat{\omega}_{s},\hat{\omega}_{2s}, \hat{\omega}_{2s+1}$ estimates of ${\omega}_{s},{\omega}_{2s}, {\omega}_{2s+1}$ respectively, such that $D_{HS}(\hat{\omega}_{s},{\omega}_{s})$, $D_{HS}(\hat{\omega}_{2s},{\omega}_{2s})$, $D_{HS}(\hat{\omega}_{2s+1},{\omega}_{2s+1})$ are smaller than $\frac{\epsilon\eta^3}{20 t d_{\mathcal{B}} \sqrt{d_{\A}}}$. Then, $\hat{\omega}_t$ constructed from the output of Algorithm~\ref{alg:learnFCS} satisfies
\begin{equation}
\frac{\|\hat{\omega}_t-\omega_t\|_1}{2}\leq \epsilon
\end{equation}
\end{theorem}

We note that the statement of the second theorem is not trivially implied by the first, as the minimal dimension is smaller than $d_{\mathcal{B}}^2$ but not necessarily smaller than $d_{\mathcal{B}}$.

From these error propagation bounds, sample complexity bounds are immediate. 

\begin{theorem}
 Suppose that for any $0<\epsilon_{HS}\leq 1$ there is a tomography algorithm that, for any $\tilde{\omega}\in \mathcal{S}(m,s,\eta)$, given as input $n(\epsilon_{HS},\delta)$ copies of ${\omega}_{2s+1}$ outputs estimates $\hat{\omega}_{s},\hat{\omega}_{2s}, \hat{\omega}_{2s+1}$ of ${\omega}_{s},{\omega}_{2s}, {\omega}_{2s+1}$ respectively, such that $D_{HS}(\hat{\omega}_{s},{\omega}_{s})$, $D_{HS}(\hat{\omega}_{2s},{\omega}_{2s})$, $D_{HS}(\hat{\omega}_{2s+1},{\omega}_{2s+1})$ are smaller than $\epsilon_{HS}$ with probability larger than $1-\delta$. 
Then, for any $\tilde{\omega}\in \mathcal{S}(m,s,\eta)$ otherwise unknown, $n\left(\frac{\epsilon\eta^3}{20 t m \sqrt{d_{\A}}},\delta\right)$ copies of ${\omega}_{2s+1}$ are sufficient to obtain an estimate $\hat{\omega}_t$ from Algorithm~\ref{alg:learnFCS} which satisfies
\begin{equation}
\frac{\|\hat{\omega}_t-\omega_t\|_1}{2}\leq \epsilon
\end{equation}
with probability larger than $1-\delta$.
With $\mathcal{C}_q(d_{\mathcal{B}},s,\eta)$ instead of $\mathcal{S}(d_{\mathcal{B}},s,\eta)$ above, the same is true with $n\left(\frac{\epsilon\eta^3}{20 t d_{\B} \sqrt{d_{\A}}},\delta\right)$ copies of ${\omega}_{2s+1}$.
\end{theorem}

Proofs of the above theorems are found in~\cref{sec.statereconstruction}.

\begin{remark}
In the following we set $d_{\B}=+\infty$ if a finitely correlated state does not admit a finite dimensional $C^*$-realization. With local (Pauli) measurements, one can set a number of copies of $\omega_{2s+1}$ equal to $n=O\left(\frac{d_{A}^{3(2s+1)}}{\epsilon_{HS}^2}\right)$ and learn $\omega_{2s+1}$, $\omega_{s}$, $\omega_{2s}$ up to error $\epsilon_{HS}$ and constant error probability (see Theorem 1 and Eq.~6 in~\cite{Guţă_2020} which gives an error guarantee in operator norm, which can be converted in a guarantee in Hilbert-Schmidt distance using $\|A\|_{\infty}\leq \sqrt{d}\|A\|_{2}$ for $d\times d$ matrices), obtaining 
\begin{equation}
n= O\left( \frac{t^2 \min(m^2,d_{\B}^{2}) d_{\A}^{6s+4}}{\epsilon^2\sigma_{m}(\Omega)^{6}}\right).
\end{equation}
With global but single-copy measurements, one can set $n= \tilde O\left(\frac{s^3d^2_{\A}m^2}{\epsilon_{HS}^2}\right)$ (see Theorem~1 in \cite{qin2023stable}), so that
\begin{equation}
n=\tilde O \left(\frac{s^3  t^2 m^2 \min(m^2,d_{\B}^{2}) d_{\A}^{3}}{\epsilon^2\sigma_{m}(\Omega)^{6}}\right).
\end{equation}
For simplicity, we gave the above bounds for (arbitrary) constant error probability, see the references for the specific dependence, which is anyway $O\left(\mathrm{polylog}{\frac{1}{\delta}}\right)$.
\end{remark}

We now comment on the extensions of the main result, which we further discuss in the appropriate sections below: 

\begin{itemize}

\item The non-homogeneous case can be addressed with essentially the same proof.
We provide a reconstruction algorithm in our notation, and argue that the error propagation bound has the same form as the translation invariant case. Of course, one needs that all the estimations of the marginals are  $\epsilon_{HS}$ accurate. If each marginal is estimated in an independent run of measurements, this increases the number of copies by a factor $\tilde O(t)$ in the size of the state. See \cref{secnontrans}.

\item We consider the implications for learning one-dimensional Gibbs states in terms of their approximations by matrix product operators. A sufficiently good finitely correlated approximation, reconstructible form marginals of small size, and with sufficiently large small singular falues of $\Omega$,
would imply a polynomial sample complexity for the task of learning the reduced states of a Gibbs state at any positive temperature, similarly to that of 
\cite{anshu2020sample}. However, we were not able to find approximations which satisfy this conditions, and the applicability of this idea is currently unclear,  see \cref{sec.learningclose}. This motivates further study towards improving the learning algorithm to be more robust, better controlling MPDO approximations of thermal states, and understanding which properties allow states to have MPO approximations which are sufficiently close for the learning algorithm to work.

\end{itemize}

\section{Mathematical tools and overview of proofs}

\subsection{Heuristic: error propagation for quantum models}

The main rationale behind the error propagation bound is to generalize a well-known method to bound distances between a pair of states to which repeated quantum operations are applied, using the telescopic trick. We recall the definition of the diamond norm of a map $\Phi:\mathcal{L}(\mathcal{H}_{\A})\rightarrow \mathcal{L}(\mathcal{H}_{\B})$:

\begin{equation}
\|\Phi\|_{\diamond}:=\sup_{n\in \mathbb{N}}\sup_{A\in \mathbb{M}_{n,n}(\mathbb{C})\otimes \mathcal{L}(\mathcal{H}_{\A}),\|A\|_1\leq 1}\|(\id_{\mathbb{M}_{n,n}(\mathbb{C})}\otimes \Phi)(A)\|_1,
\end{equation}

and of the completely bounded norm, for maps $\E:\mathcal{L}(\mathcal{H}_{\B})\rightarrow \mathcal{L}(\mathcal{H}_{\A})$

\begin{equation}
\|\E\|_{cb}:=\sup_{n\in \mathbb{N}}\sup_{A\in \mathbb{M}_{n,n}(\mathbb{C})\otimes \mathcal{L}(\mathcal{H}_{\B}),\|A\|_{\infty}\leq 1}\|(\id_{\mathbb{M}_{n,n}(\mathbb{C})}\otimes \E)(A)\|_{\infty}.
\end{equation}

They are dual to each other, meaning $\|\Phi\|_{\diamond}=\|\Phi^{\dagger}\|_{cb}$. If $\Phi$ is a channel we have $\|\Phi\|_{\diamond}=1$.

A well known result (see, e.g., \cite{Watrous_2018}) is that, for a collection of channels and Hilbert spaces $\{\Phi_{i}:\mathcal{L}(\mathcal{H}_i)\rightarrow \mathcal{L}(\mathcal{H}_{i+1})\}$, $\{\tilde{\Phi}_{i}:\mathcal{L}(\mathcal{H}_i)\rightarrow \mathcal{L}(\mathcal{H}_{i+1})\}$, $i\in[t]$, and defining $\Phi:=\circ_{i=1}^t\Phi_i$, $\tilde{\Phi}:=\circ_{i=1}^t \tilde{\Phi}_{i}$

\begin{equation}\label{eq:datasumproc}
\|\Phi-\tilde{\Phi}\|_{\diamond}\leq \sum_{i=1}^t \|\Phi_i-\tilde{\Phi}_i\|_{\diamond}.
\end{equation}

The marginal of a quantum realization with parameters $\rho,\mathcal{E}$ can be expressed in terms of the channels $\Phi_{i}: \mathcal{L}(\mathcal{H}_{\B}\otimes \mathcal{H}^{\otimes {i-1}}_{\A})\rightarrow \mathcal{L}(\mathcal{H}_{\B}\otimes \mathcal{H}^{\otimes {i}}_{\A})$ as $\Phi_{i}=\mathcal{E}^{\dagger}\otimes \id_{\mathcal{H}^{\otimes {i-1}}_{\A}}$, as $\omega_{t}=\Phi(\rho)$ using the notation introduced above. Let $\hat\omega_{t}$ be another state obtained from a different initial memory state $\hat\rho$ but different channel $\hat{\mathcal{E}}^{\dagger}$, so that $\hat\omega_{t}=\hat\Phi(\hat\rho)$. Then we have

\begin{equation}\label{eq:dataproc}
d_{\Tr}(\omega_{t},\hat\omega_{t})=\frac{\|\Phi(\rho)-\hat\Phi(\hat\rho)\|_{1}}{2}\leq \sum_{i=1}^t \frac{\|\Phi_i-\hat{\Phi}_i\|_{\diamond}}{2}+\frac{\|\rho-\hat{\rho}\|_{1}}{2}=t \frac{\|\mathcal{E}^{\dagger}-\hat{\mathcal{E}}^{\dagger}\|_{\diamond}}{2}+\frac{\|\rho-\hat{\rho}\|_{1}}{2},
\end{equation}
which tells us that the error grows linearly in $t$. Conversely, to obtain $d_{\Tr}(\omega_{t},\hat\omega_{t})\leq \epsilon$ it is sufficient to estimate $\mathcal{E}$ at precision $\epsilon/t$ and $\rho$ at precision $\epsilon$. If this was possible with $O(\poly(\eps^{-1}))$ copies of $\omega_t$, it would imply a $O(\poly(t))$ sample complexity for learning $\omega_t$. However, already under the promise that a quantum realization exists, it is not immediately clear how to estimate $\mathcal{E}$ and $\rho$ efficiently without resorting to global measurements on $\omega_t$. Instead, \textsf{LearnFCS} gives us a chance to learn an estimate of the observable realization, which is not necessarily a quantum realization. This means we cannot use the bound on the diamond norm. The main contribution of this paper is to realize how to get an error propagation bound using a different norm, which analogously to what the diamond norm does for quantum states, encodes distinguishability between maps in a general probabilistic theory that describes the states of the joint system of the memory system and the chain.

\subsection{General probabilistic theories and operator systems}

A general probabilistic theory is defined by a (proper, spanning and closed) cone $C$ in a real vector space $V$, a vector $e\in C$ called \textit{unit}, and a proper cone $C'$ in $V^*$ (the space of linear functionals on $V$), such that $f(v)\geq 0$ for all $v\in C$ and $f\in C'$. This simple mathematical structure is a very simple blueprint of a theory describing systems and measurements: 
\begin{itemize}
\item $C\cap (e-C)$ represent the set of events, 
\item $S=\{f\in C':f(e)=1\}$ is the set of states, 
\item a binary measurement of the pair of events $(v,e-v)$ on the state $f$ is described by the probability distribution $(f(v),f(e-v))=(f(v),1-f(v))$,
\item physical transformations between two GPTs with state spaces $S_1$ and $S_2$ and units $e_1$, $e_2$ are linear maps sending $S_1$ into $S_2$. Their adjoint maps coincides with unital maps $\E$, ($\E(e_1)=e_2$) that send $C_1$ into $C_2$.
\end{itemize}
The order norm of the cone on $V$ is denoted as $\|\cdot\|_e$ and defined as $\|v\|_e=\{\min \lambda\geq 0: \pm v \leq \lambda e\}$. The corresponding dual norm on $V'$ is $\|f\|_{e,*}=\{\max |f(v)|: v\in V, \|v\|_e\leq 1\}$. 

For $d$-dimensional quantum theory the above structure corresponds to familiar objects. $V$ is the set of self-adjoint matrices $\mathbb{M}_{d,d}(\mathbb{C})^{sa}$, and $V^*$ as the set of normal functionals $f_Y[X]=\Tr[XY]$, $Y\in \mathbb{M}_{d,d}(\mathbb{C})^{sa}$, $C$ is the set of positive-semidefinite matrices, $e$ is the identity matrix $\1$, and $C'$ can be identified with the set of positive-semidefinite matrices, and $S$ can be identified with the set of density matrices. Physical maps are completely positive trace-preserving superoperators. The order norm and the dual norm simply correspond to the $\infty$ and $1$-Schatten norms respectively,
$\|\cdot\|_{e}=\|\cdot\|_{\infty}$, $\|\cdot\|_{e,*}=\|\cdot\|_{1}$. 

The operational significance of the distances induced by the order norm translates to the GPT setting:

\begin{itemize}
\item The optimal probability of binary state discrimination of two states $f,g$ is $\frac{1+\frac{\|f-g\|_{e,*}}{2}}{2}$
\item Physical maps sending states of a GPT with units $e$, into states of a GPT with unit $e'$ are contractive: $\|\Phi(f)-\Phi(g)\|_{e',*}\leq \|f-g\|_{e,*}$. Thus, distinguishability is decresead by the action of a physical map.
\end{itemize}

An interesting problem is to characterize the possible ways of combining a pair of GPTs in a single one, describing the joint system~\cite{Aubrun2021}.
In quantum theory, we use a simple recipe to build joint systems: given two quantum systems with Hilbert spaces $\H_{1}$, $\H_{2}$, the joint system is described by the tensor product $\H_{1}\otimes \H_{2}$, and the states are identified with density matrices of $\H_{1}\otimes \H_{2}$, which include entangled states. 

Operator systems (which we review in more detail in~\cref{appendixopsys}), arise when we define a collections of GPTs on the joint system of some GPT and finite-dimensional quantum systems, such that completely positive maps connecting the quantum systems are also physical maps for the joint GPTs. Mathematically, an operator system is a collection of closed cones $C_n\subseteq (\mathbb{M}_{n,n}(\mathbb{C})\otimes V)^{sa}$\footnote{here $V$ is a complex vector space, and the adjoint is given by the complex conjugate}, with units $e_n=\1_{n}\otimes e$, and such that for any $v\in C_k$, $v=\sum_{i=1}^k A_i \otimes v_i$, and any matrix $M\in \mathbb{M}_{k',k}(\mathbb{C})$, the vector $v_M=\sum_{i=1}^k MA_iM^{\dagger}\otimes v_i $ satisfies $v_M\in C_{k'}$. 
$C_1$ corresponds to an isolated GPT. Thanks to the cone structure, we can also define order norms $\|\cdot\|_{e_n}$ and their dual norms on $(V\otimes \mathbb{M}_{n,n}(\mathbb{C}))^*$, and also completely bounded norm for a map $\E:V\rightarrow V'$ between two GPTs, with units $e$ and $e'$:

\begin{equation}
\|\E\|_{e\rightarrow e',cb}:=\sup_{n\in \mathbb{N}}\sup_{A\in \mathbb{M}_{n,n}(\mathbb{C})\otimes V,\|A\|_{e_n}\leq 1}\|(\id_{\mathbb{M}_{n,n}(\mathbb{C})}\otimes \E)(A)\|_{e'_{n}},
\end{equation}

and a diamond norm form maps  $\E:V'^*\rightarrow V^*$

\begin{equation}
\|\Phi\|_{\diamond}:=\sup_{n\in \mathbb{N}}\sup_{A\in \mathbb{M}_{n,n}(\mathbb{C})\otimes V'^*,\|A\|_{e'_n,*}\leq 1}\|(\id_{\mathbb{M}_{n,n}(\mathbb{C})}\otimes \E)(A)\|_{e_{n},*}.
\end{equation}

Both norms satisfy triangle inequalities and are submultiplicative under composition $\|\E_1\circ \E_2\|_{cb}\leq \|\E_1\|_{cb}\|\E_2\|_{cb}$, $\|\Phi_1\circ \Phi_1\|_{\diamond}\leq \|\Phi_1\|_{\diamond}\|\Phi_2\|_{\diamond}$, and for completely positive maps that preserve the set of states $\|\Phi\|_{\diamond}=1$. Their adjoints coincides with unital completely positive maps, which satisfy $\|\E\|_{cb}=1$.
This is enough to obtain an analogue of~\eqref{eq:dataproc} when the memory system is a GPT and when $\Phi_i$ are not quantum channels but completely positive maps in the sense of an operator system constructed on the GPT.

Finally, from a state with regular realization given by $\rho, e, \mathbb{E}$, a natural operator system with units $e_n$ arise from the state itself, such that $\mathbb{E}:\mathcal{L}(\mathcal{H}_{A}) \otimes V\rightarrow V $ is a unital completely positive map. Let $\mathbb{E}_i=\id_{\mathcal{L}(\mathcal{H}_{A})^{\otimes i-1}}\otimes \mathbb{E}$ and $\mathbb{E}^t:=\circ_{i=1}^t \mathbb{E}_i$.
The cones of the operator system are

\begin{equation}
C_n=\overline{\{\id_{\mathbb{M}_{n,n}(\mathbb{C})}\otimes \mathbb{E}^t(v\otimes e): v\in \mathbb{M}_{n,n}(\mathbb{C})\otimes \mathcal{L}(\mathcal{H}_{A}^{\otimes t}), v\geq 0, t\in\mathbb{N}\}}.
\end{equation}

We also emphasize here that may be other choices of operator systems. In fact, if a quantum model exists, a choice of operator system exists for the regular realization, that derives from the usual order on matrices as a quotient operator system, see~\cref{sec.opsysfcs}. This is used for the bound which is valid only for states admitting a quantum realization, see~\cref{sec.statereconstructionc}.

Unitality and complete positivity of $\mathbb{E}_i$ are immediate. The following section shows how these tools allow to prove an error propagation bound for our learning algorithm. 

%\label{eq:datasumproc}

\section{Error propagation for \textbf{LearnFCS}}

The proof of the error propagation bound requires four main ingredients.

\begin{itemize}
\item Since the regular realizations are not unique, it is important to identify one that can be reliably compared with the estimated realization parameters $\hat{e}$,  $\hat{\rho}$, $\hat{\mathbb{K}}$. 
The observable realization is not the ideal choice, essentially because the explicit matrices in the SVD of $\Omega$ may be very sensitive to errors. 
\item An error propagation bound can be obtained on by a {telescoping trick} and a recursive argument, based on {submultiplicativity} of completely bounded (cb) operator norms for maps between operator systems. A propagation bounds for unital completely positive maps as in~\cref{eq:dataproc} is not possible as we cannot guarantee that $\hat{\mathbb{K}}$ is unital and completely positive. In particular, \textbf{LearnFCS} does not take as input a valid operator system for the state.
\item The error propagation bound is expressed in terms of errors measured by the order norms of the operator system, and we need to convert them into usual error measures for the estimation of the density matrix of the marginals, in our case the Hilbert-Schmidt distance.
\end{itemize}

We comment on the solution in the following subsections.

\subsection{Empirical realization}

The idea of~\cite{Hsu2008} is to compare the empirical estimate with an exact {realization which depends on the empirical estimate} itself. This can be done in the quantum case too, where an exact {empirical realization} can be defined, with parameters
$\tilde{e},\tilde{\rho},\tilde{\mathbb{K}}$, defined as
\begin{align}
\tilde{e}&:={\hat{U}}^{\intercal}\Omega v_{\1}\in \mathbb{C}^{m},\label{tildee2intro}\\
\tilde{\rho}&:=w_{\1} \Omega(\hat{U}^{\intercal}\Omega)^+ \in \mathbb{C}^{m},\label{tilderhointro}\\
\tilde{\mathbb{K}}_{A}&:={\hat{U}}^{\intercal}(\Omega_{A})({\hat{U}}^{\intercal}\Omega)^+\in \mathbb{M}_{m,m}(\mathbb{C})\,.\label{tildeKAintro}
\end{align}

This is a valid realization as soon as $\hat{M}=\hat{U}^{\intercal}U$ is invertible, indeed
\begin{align}
\tilde{e}&=\hat{U}^{\intercal}U e,\label{tildeeeintro}\\
\tilde{\rho}&=\rho (\hat{U}^{\intercal}U)^{-1},\label{tilderhoointro}\\
\tilde{\mathbb{K}}_A&=\hat{U}^{\intercal}U\mathbb{K}_{A}(\hat{U}^{\intercal}U)^{-1}\label{tildekintro},
\end{align}

so that $\tilde{e},\tilde{\rho},\tilde{\mathbb{K}}$ and ${e},{\rho},{\mathbb{K}}$ realize the same state (we used that $UU^{\intercal}$ is the orthogonal projector on the image of $\Omega$ and $\Omega_{(\cdot)}$). To ensure that $\hat{U}^{\intercal}U$ is invertible, results from matrix perturbation theory can be used, requiring that the error in the estimation in Hilbert-Schmidt norm of $\omega_{2s}$ is small with respect to the sensitivity parameter $\eta$. In the same way, the truncation step in \textsf{LearnFCS} ensures that $\hat{e},\hat{\rho},\hat{\mathbb{K}}$ and $\tilde{e},\tilde{\rho},\tilde{\mathbb{K}}$ are also a row vector in $\mathbb{C}^m$, a column vector in $\mathbb{C}^m$, and a linear map from $\mathcal{L}(\mathcal{H}_A)$ to $\mathbb{M}_{m,m}(\mathbb{C})$, respectively. See~\cref{appendixpert} for the details.

The following error parameters can be introduced, using the order norms of the operator system:
\begin{align}\label{errorpardef1intro}
\delta_1&:=\|(\hat\rho-\tilde{\rho})\hat{M}\|_{e',*},\\
\delta_{\infty}&:=\|\hat{M}^{-1}(\tilde{e}-\hat{e})\|_{e'},\label{errorpardef2intro}\\
\Delta&:=\|\hat{M}^{-1}(\tilde{\mathbb{K}}-\hat{\mathbb{K}})\hat{M}\|_{\1\otimes e'\rightarrow e',cb}.\label{errorpardef3intro}
\end{align}

In case a quantum realization exists, an analogous expression can be used to consider errors according to the order inherited from the quotient operator system, adapting the definition of $\hat{M}$.

\subsection{Error propagation from completely bounded norms}

The error propagation bound takes the following form
\begin{align}
\|\hat\omega_{t}-\omega_{t}\|_1\leq (1+\delta_1)(1+\delta_{\infty})(1+\Delta)^t-1.
\end{align}

The full argument for the error propagation bounds is detailed in~\cref{subsec:errorprop}. The idea is to use triangle inequality to obtain
\begin{align}
\|\hat\omega_{t}-\omega_{t}\|_1&\leq\underbrace{\|(\hat\rho-\tilde\rho)\tilde{\mathbb{K}}^{t}\tilde{e}\|_1}_{(\operatorname{I})}+\underbrace{\|(\hat\rho-\tilde\rho)(\hat{\mathbb{K}}^{t}\hat{e}-\tilde{\mathbb{K}}^{t}\tilde{e})\|_1}_{(\operatorname{II})}+\underbrace{\|\tilde{\rho}(\tilde{\mathbb{K}}^{t}\tilde{e}-\hat{\mathbb{K}}^{t}\hat{e})\|_1}_{(\operatorname{III})}\label{eqtriangleineqintro}
\end{align}

and bound each term by induction. The crucial step is to bound
$\|\hat{M}^{-1}(\tilde{\mathbb{K}}^{t}-\hat{\mathbb{K}}^{t})\hat{M}\|_{e_t\rightarrow e, cb}$, which can be done as follows

\begin{align}
\|\hat{M}^{-1}(\hat{\mathbb{K}}^{t}-\tilde{\mathbb{K}}^{t})\hat{M}\|_{e_t\rightarrow e, cb}&\leq \|\hat{M}^{-1}\tilde{\mathbb{K}}\hat{M} \hat{M}^{-1}(\hat{\mathbb{K}}^{t-1}-\tilde{\mathbb{K}}^{t-1})\hat{M}\|_{e_t\rightarrow e, cb}\\
&+\|\hat{M}^{-1}(\hat{\mathbb{K}}-\tilde{\mathbb{K}})\hat{M} \hat{M}^{-1}\tilde{\mathbb{K}}^{t-1}\hat{M}\|_{e_t\rightarrow e, cb}\\
&+\|\hat{M}^{-1}(\hat{\mathbb{K}}-\tilde{\mathbb{K}})\hat{M} \hat{M}^{-1}(\hat{\mathbb{K}}^{t-1}-\tilde{\mathbb{K}}^{t-1})\hat{M}\|_{e_t\rightarrow e, cb}
\end{align}
by triangular inequalities, and, by submultiplicativity,

\begin{align}
\|\hat{M}^{-1}(\hat{\mathbb{K}}^{t}-\tilde{\mathbb{K}}^{t})\hat{M}\|_{e_t\rightarrow e, cb}&\leq (\Delta+\|\hat{M}^{-1}\tilde{\mathbb{K}}\hat{M}\|_{e_t\rightarrow e, cb})\|\hat{M}^{-1}(\hat{\mathbb{K}}^{t}-\tilde{\mathbb{K}}^{t})\hat{M}\|_{e_t\rightarrow e, cb}\\&+\Delta\|\hat{M}^{-1}\tilde{\mathbb{K}}\hat{M}\|_{e_1\rightarrow e, cb}^t. 
\end{align}

By induction, we have

\begin{align}
\|\hat{M}^{-1}(\hat{\mathbb{K}}^{t}-\tilde{\mathbb{K}}^{t})\hat{M}\|_{e_t\rightarrow e, cb}&\leq (\Delta+\|\hat{M}^{-1}\tilde{\mathbb{K}}\hat{M}\|_{e_1\rightarrow e, cb})^t-\|\hat{M}^{-1}\tilde{\mathbb{K}}\hat{M}\|_{e_1\rightarrow e, cb}^t. 
\end{align}

This expression grows exponentially in $t$, in general. However, in our case we can use that $\hat{M}^{-1}\tilde{\mathbb{K}}^{t}\hat{M}=\tilde{\mathbb{K}}^{t}$ is unital completely positive, therefore $\|\hat{M}^{-1}\tilde{\mathbb{K}}\hat{M}\|_{e_1\rightarrow e, cb}=1$ and the bound is $O(\epsilon)$ if $\Delta=O(t/\epsilon)$. If a quantum realization exists, this argument is also valid for the quotient operator system.

\subsection{Connection with tomography error}

What is left is to express $\delta_1,\delta_{\infty},\Delta$ in terms of $d_{HS}(\omega_s,\hat\omega_s)$, $d_{HS}(\omega_{2s},\hat\omega_{2s})$, $d_{HS}(\omega_{2s+1},\hat\omega_{2s+1})$, which can be done as follows.

First of all, the canonical Euclidean norm $\|\cdot\|$ on $\mathbb{C}^{m}$ let us to measure the errors also using 
$\|\hat\rho-\tilde{\rho}\|_2$, $\|\tilde{e}-\hat{e}\|_{2}$, $\|\hat{M}\|_{2\rightarrow 2}$, $\|\tilde{\mathbb{K}}-\hat{\mathbb{K}}\|_{2\rightarrow 2}$, where $\|A\|_{2\rightarrow 2}=\max_{\|v\|_2\leq 1}\|Xv\|_2$. Since $\tilde{\rho},\tilde{e},\tilde{K}$, $\hat{\rho},\hat{e},\hat{K}$, are obtained from the marginals and their estimates with linear algebra, it is possible to connect these error measures to the tomography errors in Hilbert-Schmidt distance using standard matrix perturbation theory, see~\cref{appendixpert}.

The completely bounded norm in the definition of $\Delta$ can be bounded with the norm without amplification, since the {cb norm} of a finite rank map %can be
{is} {bounded by its norm times the rank}. Alternatively, for states with quantum models we %can
use that the cb norm of a linear map to $d_{\mathcal{B}}\times d_{\mathcal{B}}$ matrices %can be computed as 
{equals} the norm of the $d_{\mathcal{B}}$-th amplification. %, where $n$ is here $d_{\mathcal{B}}$. 
These changes give rise to the alternative bound.

The final piece is to relate the Euclidean norm to the order norm. To express the error parameters in terms of the Hilbert-Schmidt error of the marginal estimates we introduce a suitable {{weighted} HS norm} $\|\cdot\|_\Omega$  constructed from the map $\Omega$ and relate it to the {order norm} $\|\cdot\|_e$. We %can 
obtain {the bounds}
$\sigma_{m}(\Omega)\|\cdot\|_{e_n}\leq \|\cdot\|_{n,\Omega} \leq \sqrt{n}\|\cdot\|_{e_n}$ for the {$n$-th amplification norms}, whose proof %can 
{might} be of independent interest, as it applies to general bipartite states. This passage is crucial but technical, and the details can be found in Appendix~\ref{sec.opsysfcs}. To get the final result we also use inequalities between Schatten norms %to simplify the remaining expressions and bound $\|{\mathbb{K}}-\hat{\mathbb{K}}\|_{2\rightarrow 2}$
and an adaptation of the matrix perturbation analysis of~\cite{Hsu2008,Siddiqi2009,balle2013learning}, with key lemmas collected in Appendix~\ref{appendixpert}.

\section{Conclusions}

We established rigorous bounds for the problem of learning a realization of finitely correlated state from data, showing that under certain conditions there is an efficient solution. The performance of the learning algorithm depends critically on the parameter $\eta$, and we leave as an open question to understand how to interpret this parameter in terms of operationally motivated physical quantities.
Another important question, that we did not directly address, is how to learn and certify a good quantum model for the unknown state. This is important if we wish to prepare the unknown state with sequential operations. Under the promise that a good quantum model exists with some bounded realization dimension, in the translation invariant case, a possible approach would be to first learn an estimate of the realization with our algorithm, and then optimize over quantum models such that their marginals are close to the estimated realization. If the size of these marginals is taken large enough, such that they completely determine the state, one can also control the error on larger marginals. For the non-translation invariant case, ensuring that the marginals completely determine the state is a condition which is more difficult to implement.
Finally, the techniques we introduce could be applied to other graphical models with 1D structure, and to other setting such as online learning. It is still an important open question to understand if efficient learning is possible for tensor networks on higher dimensional lattices.

\section*{Acknowledgements}
The authors thank Ivan Todorov for discussions on the operator systems associated to finitely correlated states,  Matteo Rosati and Farzin Salek for early conversations on the reconstruction of quasi-realizations of stochastic processes, Samuel Scalet for discussions on learning Gibbs states, Marco Tomamichel for comments on the manuscript. MF thanks Steven Flammia, Yunchao Liu, Angus Lowe and Antonio Anna Mele for discussions on learning matrix product states, and Roy Araiza for discussions on operator systems.

JL is supported by the National Research Foundation, Prime Minister’s Office, Singapore and the Ministry of Education, Singapore under the Research Centres of Excellence programme and the Quantum Engineering Programme grant NRF2021-QEP2-02-P05.
C.R. has been supported by ANR project QTraj (ANR-20-CE40-0024-01) of the French
National Research Agency (ANR).
MF and AW were supported by the Baidu Co.~Ltd.~collaborative project ``Learning of Quantum Hidden Markov Models''. MF, NG and AW are supported by the Spanish MCIN (project PID2022-141283NB-I00) with the support of FEDER funds, by the Spanish MCIN with funding from European Union NextGenerationEU (grant PRTR-C17.I1) and the Generalitat de Catalunya, as well as the Ministry of Economic Affairs and Digital Transformation of the Spanish Government through the QUANTUM ENIA ``Quantum Spain'' project with funds from the European Union through the Recovery, Transformation and Resilience Plan - NextGenerationEU within the framework of the "Digital Spain 2026 Agenda". MF is also supported by a Juan de la Cierva Formaci\'on fellowship (Spanish MCIN project FJC2021-047404-I), with funding from MCIN/AEI/10.13039/501100011033 and European Union NextGenerationEU/PRTR. AW acknowledges furthermore support by the European Commission QuantERA grant ExTRaQT (Spanish MCIN project PCI2022-132965), by the Alexander von Humboldt Foundation, and by the Institute for Advanced Study of the Technical University Munich.

\bibliography{learning_hidden_markov.bib, biblio.bib} 

\begin{thebibliography}{10}

\bibitem{ODonnell2016}
Ryan O'Donnell and John Wright.
\newblock Efficient quantum tomography.
\newblock In {\em Proceedings of the Forty-Eighth Annual ACM Symposium on
  Theory of Computing}, STOC '16, page 899–912, New York, NY, USA, 2016.
  Association for Computing Machinery.

\bibitem{Haah2017}
Jeongwan Haah, Aram~W. Harrow, Zhengfeng Ji, Xiaodi Wu, and Nengkun Yu.
\newblock {Sample-optimal tomography of quantum states}.
\newblock {\em IEEE Transactions on Information Theory}, 63(9):1--1, Sep 2017.

\bibitem{zhao2023learning}
Haimeng Zhao, Laura Lewis, Ishaan Kannan, Yihui Quek, Hsin-Yuan Huang, and
  Matthias~C. Caro.
\newblock Learning quantum states and unitaries of bounded gate complexity.
\newblock {\em PRX Quantum}, 5:040306, Oct 2024.

\bibitem{bakshi2023learning}
Ainesh Bakshi, Allen Liu, Ankur Moitra, and Ewin Tang.
\newblock Learning quantum hamiltonians at any temperature in polynomial time.
\newblock In {\em Proceedings of the 56th Annual ACM Symposium on Theory of
  Computing}, STOC 2024, page 1470–1477, New York, NY, USA, 2024. Association
  for Computing Machinery.

\bibitem{aaronson2018shadow}
Scott Aaronson.
\newblock Shadow tomography of quantum states.
\newblock In {\em Proceedings of the 50th annual ACM SIGACT symposium on theory
  of computing}, pages 325--338, 2018.

\bibitem{buadescu2021improved}
Costin B{\u{a}}descu and Ryan O'Donnell.
\newblock Improved quantum data analysis.
\newblock In {\em Proceedings of the 53rd Annual ACM SIGACT Symposium on Theory
  of Computing}, pages 1398--1411, 2021.

\bibitem{huang2020predicting}
Hsin-Yuan Huang, Richard Kueng, and John Preskill.
\newblock Predicting many properties of a quantum system from very few
  measurements.
\newblock {\em Nature Physics}, 16(10):1050--1057, 2020.

\bibitem{CiracMPDOGibbs}
Frank Verstraete, Juan~J. Garc\'{\i}a-Ripoll, and J.~Ignacio Cirac.
\newblock Matrix product density operators: Simulation of finite-temperature
  and dissipative systems.
\newblock {\em Physical Review Letters}, 93:207204, Nov 2004.

\bibitem{hastings2006solving}
Matthew~B. Hastings.
\newblock Solving gapped {H}amiltonians locally.
\newblock {\em Physical Review B}, 73(8):085115, 2006.

\bibitem{Alhambra2021}
Tomotaka Kuwahara, \'Alvaro~M. Alhambra, and Anurag Anshu.
\newblock Improved thermal area law and quasilinear time algorithm for quantum
  gibbs states.
\newblock {\em Physical Review X}, 11:011047, Mar 2021.

\bibitem{Banuls2023}
Mari~Carmen Ba\~{n}uls.
\newblock Tensor network algorithms: A route map.
\newblock {\em Annual Review of Condensed Matter Physics}, 14(1):173--191,
  2023.

\bibitem{Baumgratz_2013}
T~Baumgratz, A~Nüßeler, M~Cramer, and M~B Plenio.
\newblock A scalable maximum likelihood method for quantum state tomography.
\newblock {\em New Journal of Physics}, 15(12):125004, dec 2013.

\bibitem{anshu2023survey}
Anurag Anshu and Srinivasan Arunachalam.
\newblock A survey on the complexity of learning quantum states.
\newblock {\em Nature Reviews Physics}, 6(1):59--69, 2024.

\bibitem{Fannes1992}
Mark Fannes, Bruno Nachtergaele, and Reinhard~F. Werner.
\newblock Finitely correlated states on quantum spin chains.
\newblock {\em Communications in Mathematical Physics}, 144:443--490, Mar 1992.

\bibitem{Vidyasagar2014}
Mathukumalli Vidyasagar.
\newblock {\em Hidden Markov Processes: Theory and Applications to Biology}.
\newblock Princeton University Press, 2014.

\bibitem{Rabiner1989}
Lawrence~R. Rabiner.
\newblock {A Tutorial on Hidden Markov Models and Selected Applications in
  Speech Recognition}.
\newblock {\em Proceedings of the IEEE}, 77(2):257--286, 1989.

\bibitem{fanizza2023quantum}
Marco Fanizza, Josep Lumbreras, and Andreas Winter.
\newblock Quantum theory in finite dimension cannot explain every general
  process with finite memory.
\newblock {\em Communications in Mathematical Physics}, 405(2):50, Feb 2024.

\bibitem{Cuevas2013}
Gemma~De las Cuevas, Norbert Schuch, David Pérez-García, and J~Ignacio Cirac.
\newblock Purifications of multipartite states: limitations and constructive
  methods.
\newblock {\em New Journal of Physics}, 15:123021, 12 2013.

\bibitem{Cuevas2016}
G.~De las Cuevas, T.~S. Cubitt, J.~I. Cirac, M.~M. Wolf, and D.~Pérez-García.
\newblock Fundamental limitations in the purifications of tensor networks.
\newblock {\em Journal of Mathematical Physics}, 57:071902, 7 2016.

\bibitem{Hsu2008}
Daniel Hsu, Sham~M. Kakade, and Tong Zhang.
\newblock A spectral algorithm for learning hidden markov models.
\newblock {\em Journal of Computer and System Sciences}, 78:1460--1480, Nov
  2008.

\bibitem{PhysRevLett.111.020401}
Tillmann Baumgratz, David Gross, Marcus Cramer, and Martin~B. Plenio.
\newblock Scalable reconstruction of density matrices.
\newblock {\em Physical Review Letters}, 111:020401, Jul 2013.

\bibitem{cramer2010}
Marcus Cramer, Martin~B. Plenio, Steven~T. Flammia, Rolando Somma, David Gross,
  Stephen~D. Bartlett, Olivier Landon-Cardinal, David Poulin, and Yi-Kai Liu.
\newblock Efficient quantum state tomography.
\newblock {\em Nature Communications}, 1(1):149, 2010.

\bibitem{lanyon2017}
B.~P. Lanyon, C.~Maier, M.~Holz{\"a}pfel, T.~Baumgratz, C.~Hempel, P.~Jurcevic,
  I.~Dhand, A.~S. Buyskikh, A.~J. Daley, M.~Cramer, M.~B. Plenio, R.~Blatt, and
  C.~F. Roos.
\newblock Efficient tomography of a quantum many-body system.
\newblock {\em Nature Physics}, 13(12):1158--1162, 2017.

\bibitem{paulsen_completely_2003}
Vern~I. Paulsen.
\newblock Completely bounded maps and operator algebras.
\newblock In {\em Completely bounded maps and operator algebras}. Cambridge
  University Press, Feb 2003.

\bibitem{Anandkumar2012}
Anima Anandkumar, Rong Ge, Daniel Hsu, Sham~M. Kakade, and Matus Telgarsky.
\newblock Tensor decompositions for learning latent variable models.
\newblock {\em Journal of Machine Learning Research}, 15:2773--2832, Oct 2012.

\bibitem{MosselRoch}
Elchanan Mossel and S\'{e}bastien Roch.
\newblock Learning nonsingular phylogenies and hidden markov models.
\newblock In {\em Proceedings of the Thirty-Seventh Annual ACM Symposium on
  Theory of Computing}, STOC '05, page 366–375, New York, NY, USA, 2005.
  Association for Computing Machinery.

\bibitem{Siddiqi2009}
Sajid~M. Siddiqi, Byron Boots, and Geoffrey~J. Gordon.
\newblock Reduced-rank hidden markov models.
\newblock {\em Journal of Machine Learning Research}, 9:741--748, Oct 2009.

\bibitem{balle2013learning}
Borja~de Balle~Pigem.
\newblock {\em Learning finite-state machines: statistical and algorithmic
  aspects}.
\newblock PhD thesis, Universitat Polit{\`e}cnica de Catalunya, Physics
  Department, 2013.

\bibitem{Balle2014}
Borja Balle, Xavier Carreras, Franco~M. Luque, and Ariadna Quattoni.
\newblock Spectral learning of weighted automata.
\newblock {\em Machine Learning}, 96(1):33--63, 2014.

\bibitem{Monras2010}
Alex Monr{\`a}s, Almut Beige, and Karoline Wiesner.
\newblock {Hidden Quantum Markov Models and non-adaptive read-out of many-body
  states}.
\newblock {\em Applied Mathematical and Computational Sciences}, 3(1):93--122,
  2011.

\bibitem{monras2016}
Alex Monr{\`a}s and Andreas Winter.
\newblock Quantum learning of classical stochastic processes: {The} completely
  positive realization problem.
\newblock {\em Journal of Mathematical Physics}, 57(1):015219, 2016.

\bibitem{pmlr-v130-adhikary21a}
Sandesh Adhikary, Siddarth Srinivasan, Jacob Miller, Guillaume Rabusseau, and
  Byron Boots.
\newblock {Quantum Tensor Networks, Stochastic Processes, and Weighted
  Automata}.
\newblock In Arindam Banerjee and Kenji Fukumizu, editors, {\em Proceedings of
  The 24th International Conference on Artificial Intelligence and Statistics},
  volume 130 of {\em Proceedings of Machine Learning Research}, pages
  2080--2088. PMLR, 13--15 Apr 2021.

\bibitem{juba2012}
Brendan Juba.
\newblock On learning finite-state quantum sources.
\newblock {\em Quantum Info. Comput.}, 12(1–2):105–118, jan 2012.

\bibitem{Holzapfel18}
Milan Holz\"apfel, Marcus Cramer, Nilanjana Datta, and Martin~B. Plenio.
\newblock {Petz recovery versus matrix reconstruction}.
\newblock {\em Journal of Mathematical Physics}, 59(4):042201, Apr 2018.

\bibitem{qin2023stable}
Zhen Qin, Casey Jameson, Zhexuan Gong, Michael~B. Wakin, and Zhihui Zhu.
\newblock Quantum state tomography for matrix product density operators.
\newblock {\em IEEE Transactions on Information Theory}, 70(7):5030--5056,
  2024.

\bibitem{gondolf2024conditional}
Paul Gondolf, Samuel~O. Scalet, Alberto~Ruiz de~Alarcon, Alvaro~M. Alhambra,
  and Angela Capel.
\newblock Conditional independence of 1d gibbs states with applications to
  efficient learning, 2024.
\newblock arXiv preprint: 2402.18500.

\bibitem{jia2024generic}
Yifan Jia and Angela Capel.
\newblock A generic quantum {W}ielandt's inequality.
\newblock {\em {Quantum}}, 8:1331, May 2024.

\bibitem{KLEP201656}
Igor Klep and Špela Špenko.
\newblock Sweeping words and the length of a generic vector subspace of mn(f).
\newblock {\em Journal of Combinatorial Theory, Series A}, 143:56--65, 2016.

\bibitem{affleck1988valence}
Ian Affleck, Tom Kennedy, Elliott~H. Lieb, and Hal Tasaki.
\newblock Valence bond ground states in isotropic quantum antiferromagnets.
\newblock {\em Communications in Mathematical Physics}, 115(3):477--528, 1988.

\bibitem{Guţă_2020}
M~Guţă, J~Kahn, R~Kueng, and J~A Tropp.
\newblock Fast state tomography with optimal error bounds.
\newblock {\em Journal of Physics A: Mathematical and Theoretical},
  53(20):204001, apr 2020.

\bibitem{anshu2020sample}
Anurag Anshu, Srinivasan Arunachalam, Tomotaka Kuwahara, and Mehdi
  Soleimanifar.
\newblock Sample-efficient learning of quantum many-body systems.
\newblock In {\em 2020 IEEE 61st Annual Symposium on Foundations of Computer
  Science (FOCS)}, pages 685--691. IEEE, 2020.

\bibitem{Watrous_2018}
John Watrous.
\newblock {\em The Theory of Quantum Information}.
\newblock Cambridge University Press, 2018.

\bibitem{Aubrun2021}
Guillaume Aubrun, Ludovico Lami, Carlos Palazuelos, and Martin Pl{\'a}vala.
\newblock Entangleability of cones.
\newblock {\em Geometric and Functional Analysis}, 31(2):181--205, 2021.

\bibitem{pisierIntroductionOperatorSpace2003}
Gilles Pisier.
\newblock {\em Introduction to {Operator} {Space} {Theory}}.
\newblock London {Mathematical} {Society} {Lecture} {Note} {Series}. Cambridge
  University Press, Cambridge, 2003.

\bibitem{KavrukEtAl2013}
Ali~S. Kavruk, Vern~I. Paulsen, Ivan~G. Todorov, and Mark Tomforde.
\newblock Quotients, exactness, and nuclearity in the operator system category.
\newblock {\em Advances in Mathematics}, 235:321--360, Mar 2013.

\bibitem{FarenickPaulsen2012}
Douglas Farenick and Vern~I. Paulsen.
\newblock Operator {System} {Quotients} of {Matrix} {Algebras} and {Their}
  {Tensor} {Products}.
\newblock {\em Mathematica Scandinavica}, 111(2):210--243, 2012.

\bibitem{notesIvan}
Ivan Todorov.
\newblock Unpublished notes and private communication ({M}arch 2023).

\bibitem{LANCE1973157}
Christopher Lance.
\newblock On nuclear {$C^*$}-algebras.
\newblock {\em Journal of Functional Analysis}, 12(Feb):157--176, 1973.

\bibitem{vanLuijkEtAl2023}
Lauritz van Luijk, Ren{\'e} Schwonnek, Alexander Stottmeister, and Reinhard~F.
  Werner.
\newblock The schmidt rank for the commuting operator framework.
\newblock {\em Communications in Mathematical Physics}, 405(7):152, 2024.

\bibitem{anshu2021sample}
Anurag Anshu, Srinivasan Arunachalam, Tomotaka Kuwahara, and Mehdi
  Soleimanifar.
\newblock Sample-efficient learning of interacting quantum systems.
\newblock {\em Nature Physics}, 17(8):931--935, 2021.

\bibitem{anshuweb}
Anurag Anshu.
\newblock Efficient learning of commuting {H}amiltonians on lattices.
\newblock {\em
  \href{https://anuraganshu.seas.harvard.edu/files/anshu/files/learning_commuting_hamiltonian.pdf}{Electronic
  notes}}.

\bibitem{haah2022optimal}
Jeongwan Haah, Robin Kothari, and Ewin Tang.
\newblock Optimal learning of quantum hamiltonians from high-temperature gibbs
  states.
\newblock In {\em 2022 IEEE 63rd Annual Symposium on Foundations of Computer
  Science (FOCS)}, pages 135--146. IEEE, 2022.

\bibitem{de2021quantum}
Giacomo De~Palma, Milad Marvian, Dario Trevisan, and Seth Lloyd.
\newblock The quantum {W}asserstein distance of order 1.
\newblock {\em IEEE Transactions on Information Theory}, 67(10):6627--6643,
  2021.

\bibitem{rouze2021learning}
Cambyse Rouz{\'{e}} and Daniel Stilck~Fran{\c{c}}a.
\newblock Learning quantum many-body systems from a few copies.
\newblock {\em {Quantum}}, 8:1319, April 2024.

\bibitem{onorati2023efficient}
Cambyse Rouz{\'e}, Daniel Stilck~Fran{\c c}a, Emilio Onorati, and James~D.
  Watson.
\newblock Efficient learning of ground and thermal states within phases of
  matter.
\newblock {\em Nature Communications}, 15(1):7755, 2024.

\bibitem{devroye2020minimax}
Luc Devroye, Abbas Mehrabian, and Tommy Reddad.
\newblock The minimax learning rates of normal and ising undirected graphical
  models.
\newblock {\em Electronica Journal of Statistics}, 14 (1):2338 -- 2361, 2020.

\bibitem{stewart1990matrix}
Gilbert~W. Stewart and Ji-guang Sun.
\newblock {\em Matrix Perturbation Theory}.
\newblock Academic Press, 1990.

\bibitem{li1999lidskii}
Chi-Kwong Li and Roy Mathias.
\newblock The {L}idskii-{M}irsky-{W}ielandt theorem --- additive and
  multiplicative versions.
\newblock {\em Numerische Mathematik}, 81:377--413, 1999.

\end{thebibliography}
\bibliographystyle{unsrt}

\begin{appendix}
 \section*{Appendix}

%The following appendices develop the claims of the main text. 

%\tableofcontents

The following appendices are structured as follows.

\begin{itemize}
\item In \cref{secprel} we set up the notation and present the objects that we use in the analysis, such as several particular realizations of a finitely correlated state and how they are connected to each other. 
\item In~\cref{sec:opsysqr} we discuss the operator systems associated with realizations. \item In~\cref{sec.statereconstruction} we show the main result for general states, and in~\cref{sec.statereconstructionc} we treat the special case of $C^*$-finitely correlated states.
\item In~\cref{secnontrans} we extend the arguments to the non-translation invariant case, for states on a finite chain. 
\item In~\cref{sec.learningclose} we consider how the algorithm can be used to learn states which are only close to finitely correlated, including Gibbs states. \item In~\cref{sec.numer} we report some numerical experiments.
\item In~\cref{appendixpert} we collect useful facts on matrix perturbation theory.
\end{itemize}

\section{Preliminaries}\label{secprel}

\subsection{Notation}
In the following we think of states as non-negative linear functionals from a $C^*$-algebra to $\mathbb{C}$. For finite dimensional systems, we can simply think of a density matrix $\rho\in\mathbb{M}_{n}$ as a functional on $\mathbb{M}_{n}$, $\rho(X)=\Tr[\rho X]$. Let us fix the notations for states on an infinite chain following~\cite{Fannes1992}.
We consider states on an infinite chain with sites labelled by $n\in\mathbb{Z}$, where to each site $n$ a finite dimensional $C^*$-algebra $\mathcal A_n \equiv \A$ is associated. Without loss of generality, $\mathcal A$ can be embedded into the algebra of $d_{\A}\times d_{\A}$ matrices $\mathbb{M}_{d_{\A}}$. For each finite subset $X$ of $\mathbb{Z}$ we can define an algebra $\mathcal{A}_X:=\bigotimes_{n\in X}\mathcal A_{n}$. The algebra of infinite subsets is defined through the inductive limit, with the identifications $\mathcal{A}_{X_2} \equiv \mathcal{A}_{X_1}\otimes \1_{X_2\setminus X_1}$ for $X_1\subseteq X_2$ finite subsets. The chain algebra $\mathcal{A}_{\mathbb{Z}}$ is defined in this way, for example, as the inductive limit for sets $[-n,n]:=\{-n,\dots,n\}$. The right chain algebra $\mathcal{A}_{R}$ is defined as the inductive limit of $\mathcal{A}_{[1,n]}$, and the left chain algebra $\mathcal{A}_{L}$ is defined as the inductive limit of $\mathcal{A}_{[-n,0]}$. The translation operators $\alpha_r$ act on the algebra by sending $\mathcal{A}_{X}$ into $\mathcal{A}_{X+r}$. We denote the set of translation invariant states by $\mathcal{T}$. 

For linear maps $L:V_1\rightarrow V_2$ and $M:V_2 \rightarrow V_3$ we write the composition of $M$ and $L$ simply as $ML$. The identity map on a vector space $V$ is denoted as $\id_{V}$, the identity element of a $C^*$-algebra $\mathcal{A}$ is denoted as $\1_{\mathcal{A}}$. Given a linear map $\mathbb{E}:\mathcal{A}\ni A \mapsto \mathbb{E}_A\in \mathrm{Hom}(V)$, we use the notation $\mathbb{E}^t: \mathcal{A}^{\otimes t}\ni A \mapsto \mathbb{E}_A\in \mathrm{Hom}(V)$ for the linear map that acts on tensor product vectors as $\mathbb{E}^t_{X_1\otimes\cdots\otimes X_t}=\mathbb{E}_{X_1}\cdots\mathbb{E}_{X_t}$. Similarly, given a linear map $\E:\mathcal{A}\otimes V \rightarrow V$, we use the notation $\E^t: \mathcal{A}^{\otimes t}\otimes V \rightarrow V$ for the linear map that acts on product vectors as $\mathbb{E}^t(X_1\otimes\cdots\otimes X_t\otimes x)=\mathbb{E}_{X_1}\cdots\mathbb{E}_{X_t}(x)$.

Additionally, we use the following notation for linear maps. We denote by $\|T\|_p=(\Tr[(T^\dagger T)^{{p}/{2}}])^{1/p}$ the Schatten $p$-norm for $T:H_1\rightarrow H_2$ being a linear map between finite dimensional Hilbert spaces $H_1,\,H_2$, where $T^\dagger$ is the adjoint map (we also use the notation $T^{\intercal}$ if we deal with real maps). The singular values of $T$ are denoted and ordered as $\sigma_1(T)\ge \sigma_2(T)\ge \dots\ge \sigma_r(T)$, so that $\|T\|_p=\left(\sum_{i=1}^{r}\sigma_{i}(T)^p\right)^{1/p}$. We denote by $T^{+}$ the Moore-Penrose pseudoinverse of $T$.  For a map $L: B_1\rightarrow B_2$, where $B_1$ and $B_2$ are spaces of finite dimensional linear maps equipped with Schatten norms, we have $\|L\|_{p\rightarrow p}:=\sup_{X\in B_1, \|X\|_p\leq 1} \|L(X)\|_p$, and $\|L\|_{p\rightarrow p, cb}:=\sup_{d\in \mathbb{N}}\sup_{X\in \mathbb{M}_{d}\otimes B_1,\|X\|_p\leq 1} \|(\id_{\mathbb{M}_d}\otimes L)(X)\|_p$. 
By slight abuse of notations, for $p\in[1,\infty]$, we will also denote by $\|\cdot\|_p$ the usual $\ell_p$ norm on $\mathbb{R}^m$, as well as the $p$-norm of linear functionals: given a functional $\varphi$ on a (matrix) space $\mathcal{C}$, $\|\varphi\|_p:=\sup_{\|X\|_{p'}\le 1}|\varphi(X)|$, where $1/p'+1/p=1$. For $p=2$, this is also equal to $\|\varphi\|_2=(\sum_i|\varphi(X_i)|^2)^{1/2}$ for some orthonormal basis $\{X_i\}$ of $\mathcal{C}$.

\subsection{Finitely correlated states}

A subclass of translation invariant states are \textit{finitely correlated states}, which admit the following description.

\begin{definition}
A \emp{(linear) realization} of a translation invariant state $\omega$ is
a quadruple $(V,e,\mathbb{E},\rho)$, where $V$ is a finite-dimensional vector space serving as a memory, $\mathbb{E}$ a linear map $\mathbb{E}:A\in \mathcal{A} \rightarrow \mathbb{E}_A\in \mathrm{Hom}(V)$, $e$ an element of $ V$, and $\rho$ a linear functional in $ V^*$, such that 
\begin{equation}
\omega(A_1\otimes \cdots\otimes A_n)=\rho\mathbb{E}_{A_1} \cdots \mathbb{E}_{A_n}(e),
\end{equation}
and
\begin{equation}
\rho \mathbb{E}_{\1}=\rho \qquad  \mathbb{E}_{\1}(e)=e.
\end{equation}
A translation invariant state $\omega$ admitting a realization is called a \textbf{finitely correlated state}.
\end{definition}

The following characterization holds, from Proposition 2.1 of~\cite{Fannes1992}.

\begin{proposition}\label{prop} Let $\mathcal A$ be a $C^*$-algebra with unit, and let $\omega$ be a translation invariant state on the chain algebra $\mathcal{A}_{\mathbb{Z}}$. Then the following are equivalent:
\begin{itemize}
\item The set of functionals $\Psi_X:\mathcal{A}_L\rightarrow\mathbb{C}$ of the form
\begin{equation}
\Psi_X(A_{-n}\otimes\cdots\otimes A_0)=\omega(A_{-n}\otimes \cdots\otimes A_0\otimes X)
\end{equation}
with $X\in\mathcal{A}_R$ and $n\in\mathbb N$ generates a finite-dimensional linear subspace in the dual of $\mathcal{A}_L$, of dimension $m$, say.
\item There exists a realization $(V,e,\mathbb{E},\rho)$ of $\omega$.
\end{itemize}
Moreover, the minimal dimension of $V$ is $m$. 
\end{proposition}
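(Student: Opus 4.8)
The plan is to establish the two implications separately and then read off the minimality claim from the dimension estimates produced along the way. It will be convenient to introduce the kernel $\mathcal{N} = \{Y \in \A_R : \omega(A_l \otimes Y) = 0 \text{ for all } A_l \in \A_L\}$ of the linear map $X \mapsto \Psi_X$, so that the span of $\{\Psi_X\}_{X \in \A_R}$ is isomorphic to $\A_R / \mathcal{N}$ and its dimension is the rank of this map.

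First I would treat the direction from a realization to finite rank. Given a realization $(V, e, \mathbb{E}, \rho)$, the idea is to factor each $\Psi_X$ through $V$: for $X = B_1 \otimes \cdots \otimes B_k$ set $v_X := \mathbb{E}_{B_1} \cdots \mathbb{E}_{B_k}(e) \in V$ and extend linearly, so that the realization identity gives
\begin{equation}
\Psi_X(A_{-n} \otimes \cdots \otimes A_0) = \rho\, \mathbb{E}_{A_{-n}} \cdots \mathbb{E}_{A_0}(v_X).
\end{equation}
The right-hand side depends on $X$ only through $v_X$, so all $\Psi_X$ lie in the image of a single fixed linear map $V \to \A_L^*$; the span is therefore finite-dimensional of dimension at most $\dim V$. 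This already yields the lower bound $m \leq \dim V$, valid for every realization.

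For the converse I would build the realization directly from the functionals, taking $V := \operatorname{span}\{\Psi_X\}$, $e := \Psi_{\1}$, and $\rho(\psi) := \psi(\1)$ (evaluation on the empty left word). The maps $\mathbb{E}_A$ for $A \in \A$ would be defined by appending $A$ at site $1$ and shifting $X$ to sites $\geq 2$, i.e.\ $\mathbb{E}_A \Psi_X := \Psi_{A \otimes \alpha_1(X)}$. I expect the one genuine obstacle to be the well-definedness of $\mathbb{E}_A$: one must verify that $Y \in \mathcal{N}$ forces $A \otimes \alpha_1(Y) \in \mathcal{N}$, so that the prescription descends to the quotient $V \cong \A_R/\mathcal{N}$. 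This is precisely where translation invariance is used, since for $Y \in \mathcal{N}$ and arbitrary $A_l \in \A_L$, $A \in \A$,
\begin{equation}
\omega(A_l \otimes A \otimes \alpha_1(Y)) = \omega\big((\alpha_{-1}(A_l) \otimes \alpha_{-1}(A)) \otimes Y\big) = 0,
\end{equation}
the last equality holding because $\alpha_{-1}(A_l) \otimes \alpha_{-1}(A) \in \A_L$ and $Y \in \mathcal{N}$.

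Once $\mathbb{E}_A$ is well defined, the remaining verifications are routine telescoping computations that I would carry out only schematically: iterating the defining relation gives $\mathbb{E}_{A_1} \cdots \mathbb{E}_{A_n}(e) = \Psi_{A_1 \otimes \cdots \otimes A_n}$, whence $\rho\, \mathbb{E}_{A_1} \cdots \mathbb{E}_{A_n}(e) = \omega(A_1 \otimes \cdots \otimes A_n)$, while the fixed-point conditions $\mathbb{E}_{\1}(e) = e$ and $\rho \mathbb{E}_{\1} = \rho$ follow from $\alpha_1(\1) = \1$ and the translation invariance of $\omega$, respectively. This produces a realization with $\dim V = m$, so the minimal dimension is at most $m$; combined with the lower bound from the first part, the minimal realization dimension equals $m$, completing the proof.
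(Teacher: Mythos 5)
Your proof is correct, and it follows essentially the same route as the source the paper relies on: the paper itself gives no proof of this proposition, deferring to Proposition 2.1 of [Fannes1992], whose argument is exactly your two-step scheme (factoring $\Psi_X$ through $V$ via $v_X=\mathbb{E}^k_X(e)$ for the upper bound, and building a realization on $\operatorname{span}\{\Psi_X\}$ by left concatenation for the converse). Your well-definedness check for $\mathbb{E}_A$ via translation invariance is the one genuinely non-routine point and is handled correctly; the realization you construct in the second direction is precisely the ``observable regular realization'' the paper later develops in Section 2.2.1.
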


Furthermore, also as part of \cite[Prop.~2.1]{Fannes1992} we have
\begin{proposition}\label{propsim} Realizations of minimal dimension, called \emp{regular} realizations, are determined up to linear isomorphisms: if $(V,e,\mathbb{E},\rho)$ and $(V',e',\mathbb{E}',\rho')$ are two regular realizations of $\omega$, then $V$ and $V'$ are isomorphic via an invertible linear map $M:V\rightarrow V'$, such that $e'=Me$, $\rho'=\rho  M^{-1}$, $\mathbb{E}'_A=M \mathbb{E}_{A} M^{-1}$.
\end{proposition}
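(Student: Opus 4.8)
The plan is to read the statement as the standard uniqueness-of-minimal-realization theorem, obtained by factoring the rank-$m$ map $\Omega$ through the memory space $V$ and exploiting that minimality forces this factorization to be as nondegenerate as possible. To this end I would attach to every realization $(V,e,\mathbb{E},\rho)$ two canonical maps: a \emph{reachability} map $\Phi:\A_R\to V$ defined on product elements by $\Phi(A_1\otimes\cdots\otimes A_k)=\mathbb{E}_{A_1}\cdots\mathbb{E}_{A_k}(e)$ and extended linearly, and an \emph{observability} map $\Theta:V\to\A_L^*$ defined by $\Theta(v)[A_{-n}\otimes\cdots\otimes A_0]=\rho\,\mathbb{E}_{A_{-n}}\cdots\mathbb{E}_{A_0}(v)$. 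The defining identity of a realization then reads exactly
\begin{equation}
\Omega(X)[Y]=\omega(Y\otimes X)=\rho\,\mathbb{E}_Y\mathbb{E}_X(e)=\Theta(\Phi(X))[Y],
\end{equation}
so that $\Omega=\Theta\Phi$. Note also that $\Phi(\1)=e$ and $\Theta(v)[\1]=\rho(v)$, using $\mathbb{E}_\1(e)=e$ and $\rho\mathbb{E}_\1=\rho$.

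The first substantive step is to show that regularity ($\dim V=m=\operatorname{rank}\Omega$) forces $\Phi$ to be surjective and $\Theta$ injective. This is pure linear algebra: from $\Omega=\Theta\Phi$ we get $m=\operatorname{rank}\Omega\le\operatorname{rank}\Phi\le\dim V=m$, whence $\operatorname{rank}\Phi=m$ and $\Phi$ is onto; symmetrically $\operatorname{rank}\Theta=m$, so $\Theta$ is injective. The domain $\A_R$ is infinite-dimensional, but this causes no trouble since only the finite rank matters, and it is precisely here that Proposition~\ref{prop} (identifying $m$ with both $\dim V$ and $\operatorname{rank}\Omega$) is used.

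Given two regular realizations with factorizations $\Omega=\Theta\Phi=\Theta'\Phi'$, I would next check that $\ker\Phi=\ker\Phi'$: if $\Phi(Z)=0$ then $\Theta'\Phi'(Z)=\Theta\Phi(Z)=0$, and injectivity of $\Theta'$ gives $\Phi'(Z)=0$, so $\ker\Phi\subseteq\ker\Phi'$, and the reverse inclusion is symmetric. Since $\Phi,\Phi'$ are both surjective with the same kernel, the prescription $M(\Phi(X)):=\Phi'(X)$ is well defined and yields a linear isomorphism $M:V\to V'$. It then remains to verify the three intertwining relations. From $\Phi(\1)=e$ we get $Me=M\Phi(\1)=\Phi'(\1)=e'$. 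Prepending a letter gives $\mathbb{E}_A\Phi(X)=\Phi(A\otimes X)$, hence $M\mathbb{E}_A\Phi(X)=\Phi'(A\otimes X)=\mathbb{E}'_A\Phi'(X)=\mathbb{E}'_AM\Phi(X)$, and surjectivity of $\Phi$ yields $M\mathbb{E}_A=\mathbb{E}'_AM$, i.e. $\mathbb{E}'_A=M\mathbb{E}_AM^{-1}$. Finally $\Theta\Phi=\Theta'\Phi'=\Theta'M\Phi$ with $\Phi$ onto gives $\Theta=\Theta'M$, and evaluating at $\1$ gives $\rho=\rho'M$, i.e. $\rho'=\rho M^{-1}$. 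Uniqueness of $M$ is then automatic: any candidate with $Me=e'$ and $M\mathbb{E}_A=\mathbb{E}'_AM$ must satisfy $M\Phi(X)=M\mathbb{E}_X(e)=\mathbb{E}'_X(e')=\Phi'(X)$, and since $\Phi$ is surjective this pins $M$ down on all of $V$.

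I do not expect a genuine obstacle here; the only points requiring care are bookkeeping ones, namely fixing the order conventions so that the left block acts before the right block in $\Omega=\Theta\Phi$, and treating the empty word and the unit $\1$ consistently so that $\Phi(\1)=e$ and $\Theta(\cdot)[\1]=\rho$. The single conceptual input is the rank identity $m=\operatorname{rank}\Omega$ together with its consequence that a rank-$m$ factorization through an $m$-dimensional space is necessarily coreducible (surjective reachability, injective observability), which is exactly where minimality enters and is the heart of the argument.
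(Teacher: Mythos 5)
Your proof is correct and is the canonical reachability/observability factorization argument: the paper itself does not prove Proposition~\ref{propsim} but defers to \cite[Prop.~2.1]{Fannes1992}, whose uniqueness argument rests on exactly the same ingredients (the span of the functionals $\Psi_X$ is the image of $\Omega=\Theta\Phi$, minimality forces $\Phi$ onto and $\Theta$ injective, and $M$ is defined by $M\Phi=\Phi'$). All the delicate points — equality of kernels, the role of $\mathbb{E}_{\1}(e)=e$ and $\rho\mathbb{E}_{\1}=\rho$ in handling the unit, and the uniqueness of $M$ via surjectivity of $\Phi$ — are handled correctly.
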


In the proof of the above statements, it is used that a realization of minimal dimension can be obtained as a \emp{quotient realization} of any realization.
Note, that if $\tilde W\subset V^*$ is a set of linear functionals on $V$, we denote the subspace on which they all vanish simultaneously by $\tilde W^\perp$ (that is, $\tilde W^\perp$ is just the intersection of their kernels). Then we hsve the following.

\begin{proposition}\label{thm:quotient}
Let $(V,e,\mathbb{E},\rho)$ be a realization of a translation invariant state $\omega$, whose minimal realizations are of dimension $m$. Define $W=\{\mathbb{E}^k_Ae\ |\  A\in\mathcal{A}_{[1,k]},k\in\mathbb{N}_{>0}\}$, and $\tilde{W}=\{\rho\mathbb{E}^k_A\ |\  A\in\mathcal{A}_{[1-k,0]},k\in\mathbb{N}_{>0}\}$, set $V':=W/({W\cap\tilde{W}^{\perp}})$ and $L:=W\rightarrow V'$ the canonical projection. Then $\dim{V'}=m$, and there exists a (thus regular) realization $(V',e',\mathbb{K}',\rho')$ such that, for the restrictions $\E_A |_{W}$ and $\rho|_{W}$ to $W$ we have:
\begin{align}
\mathbb{K}'_A L &= L \mathbb{E}_A|_{W}, \label{eqquot1} \\
\rho'L\label{eqquot2} &= \rho|_{W}, \\
 e'&= L e.
\end{align}

\end{proposition}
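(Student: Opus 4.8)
The plan is to realise the quotient construction as the composition of a ``reachability'' map and an ``observability'' map, and then to read off everything — well-definedness of the quotient maps, the realization property, and the dimension — from elementary linear algebra together with the identity $\operatorname{rank}\Omega=m$ supplied by Proposition~\ref{prop}. First I would record two invariance facts. Writing $N:=W\cap\tilde W^{\perp}$, I claim that $W$ is invariant under every $\mathbb{E}_A$, that $\tilde W$ is invariant under right composition $\varphi\mapsto\varphi\,\mathbb{E}_A$, and consequently that $N$ is invariant under every $\mathbb{E}_A|_W$. The first two are immediate: $\mathbb{E}_A(\mathbb{E}^k_B e)=\mathbb{E}^{k+1}_{A\otimes B}e\in W$ and $(\rho\,\mathbb{E}^k_B)\mathbb{E}_A=\rho\,\mathbb{E}^{k+1}_{B\otimes A}\in\tilde W$ (that these are genuine subspaces, nested in $k$, uses $\mathbb{E}_{\1}e=e$ and $\rho\,\mathbb{E}_{\1}=\rho$ to pad with identities). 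For the third, if $v\in N$ and $\varphi\in\tilde W$ then $\varphi(\mathbb{E}_A v)=(\varphi\,\mathbb{E}_A)(v)=0$ since $\varphi\,\mathbb{E}_A\in\tilde W$ and $v\in\tilde W^{\perp}$, while $\mathbb{E}_A v\in W$; hence $\mathbb{E}_A v\in N$.

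Next I would introduce the reachability map $R:\mathcal{A}_R\to V$, $R(A_r)=\mathbb{E}^k_{A_r}e$, and the observability map $O:V\to\mathcal{A}_L^*$, $O(v)[A_l]=\rho\,\mathbb{E}^{k'}_{A_l}v$; both are well defined on the inductive limits by the same padding identities. By construction $\operatorname{Im}R=W$ and $\ker O=\tilde W^{\perp}$, and a direct expansion gives $OR=\Omega$, i.e. $(OR)(A_r)[A_l]=\omega(A_l\otimes A_r)$. Since $N=W\cap\ker O$ is $\mathbb{E}_A$-invariant, the formula $\mathbb{K}'_A(Lw):=L(\mathbb{E}_A w)$ descends to a well-defined linear map on $V'=W/N$, yielding \eqref{eqquot1}; linearity in $A$ is inherited from $\mathbb{E}$. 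Because $\rho\in\tilde W$ we have $\rho|_N=0$, so $\rho|_W$ factors through $L$ as a functional $\rho'$ on $V'$, giving \eqref{eqquot2}, and $e':=Le$ supplies the last datum. I would then verify that $(V',e',\mathbb{K}',\rho')$ is a realization: iterating \eqref{eqquot1} gives $\mathbb{K}'_{A_1}\cdots\mathbb{K}'_{A_n}e'=L(\mathbb{E}_{A_1}\cdots\mathbb{E}_{A_n}e)$, and applying \eqref{eqquot2} (the argument lies in $W$) gives $\rho'\mathbb{K}'_{A_1}\cdots\mathbb{K}'_{A_n}e'=\rho\,\mathbb{E}_{A_1}\cdots\mathbb{E}_{A_n}e=\omega(A_1\otimes\cdots\otimes A_n)$. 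The normalizations $\mathbb{K}'_{\1}e'=e'$ and $\rho'\mathbb{K}'_{\1}=\rho'$ follow from $\mathbb{E}_{\1}e=e$ and $\rho\,\mathbb{E}_{\1}=\rho$.

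For the dimension I would apply rank--nullity to $O|_W:W\to\mathcal{A}_L^*$, whose kernel is $W\cap\ker O=N$ and whose image is $O(W)=\operatorname{Im}(OR)=\operatorname{Im}\Omega$. Hence $\dim V'=\dim W-\dim N=\dim\operatorname{Im}\Omega=\operatorname{rank}\Omega=m$, the last equality being exactly the content of Proposition~\ref{prop} (the span of the $\Psi_X$ is $\operatorname{Im}\Omega$). Being a realization of $\omega$ of the minimal dimension $m$, $(V',e',\mathbb{K}',\rho')$ is regular.

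The main obstacle is bookkeeping rather than depth: one must check carefully that $W$ and $\tilde W$ are subspaces invariant in the appropriate sense and that $N$ is $\mathbb{E}_A$-invariant (this is exactly what makes $\mathbb{K}'$ well defined on the quotient, and $\rho|_N=0$ is what makes $\rho'$ well defined), and one must correctly identify $\ker O=\tilde W^{\perp}$ and $OR=\Omega$ so that the rank count returns precisely $m$. The one genuinely conceptual point is recognizing that the quotient is simultaneously reachable and observable, which is why its dimension collapses to $\operatorname{rank}\Omega$.
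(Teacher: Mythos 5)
Your proposal is correct, and it follows essentially the same route as the paper (which defers this to the quotient construction of \cite[Prop.~2.1]{Fannes1992}): establish invariance of $W$, $\tilde W$ and $N=W\cap\tilde W^\perp$ so that $\mathbb{K}'$ and $\rho'$ descend to the quotient, verify the realization identities by iteration, and obtain $\dim V'=\operatorname{rank}\Omega=m$ by rank--nullity applied to the observability map, using that $\operatorname{Im}\Omega$ is the span of the $\Psi_X$ from Proposition~\ref{prop}. All the checkable steps (e.g.\ $e\in W$, $\rho\in\tilde W$, $\ker O=\tilde W^\perp$, $OR=\Omega$) are correct.
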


\subsubsection{Observable regular realization}\label{sec.observablereal}

The proof of Proposition~\ref{prop} gives also a way to construct a realization from the knowledge of a marginal of the state on a finite subset of the chain.
We call this realization \emph{observable regular realization}, as it is expressed only in terms of quantities which can be estimated from experiments.

Let us select two finite-dimensional unital subalgebras $\mathcal{C}_R\subseteq \mathcal{A}_R$ and $\mathcal{C}_L\subseteq \mathcal{A}_L$ such that the vector space generated by functionals $\Psi_X:\mathcal{C}_{L}\rightarrow\mathbb{C}$ of the form 
\begin{equation}\label{PsiX}
\Psi_X(Y):=\omega(Y\otimes X),
\end{equation}
with $X\in\mathcal{C}_{R}$, has dimension $m$. We call this vector space $V$, which can be thought of as a subspace of $
{\mathcal{A}_{\Lambda}^*\cong \mathcal{A}_\Lambda}$, for some finite subset $\Lambda$ of the chain, say $\Lambda = [1, t^*]$ for some $t^*\le m$ (see \cref{appendix.tstar}), and define $L$ to be a projection from $\mathcal{A}_\Lambda^*$ to $V$. 

We can thus define the map $\mathbb{E}:\mathcal{A}\ni A \mapsto \mathbb{E}_A\in \mathrm{Hom}(\mathcal{A}^{*}_{\Lambda})$ as
\begin{equation}
\mathbb{E}_A(\Psi_X):=\Psi_{A\otimes X},
\end{equation}
and by setting $\mathbb{E}_A$ to be $0$ on the orthogonal complement of $V$.  We define the functional $\tau\in\mathcal{A}_\Lambda^{**}$ as 
\begin{equation}\label{eqidstarstar}
\tau(Y)=Y(\1),\,\, \forall\,Y\in\cA_{\Lambda}^*\,.
\end{equation}
Note that identifying $\cA_{\Lambda}^*$ with $\cA_\Lambda$ through the Hilbert-Schmidt inner product, $\tau$ acts as the trace functional.

Define the maps $\Omega:\C_{R}\rightarrow\C_{L}^*$ and $\Omega_{(\cdot)}: \cA\otimes \C_{R}\rightarrow\C_{L}^*$ through
\begin{align}\label{def.OmegaOmega()}
\Omega(X)[Y]&:=\omega(Y\otimes X)\\
\Omega_{A}(X)[Y]&:=\omega(Y\otimes A\otimes X),
\end{align}
where $\Omega_{(\cdot)}$ is extended linearly to all of $\cA\otimes \C_{R}$.
Fix a self-adjoint basis $\{X_i\}_{i=1}^{m_1}$ of $\mathcal{C}_R$, such that $\Tr[X_iX_j]=\delta_{ij}$, and a self-adjoint basis $\{Y_i\}_{i=1}^{m_2}$ of $\mathcal{C}_L$, such that $\Tr[Y_iY_j]=\delta_{ij}$.
By singular value decomposition, we can write
\begin{align}\label{eq:omega_svd}
\Omega=U D O,
\end{align}
with $D$ diagonal in $\mathbb{R}_{\geq 0}^{m\times m}$, $U:\mathbb{C}^{m}\rightarrow \C_{L}^*$ and $O:\C_{R}\rightarrow \mathbb{C}^{m}$ partial isometries, with real entries in the chosen basis and satisfying $U^\intercal U = OO^\intercal = \1_{\mathbb C^m}$.

\begin{proposition}[Observable regular realization]\label{prop.observablequasireal}
$(\mathbb{C}^m, e,\mathbb{K},\rho)$ is a regular realization, where
 \begin{align}
e&:=U^{\intercal} \Omega(\1),\label{eqe1}\\
\rho&:=\tau  U =\tau \Omega (U^{\intercal}\Omega)^{+},\label{eqrho1}\\
\mathbb{K}_{A}&:=U^{\intercal}\Omega_{A} (U^{\intercal}\Omega)^+.\label{eqK1}
\end{align}

\end{proposition}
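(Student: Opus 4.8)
The plan is to exhibit $(\mathbb{C}^m,e,\mathbb{K},\rho)$ as the coordinate expression, in the orthonormal frame supplied by the SVD, of an intrinsic regular realization living on the image $V=\operatorname{im}\Omega\subseteq\mathcal{C}_L^*$. First I would record the linear-algebra facts coming from $\Omega=UDO$. Since $\{\Psi_X : X\in\mathcal{C}_R\}$ span the $m$-dimensional space $V$, the map $\Omega$ has rank $m$, so $D\in\mathbb{R}_{\ge0}^{m\times m}$ is invertible and $\operatorname{im}U=V$. From $U^\intercal U=OO^\intercal=\1_{\mathbb{C}^m}$ one computes $(U^\intercal\Omega)^+=(DO)^+=O^\intercal D^{-1}$, whence the single identity that drives everything, $\Omega(U^\intercal\Omega)^+=UDO\,O^\intercal D^{-1}=U$. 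In particular $\tau\Omega(U^\intercal\Omega)^+=\tau U$, which matches the two expressions for $\rho$ in \eqref{eqrho1}; writing $P_V:=UU^\intercal$ for the orthogonal projection onto $V$, we also have $P_V|_V=\operatorname{id}_V$.

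Next I would identify the intrinsic realization. On $V$ consider $e_V:=\Omega(\1)=\Psi_{\1}$, $\rho_V:=\tau|_V$, and the transfer map $\mathbb{E}_A$, which by the construction preceding the proposition is a well-defined endomorphism of $V$ satisfying $\Omega_A=\mathbb{E}_A\Omega$ and $\mathbb{E}_A(V)\subseteq V$. Combining $\Omega_A=\mathbb{E}_A\Omega$ with $\Omega(U^\intercal\Omega)^+=U$ immediately gives $\mathbb{K}_A=U^\intercal\Omega_A(U^\intercal\Omega)^+=U^\intercal\mathbb{E}_A U$, and likewise $e=U^\intercal e_V$ and $\rho=\rho_V U$. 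Thus $U$ is a linear isomorphism $\mathbb{C}^m\to V$ intertwining $(\mathbb{C}^m,e,\mathbb{K},\rho)$ with $(V,e_V,\mathbb{E},\rho_V)$: explicitly $Ue=P_V e_V=e_V$, and since $\mathbb{E}_A U$ takes values in $V$ we have $P_V\mathbb{E}_A U=\mathbb{E}_A U$, so products telescope as $\mathbb{K}_{A_1}\cdots\mathbb{K}_{A_n}=U^\intercal\mathbb{E}_{A_1}\cdots\mathbb{E}_{A_n}U$.

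It remains to check that $(V,e_V,\mathbb{E},\rho_V)$ is a regular realization of $\omega$, after which the claim transfers to $(\mathbb{C}^m,e,\mathbb{K},\rho)$ through the intertwiner. The realization identity follows by telescoping: $\mathbb{E}_{A_1}\cdots\mathbb{E}_{A_n}(\Psi_{\1})=\Psi_{A_1\otimes\cdots\otimes A_n}$, and applying $\rho_V=\tau|_V$ evaluates this functional at $\1$, giving $\omega(\1_{\mathcal{C}_L}\otimes A_1\otimes\cdots\otimes A_n)=\omega(A_1\otimes\cdots\otimes A_n)$ by translation invariance and consistency of marginals; the normalisations $\mathbb{E}_{\1}(e_V)=e_V$ and $\rho_V\mathbb{E}_{\1}=\rho_V$ follow the same way from $\1\otimes\1=\1$. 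Regularity is immediate since $\dim V=m$ is the minimal realization dimension by \cref{prop}, and conjugation by the isomorphism $U$ preserves regularity by \cref{propsim}. Assembling, $\rho\,\mathbb{K}_{A_1}\cdots\mathbb{K}_{A_n}\,e=\tau U\,U^\intercal\mathbb{E}_{A_1}\cdots\mathbb{E}_{A_n}U\,e=\rho_V\mathbb{E}_{A_1}\cdots\mathbb{E}_{A_n}(e_V)=\omega(A_1\otimes\cdots\otimes A_n)$.

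The one genuinely delicate point, and the place I would spend the most care, is the well-definedness of $\mathbb{E}_A$ as an endomorphism of $V$ together with the relation $\Omega_A=\mathbb{E}_A\Omega$; this is exactly where the hypotheses on $\mathcal{C}_R$ and $\mathcal{C}_L$ are used, namely that $\{\Psi_X : X\in\mathcal{C}_R\}$ already spans the full $m$-dimensional functional space and that restriction to $\mathcal{C}_L$ is injective on it (cf.\ \cref{prop} and \cref{appendix.tstar}). These two conditions ensure both that $\Psi_{A\otimes X}\in V$ and that the prescription $\Psi_X\mapsto\Psi_{A\otimes X}$ descends consistently to $V$; everything else is routine bookkeeping with the SVD relations above.
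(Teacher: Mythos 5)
Your proof is correct and follows essentially the same route as the paper's: both rest on the relation $\Omega_A=\mathbb{E}_A\Omega$, the SVD identity $\Omega(U^{\intercal}\Omega)^+=U$ (equivalently $U^{\intercal}\Omega=DO$), and telescoping through the projector $UU^{\intercal}$ onto $V$ to reduce $\rho\,\mathbb{K}_{A_1}\cdots\mathbb{K}_{A_n}e$ to $\tau\,\mathbb{E}_{A_1}\cdots\mathbb{E}_{A_n}\Psi_{\1}=\omega(A_1\otimes\cdots\otimes A_n)$. Your version is marginally more complete in that it explicitly checks the normalisation conditions $\mathbb{E}_{\1}(e_V)=e_V$ and $\rho_V\mathbb{E}_{\1}=\rho_V$ and flags the well-definedness of $\mathbb{E}_A$ on $V$, which the paper delegates to the construction preceding the proposition.
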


\begin{proof}
We have that
\begin{align}
{\mathbb{E}_A \Omega(X)=\mathbb{E}_A \Psi_{X}=\Psi_{A\otimes X}=\Omega_{A}}(X)\,,
\end{align}
and therefore
\begin{align}
\mathbb{E}_A U=\Omega_{A}(D O)^{+}=\Omega_{A} (U^{\intercal} \Omega)^{+}\,,
\end{align}
{where we have also used that $U^\intercal  U=\1_{\mathbb{C}^m}$ so that $U^\intercal  \Omega=D O$.} 
Indeed we have
\begin{align}
\mathbb{K}_A U^\intercal &=U^{\intercal} \Omega_{A} (U^{\intercal} \Omega)^+ U^{\intercal}= U^{\intercal} \mathbb{E}_A  U U^{\intercal},
\end{align}
which lets us conclude $\mathbb{K}_{A} U^{\intercal}=U^{\intercal} \mathbb{E}_A$, since $P_U = UU^{\intercal}$ is the projector on the space identified with $V$ in $\mathcal{A}^*_{\Lambda}$, and both $U^{\intercal}$ and $\mathbb{E}_A$ are zero on the orthogonal complement of $V$ (so in particular $\mathbb E_A P_U = \mathbb E_A$).
Hence, we have proved that
\begin{align*}
\rho\mathbb{K}_{A_n}\dots \mathbb{K}_{A_1}e=\tau P_U
\mathbb{E}_{A_n}P_U\dots \mathbb{E}_{A_1}P_U\Psi_{\1}=\tau
\mathbb{E}_{A_n}\dots \mathbb{E}_{A_1}\Psi_{\1}\,,
\end{align*}
and therefore $(\mathbb{C}^m,e,\mathbb{K},\rho)$ is a regular realization of $\omega$. 
\end{proof}

\subsubsection{Guarantees on marginal size for translation invariant states}\label{appendix.tstar}
We observe that if the rank $m$ of the finitely correlated state viewed as a bilinear form is known, the construction in the previous subsection can be made completely explicit. First, we have the following lemma.

\begin{lemma}\label{lem:finitestop}
Define $V_{i}$ the vector space generated by functionals $\Phi_X:\mathcal{A}_{R}\rightarrow\mathbb{C}$ of the form 
\begin{equation}
\Phi_X(A_1\otimes\cdots\otimes A_{n}):=\omega(X\otimes A_1\otimes \cdots\otimes A_{n}),
\end{equation}
with $X\in\mathcal{A}_{[-i+1,0]}$ and $n\in\mathbb N$.  If $\omega$ has a minimal realization of dimension $m$, then there exists $i\le m$ such that $V_j = V_i = \Phi(\A_L)$ for each $j\ge i$.
In particular, this holds for $V_{m}$.
\end{lemma}

\begin{proof}
Let $i$ be such that $\dim V_i = \dim V_{i+1}$, i.e.\ $V_i = V_{i+1}$.
Let $\Phi_{Y_k}$, $k=1,\dots, \dim V_i$, be a basis of $V_i$, so $Y_k\in \A_{[-i+1,0]}$.
If $X = \sum_j X_j\otimes A_j\in \A_{[-i-1,-1]}\otimes\A = \A_{[-i-1,0]}$, i.e.\ $\Phi_X\in V_{i+2}$, then for $A\in\A_R$ using translation invariance of the FCS $\omega$:
$$\Phi_X(A) = \sum_j \Phi_{X_j}(A_j\otimes A) = \sum_{j,k}c_{jk}\Phi_{Y_k}(A_j\otimes A) = \sum_{j,k} c_{jk}\Phi_{Y_k\otimes A_j}(A) = \Phi_Y(A)$$
where $Y = \sum_{j,k} c_{jk} Y_k\otimes A_j \in \A_{[-i,0]}$.
Therefore, $V_{i+2} = V_{i+1}$ and by induction $V_i = V_j = \Phi(\A_L)$ for all $j\ge i$.
Because of $\dim(\Phi(\A_L)) = m$ by assumption, we have $\dim V_i = m$ and thus $i\le m$ since $\dim V_j$ has to increase monotonically in $j$.
\end{proof}
Let $\Psi:\A_R\to\A_{L}^*$ be the map such that $\Psi_X(Y) = \Phi_Y(X)$ where $X\in\A_{R}, Y\in\A_{L}$. 
We have $\Psi(\A_R)\cong \Phi(\A_{L})^*$ since both are finite-dimensional of the same dimension and since $\Psi_X(\Phi_Y) = \Phi_Y(X)$ is well-defined.
By the above Lemma we can replace $\A_L$ by a finite subchain $\Psi(\A_R)\cong \Phi(\A_{[-m+1,0]})^*$.
The argument in the proof of the Lemma can also be applied to the analogous spaces $W_i$ for $\Psi$ showing that $\Psi(\A_R)\cong \Psi(\A_{[1,m]})$.
Therefore, it suffices to parametrize $\Psi_X$ by $X\in \A_{[1,m]}$ and restrict inputs to $\A_{[-m+1, 0]}$ so one has:
\begin{corollary}\label{thm:Vij}
Define $V_{i,j}$ the vector space generated by functionals $\Psi_X:\mathcal{A}_{[-j+1, 0]}\rightarrow\mathbb{C}$ of the form 
\begin{equation}
\Psi_X(A):=\omega(X\otimes A),
\end{equation}
with $X\in\mathcal{A}_{[1,i]}$.
If $\omega$ has a minimal realization of dimension $m$, $V_{m,m}$ 
has dimension $m$.
The same result holds for the analogously defined spaces $W_{i,j}$ of functionals on the left-hand side of the chain.
\end{corollary}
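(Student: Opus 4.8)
The plan is to phrase the statement in terms of the bilinear form $B(Y,X):=\omega(Y\otimes X)$ on $\A_L\times\A_R$, which has rank exactly $m$ by \cref{prop}. With the two associated maps $\Psi:\A_L\to\A_R^*$, $Y\mapsto\Psi_Y=B(Y,\cdot)$ and $\Phi:\A_R\to\A_L^*$, $X\mapsto\Phi_X=B(\cdot,X)$ (both of rank $m$ and mutually transpose), one checks immediately that $\dim V_{i,j}$ is the rank of $B$ restricted to $\A_{[-j+1,0]}\times\A_{[1,i]}$: the functionals $\Psi_X$ spanning $V_{i,j}$ are precisely the ``columns'' $B(\cdot,X)$, $X\in\A_{[1,i]}$, cut down to the left window $\A_{[-j+1,0]}$. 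The Corollary thus asserts that the finite window $\A_{[-m+1,0]}\times\A_{[1,m]}$ already achieves the full rank $m$, and the proof reduces to the elementary linear-algebra fact that a spanning set of rows together with a spanning set of columns of a finite-rank form single out a submatrix of the same rank.

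First I would record the two one-sided stabilizations, both instances of \cref{lem:finitestop}. Applied verbatim, the Lemma gives $\Psi(\A_{[-m+1,0]})=\Psi(\A_L)$, i.e.\ every functional $\Psi_Y$ on the right half is already produced by some $Y$ in the size-$m$ left window. Rerunning the proof of the Lemma with the two half-chains interchanged --- which again uses only translation invariance, not reflection invariance of $\omega$ --- yields the mirror statement $\Phi(\A_{[1,m]})=\Phi(\A_R)=:I_\Phi$, so that the right index may be truncated to $\A_{[1,m]}$ without losing any functional on the left half. This is exactly the ``$\Phi$/$W_i$'' remark preceding the statement.

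The combination step is the heart of the argument. Since $I_\Phi=\Phi(\A_{[1,m]})$ has dimension $m$, the quantity $\dim V_{m,m}$ equals the dimension of the image of $I_\Phi$ under the restriction map $r:\A_L^*\to\A_{[-m+1,0]}^*$, and it suffices to prove that $r$ is injective on $I_\Phi$. So suppose $f=\Phi_X\in I_\Phi$ satisfies $r(f)=0$, that is $B(Y',X)=0$ for all $Y'\in\A_{[-m+1,0]}$. For an arbitrary $Y\in\A_L$ the first stabilization furnishes some $Y'\in\A_{[-m+1,0]}$ with $\Psi_Y=\Psi_{Y'}$ as functionals on $\A_R$; evaluating at $X$ gives $f(Y)=B(Y,X)=\Psi_Y(X)=\Psi_{Y'}(X)=B(Y',X)=0$. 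Hence $f$ vanishes on all of $\A_L$, so $f=0$ and $r$ is injective. Therefore $\dim V_{m,m}=\dim I_\Phi=m$, and the claim for the left-hand spaces $W_{i,j}$ follows by the same argument with the two sides exchanged.

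I expect the only genuinely delicate point to be keeping the two directions consistent: \cref{lem:finitestop} truncates the parametrizing index while leaving the test index infinite, so the two applications must truncate \emph{complementary} indices --- the left parametrization for $\Psi$ and the right parametrization for $\Phi$ --- in order for the final window to be finite on both sides simultaneously. Once the transpose relationship between $\Psi$ and $\Phi$ is set up cleanly, the rest is the standard full-rank-submatrix fact and presents no further difficulty.
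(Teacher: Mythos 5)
Your proof is correct and follows essentially the same route as the paper: apply \cref{lem:finitestop} once on each side of the chain to truncate both the parametrizing and the test windows to size $m$, then combine. The only difference is that you spell out the combination step (injectivity of the restriction map on the image of $\Phi$) explicitly, whereas the paper compresses it into the chain of isomorphisms $\Phi(\A_L)\cong\Psi(\A_R)^*\cong\Psi(\A_{[1,m]})^*$; your version is a welcome clarification but not a different argument.
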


\subsubsection{Quotient realizations}
Let $(V, u, \mathbb E, \sigma)$ be any realization and $(V', e',\mathbb{K}',\rho')$ its quotient realization.
Since $(V', e',\mathbb{K}',\rho')$ and the observable realization $(\mathbb C^{m}, e,\mathbb{K},\rho)$ constructed in Proposition \ref{prop.observablequasireal} are both regular, there is a unique invertible linear transformation $M:V'\rightarrow \mathbb{C}^m$ such that
\begin{align}
e&=Me',\\
{\mathbb{K}}_A&=M\mathbb{K}'_{A}M^{-1},\\
{\rho}&=\rho' M^{-1}.
\end{align}

A specific map $M$ with desirable properties can be obtained explicitly as follows:
Recall that $\C_{R}\subseteq \mathcal{A}_{[1,t^*]}$ and $\C_{L}\subseteq \mathcal{A}_{[1-t^*,0]}$ (there is always such $t^*$, in fact we can take $t^*\leq m$, see Appendix \ref{appendix.tstar}).
We define the map  $\mathcal{F}:V\rightarrow \mathcal{A}_{[1-t^*,0]}^*$ as
\begin{equation}
\mathcal{F}(Z)[A]:=\sigma \mathbb{E}^{t^*}(A\otimes Z)\,.
\end{equation}

By definition, we have, for any $A\in\C_{R}$ and $A'\in\C_{L}$,
\begin{equation}\label{eqFEO}
 \mathcal{F} \E^{t^*}_A(u)[A']=\sigma\mathbb{E}^{t^*}_{A'} \E^{t^*}_A(u)=\omega(A'\otimes A)=\Omega(A)[A'],
\end{equation}
so $\mathcal{F}\E^{t^*}_A(u)=\Omega(A)$.
Similarly, for all $A''\in\mathcal{A}$,$A\in\C_{R}$,
\begin{equation}\label{eq.OmegaAAF}
\mathcal{F}\E^{t^*+1}_{A''\otimes A}(u)=\Omega_{A''}(A)\,.
\end{equation}
With this:
\begin{lemma}\label{thm:quotientM}
    It holds that $W\cap\ker(U^\intercal\mathcal F) = W\cap\tilde W^\perp$.
    In particular, $U^\intercal \mathcal F|_W$ factorizes through the quotient realization as $U^\intercal \mathcal F|_W = ML$ with an invertible map $M:V'\to \mathbb C^m$ and $L:W\to V'$ being the projection.
\end{lemma}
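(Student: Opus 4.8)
The plan is to prove the set identity $W\cap\ker(U^\intercal\mathcal F) = W\cap\tilde W^\perp$ by combining one trivial inclusion with a dimension count, thereby avoiding any direct verification of the harder inclusion, and then to extract the factorization $U^\intercal\mathcal F|_W = ML$ from the universal property of the quotient projection $L$.

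First I would record the easy inclusion $W\cap\tilde W^\perp\subseteq W\cap\ker(U^\intercal\mathcal F)$. If $Z\in\tilde W^\perp$, then by definition $\sigma\mathbb E^k_A(Z)=0$ for all $k\in\mathbb N_{>0}$ and all $A\in\mathcal A_{[1-k,0]}$; specializing to $k=t^*$ gives $\mathcal F(Z)[A]=\sigma\mathbb E^{t^*}_A(Z)=0$ for every $A$, hence $\mathcal F(Z)=0$ and a fortiori $U^\intercal\mathcal F(Z)=0$. (This in fact shows $\tilde W^\perp\subseteq\ker\mathcal F$ without reference to $W$.) Next I would show that $U^\intercal\mathcal F|_W$ is \emph{surjective} onto $\mathbb C^m$. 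The key input is \eqref{eqFEO}: for $X\in\mathcal C_R$ the vector $\mathbb E^{t^*}_X(u)$ lies in $W$ and satisfies $\mathcal F\,\mathbb E^{t^*}_X(u)=\Omega(X)$. Therefore $U^\intercal\mathcal F\,\mathbb E^{t^*}_X(u)=U^\intercal\Omega(X)=DO(X)$, using $U^\intercal U=\1_{\mathbb C^m}$ from the SVD $\Omega=UDO$. Since $O$ is a coisometry ($OO^\intercal=\1_{\mathbb C^m}$, hence surjective) and $D$ is invertible on $\mathbb C^m$ (all $m$ of its singular values are nonzero), the composite $DO$ is surjective; thus $U^\intercal\mathcal F|_W$ is onto $\mathbb C^m$.

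With these two facts the identity follows by counting. Proposition~\ref{thm:quotient} gives $\dim V'=m$, so $\dim(W\cap\tilde W^\perp)=\dim W-m$; rank–nullity together with surjectivity gives $\dim\ker(U^\intercal\mathcal F|_W)=\dim W-m$ as well. Since $W\cap\tilde W^\perp\subseteq W\cap\ker(U^\intercal\mathcal F)$ and the two subspaces have equal finite dimension, they coincide, which is the first assertion. For the factorization, note that $\ker(U^\intercal\mathcal F|_W)=W\cap\tilde W^\perp=\ker L$, where $L:W\to V'$ is the (surjective) quotient projection; the universal property then produces a unique linear $M:V'\to\mathbb C^m$ with $U^\intercal\mathcal F|_W=ML$. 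As both $U^\intercal\mathcal F|_W$ and $L$ are surjective, $M$ is surjective, and being a surjection between spaces of the same dimension $m$ it is invertible.

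The only step with any content is the surjectivity of $U^\intercal\mathcal F|_W$, and even this is not really an obstacle once \eqref{eqFEO} is in hand, since it reduces entirely to the SVD structure of $\Omega$. The point of organizing the argument this way is that the dimension count converts the genuinely harder inclusion $W\cap\ker(U^\intercal\mathcal F)\subseteq W\cap\tilde W^\perp$ into a counting statement, so one never has to analyze $\tilde W^\perp$ through its vanishing conditions at all block sizes $k$ directly.
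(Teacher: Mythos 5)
Your proof is correct, but it takes a genuinely different route from the paper's. The paper argues directly: it first observes that $\ker(U^\intercal)=\mathcal F(W)^\perp$ (so that post-composing with $U^\intercal$ does not enlarge the kernel on $W$), and then unwinds the definition of $\mathcal F$ to identify $\ker(\mathcal F)$ with $\tilde W^\perp$ outright --- the nontrivial inclusion there rests implicitly on the defining property of $t^*$ and $\mathcal C_L$, namely that the functionals $\sigma\mathbb E^{t^*}_A$ with $A\in\mathcal C_L$ already span all of $\tilde W$ (cf.\ \cref{appendix.tstar}). You instead prove only the easy inclusion $W\cap\tilde W^\perp\subseteq W\cap\ker(U^\intercal\mathcal F)$ and close the gap by a dimension count, using \eqref{eqFEO} together with the SVD $\Omega=UDO$ to get surjectivity of $U^\intercal\mathcal F|_W$ onto $\mathbb C^m$, and $\dim V'=m$ from \cref{thm:quotient}. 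Each step of yours checks out: $\mathbb E^{t^*}_X(u)\in W$ for $X\in\mathcal C_R$, $DO$ is surjective because $\mathrm{rank}(\Omega)=m$, and there is no circularity since \cref{thm:quotient} precedes the lemma. What the paper's route buys is the slightly stronger exact identification $\ker(\mathcal F)=\tilde W^\perp$ (independent of $W$); what yours buys is that the harder inclusion --- which in the paper hides the "level $t^*$ suffices" fact behind a terse "by definition" --- is replaced by bookkeeping that only uses already-established rank statements, making the argument arguably more self-contained.
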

\begin{proof}
    Note, that $\mathcal F\mathbb E_A(u) = \Psi_A$ and thus $\ker(U^\intercal) = \mathcal F(W)^\perp$.
    Now, $Z\in\ker(\mathcal F)$ if and only if for each $A$ we have
    $$0 = \mathcal F(Z)[A] = \sigma\mathbb E^{t^*}_A(Z)$$
    which, by definition, holds if and only if $Z \in \tilde W^\perp$.
\end{proof}
We arrive at
\begin{proposition}\label{propquotientc} 
It holds that
\begin{align}
{e}&={U}^{\intercal}\F(u),\\
{\rho}{U}^{\intercal}\F|_W&=\sigma|_W\\
{\mathbb{K}}_A{U}^{\intercal}\F|_W&={U}^{\intercal}\F\E_A|_W,
\end{align}
In particular, $(\mathbb{C}^m,e,\mathbb{K},\rho)$ is similar to the quotient realization of $(V', e',\E,\rho')$ through the invertible map $M:V'\rightarrow \mathbb{C}^m$ defined by $ML =U^{\intercal}\mathcal{F}|_W$ (see \cref{thm:quotient}).
\end{proposition}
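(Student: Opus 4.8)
The plan is to establish the three displayed identities one at a time and then read off the similarity relations $e=Me'$, $\rho=\rho'M^{-1}$, $\mathbb{K}_A=M\mathbb{K}'_AM^{-1}$ by combining them with the factorization $U^\intercal\mathcal F|_W=ML$ furnished by \cref{thm:quotientM}, the surjectivity of the canonical projection $L$, and the defining identities of the quotient realization coming from \cref{thm:quotient} applied to $(V,u,\mathbb E,\sigma)$, namely $e'=Lu$, $\rho'L=\sigma|_W$ and $\mathbb{K}'_AL=L\mathbb E_A|_W$. Since both $(\mathbb{C}^m,e,\mathbb{K},\rho)$ and the quotient realization are regular, \cref{propsim} then guarantees that the $M$ so produced is the unique similarity, which is exactly the ``in particular'' claim.

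The workhorse is the word-to-functional identification $\mathcal F(\mathbb E^k_B(u))=\Psi_B$, valid for every word $B\in\mathcal A_{[1,k]}$, where $\Psi_B$ is the functional $Y\mapsto\omega(Y\otimes B)$ restricted to $\mathcal C_L$; this is merely the realization formula $\sigma\mathbb E^{t^*}_Y\mathbb E^k_B(u)=\omega(Y\otimes B)$ read through the definition $\mathcal F(Z)[Y]=\sigma\mathbb E^{t^*}_Y(Z)$, and it extends \eqref{eqFEO} from the special vectors $\mathbb E^{t^*}_A(u)$ to all of $W$. Because $V$ already has the full rank $m$ guaranteed by \cref{prop}, it contains $\Psi_B$ for every $B\in\mathcal A_R$, so this identity shows $\mathcal F(W)\subseteq V$ and hence that $P_U=UU^\intercal$ acts as the identity on $\mathcal F(W)$. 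For the $e$-identity I would specialize to $B=\1$: since $\mathbb E_{\1}(u)=u$ gives $\mathcal F(u)=\Omega(\1)$, the formula $e=U^\intercal\Omega(\1)$ of \cref{prop.observablequasireal} immediately yields $e=U^\intercal\mathcal F(u)$. For the $\rho$-identity I would use $\rho=\tau U$, so that $\rho U^\intercal=\tau P_U$; evaluating on $w\in W$ and using $\mathcal F(W)\subseteq V$ removes $P_U$, after which $\tau\mathcal F(w)=\mathcal F(w)[\1]=\sigma\mathbb E^{t^*}_{\1}(w)=\sigma(w)$, the last equality being $\sigma\mathbb E_{\1}=\sigma$ iterated $t^*$ times.

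For the $\mathbb{K}$-identity I would first upgrade the word identification to the intertwining $\mathcal F\mathbb E_A|_W=\mathbb E^{\mathrm{obs}}_A\mathcal F|_W$, where $\mathbb E^{\mathrm{obs}}_A$ denotes the observable map of \cref{sec.observablereal} acting by $\Psi_X\mapsto\Psi_{A\otimes X}$: both sides send $\mathbb E^k_B(u)$ to $\Psi_{A\otimes B}$. Composing on the left with $U^\intercal$ and invoking the relation $\mathbb{K}_AU^\intercal=U^\intercal\mathbb E^{\mathrm{obs}}_A$ already proved inside \cref{prop.observablequasireal} gives $U^\intercal\mathcal F\mathbb E_A|_W=\mathbb{K}_AU^\intercal\mathcal F|_W$, which is the third identity. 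Substituting $U^\intercal\mathcal F|_W=ML$ into the three identities and cancelling the surjection $L$ (legitimate since $L$ is onto $V'$) then yields $e=Me'$, $\rho M=\rho'$ and $\mathbb{K}_AM=M\mathbb{K}'_A$, completing the argument. The main obstacle I anticipate is purely organizational: keeping the two distinct maps $\mathbb E_A$ (the given realization on $V$) and $\mathbb E^{\mathrm{obs}}_A$ (the observable construction on $\mathcal A^*_\Lambda$) rigorously apart, and verifying that the word identification remains valid for words longer than $t^*$ — equivalently that $\Psi_B\in V$ even when $B\notin\mathcal C_R$ — which is precisely the content of the full-rank hypothesis behind $V$.
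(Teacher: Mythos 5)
Your proposal is correct and follows essentially the same route as the paper's own (much terser) proof: verify $\F(u)=\Omega(\1)$, $\tau\F|_W=\sigma|_W$, and the intertwining $U^\intercal\F\E_A|_W=\mathbb{K}_AU^\intercal\F|_W$ via Equations \eqref{eqFEO}--\eqref{eq.OmegaAAF} and the relation $\mathbb{K}_AU^\intercal=U^\intercal\mathbb{E}_A$ from Proposition \ref{prop.observablequasireal}, then factor $U^\intercal\F|_W=ML$ using \cref{thm:quotientM}. Your extra care about $\F(W)\subseteq V$ (so that $P_U$ drops out) and about keeping the two maps $\E_A$ and $\mathbb{E}^{\mathrm{obs}}_A$ apart only makes explicit steps the paper leaves implicit.
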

\begin{proof}
    It holds that $\F(u) = \Omega(\1)$ so $U^\intercal\F(u) = e$ is immediate.
    Next, note that by definition $\rho U^\intercal\F = \tau\F$.
    Hence, if $w=\E_A(u)\in W$, $A\in \ker(\Omega)^{\perp
}$, then
    $$\rho U^\intercal\F(w) = \tau\F(w) = \Omega(A)[\1] = \omega(A) = \sigma(w).$$
    Furthermore,
    $$U^\intercal \F \E_{A'} w = U^\intercal\Omega_{A'}(A) = \KK_{A'}U^\intercal\Omega(A) = \KK_{A'}U^\intercal\F(w).$$
    Writing $U^\intercal\F = ML$ by \cref{thm:quotientM} the proof is complete.
\end{proof}

\subsubsection{$C^*$-realizations}
\label{secquotientcp}

A special kind of finitely correlated states are those that can be generated by consecutive applications of a quantum operation on a genuinely quantum memory system. They admit $C^*$-realizations, which we now define, and are also known as $C^*$-finitely correlated states.

\begin{definition}[$C^*$-realization]
\label{def:C*real}
A \textbf{$C^*$-realization} of a translation invariant state $\omega$ is a realization $(\B,\1_{\B},\cE,\rho_0)$ of $\omega$, where $\B$ is a $C^*$-algebra, $\1_{\B}$ is the identity of $\B$, $\cE:\A\otimes\B\rightarrow \B$ is completely positive and unital, and $\rho_0$ is a positive functional on $\B$, such that $\rho_0(\1)=1$. Whenever $\omega$ admits a $C^*$-realization it is called a \textbf{$C^*$-finitely correlated state}.  
\end{definition}

\begin{figure}
    \centering
\includegraphics[scale=0.27]{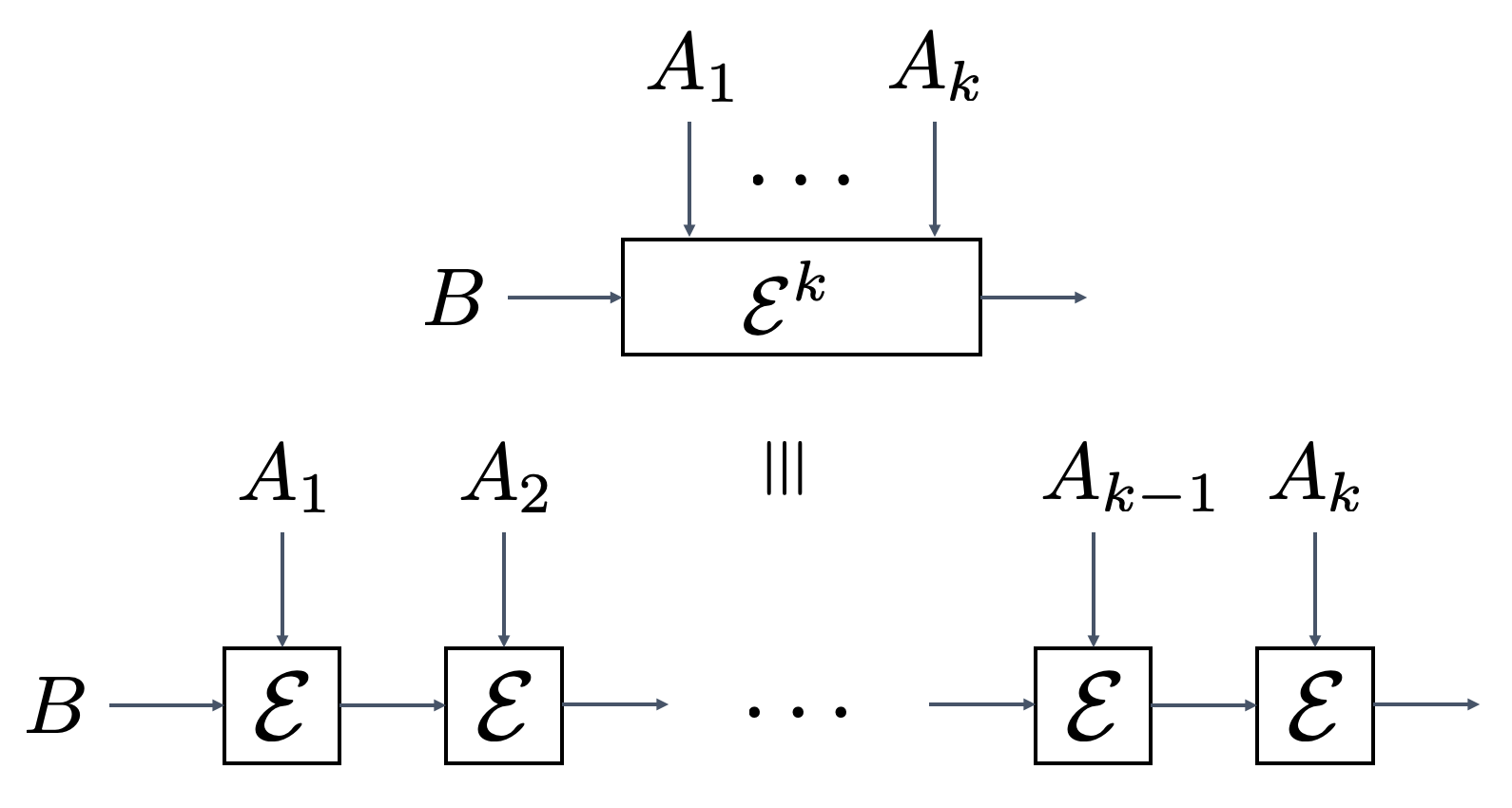}
    \caption{Representation of a $C^*$-finitely correlated state.}\label{Fig1}
\end{figure}

We use the symbol $\mathcal{E}$ instead of $\mathbb{E}$ when dealing with $C^*$ realizations to distinguish them among realizations.
Note, that whenever we find a realization with memory system being a $C^*$-algebra and generating map $\E$ being completely positive in the natural $C^*$ order, we can always find a realization satisfying the same that is of the form described in \cref{def:C*real}, even with $\rho_0$ being of full support, see \cite[Lem.~2.5]{Fannes1992}.

As before we will also use the notation $\mathcal{E}_A(B) = \cE(A\otimes B)$. We can apply the construction of the quotient realization \cref{thm:quotient}, that is we set
\begin{align}
\W=\{\mathcal{E}^{t^*}_A(\1_{\mathcal B})\ |\  A\in\mathcal{C}_{R}\}\,,\qquad \text{ and } \qquad \tilde{\W}=\{\rho_0\mathcal{E}^{t^*}_A\ |\ A\in\mathcal{C}_{L}\}
\end{align} 
 where we recall the notation $\mathcal{E}^k(A_1\otimes\dots\otimes A_k\otimes B)=\mathcal{E}_{A_1}\mathcal{E}_{A_2}\cdots\mathcal{E}_{A_k}(B)$ (see Figure \ref{Fig1}).
We also set 
\begin{align}
\V:=\W/({\W\cap\tilde{\W}^{\perp}})\subseteq \W
\end{align}
and define the projection 
$L:\W\rightarrow\V$.

Other than in the general case, the Hilbert Schmidt inner product lets us identify the quotient $\V$ with the orthogonal complement $\V'$ of $\W\cap\tilde\W^\perp$ in $\W$ and denote the unique map doing so as $R:\mathcal{\V\rightarrow \V'}$.
It satisfies $RL = \Pi_{\mathcal V'}$ where $\Pi_{\V'}$ is the orthogonal projector onto $\mathcal V'$.

\cref{thm:quotientM} then takes the following form:
\begin{lemma}\label{finvert}
    $U^\intercal\F$ is invertible on $\V'$ with inverse $(U^{\intercal}\F|_{\V'})^{-1} = RM^{-1}$
\end{lemma}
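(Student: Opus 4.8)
The plan is to read the result straight off Proposition~\ref{propquotientc}, after settling the one piece of bookkeeping that distinguishes the $C^*$ case: the identification, via $R$, of the abstract quotient $\V$ with the concrete subspace $\V'\subseteq\W$. First I would recall that Proposition~\ref{propquotientc}, applied to the $C^*$-realization $(\B,\1_\B,\cE,\rho_0)$ (so that the object called $V'$ there is now $\V$, and the role of $W$ is played by $\W$), furnishes an invertible similarity map $M:\V\to\mathbb{C}^m$ with
\[
ML = U^\intercal\F|_{\W},
\]
where $L:\W\to\V$ is the canonical quotient projection, whose kernel is $\ker L = \W\cap\tilde{\W}^{\perp}$.

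Next I would pin down the relation between $L$ and $R$. By construction $\V'$ is the orthogonal complement of $\ker L$ inside $\W$ for the Hilbert--Schmidt inner product, so $\W$ splits as the orthogonal direct sum $(\W\cap\tilde{\W}^{\perp})\oplus\V'$; consequently $L$ annihilates the first summand and maps $\V'$ bijectively onto the quotient $\V$, i.e.\ $L|_{\V'}:\V'\to\V$ is a linear isomorphism. Evaluating the defining property $RL=\Pi_{\V'}$ on $\V'$, where $\Pi_{\V'}$ acts as the identity, gives $R\,L|_{\V'}=\id_{\V'}$; since $L|_{\V'}$ is invertible this upgrades to $R=(L|_{\V'})^{-1}$, equivalently $L|_{\V'}=R^{-1}$.

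With these two facts in place the conclusion is immediate: restricting $ML=U^\intercal\F|_{\W}$ to $\V'$ yields $U^\intercal\F|_{\V'}=M\,L|_{\V'}=MR^{-1}$, which is invertible because $M$ and $R$ are, with inverse $(MR^{-1})^{-1}=RM^{-1}$, as claimed. The whole argument is essentially a single substitution; the only step requiring genuine care --- and hence the main (though modest) obstacle --- is verifying $R=(L|_{\V'})^{-1}$, that is, checking that $R$ genuinely inverts $L$ on $\V'$ rather than merely providing a section. This is precisely where the Hilbert--Schmidt structure of the $C^*$ setting is used, since it is what allows the abstract quotient $\V$ to be realized as the subspace $\V'\subseteq\W$ and thereby turns the similarity $M$ into an honest invertible map on a subspace of the memory algebra.
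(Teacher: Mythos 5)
Your proof is correct and follows exactly the route the paper intends: the lemma is stated as an immediate specialization of \cref{thm:quotientM} (via \cref{propquotientc}) to the $C^*$-realization, combined with the identity $RL=\Pi_{\V'}$, and your verification that $R=(L|_{\V'})^{-1}$ is precisely the bookkeeping the paper leaves implicit. No gaps.
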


\subsubsection{Empirical realizations}
It will be useful to also consider the following realizations, obtained by a similarity transformation of the observable realization. In the following $\hat{U}$ can be thought as a truncated left orthogonal map in the SVD of $\hat{\Omega}$, which is an estimate of $\Omega$ (cf.\  \cref{sec.statereconstruction}). While the realization constructed from $\hat{\Omega}$ will not be an exact realization of $\omega$, the \emph{empirical realization} constructed here is, and will be easier to compare with the estimate than the observable realization.

\begin{proposition}[Empirical realizations]\label{propempquasi}
Let $\omega$ be a translation invariant state with observable realization $(\mathbb{C}^m,e,\mathbb{K},\rho)$. For any map $\hat U:\mathbb{C}^{m}\rightarrow \C_{L}^*$ (real in the basis of self-adjoint elements) such that $\hat{U}^{\intercal}U$ is invertible, the quadruple $(\mathbb C^{m}, \tilde{e},\tilde{\mathbb{K}},\tilde{\rho})$ given by

\begin{align}
\tilde{e}&={\hat{U}}^{\intercal}\Omega(\1),\label{tildee2}\\
\tilde{\rho}&=\tau \Omega(\hat{U}^{\intercal}\Omega)^+,\label{tilderho}\\
\tilde{\mathbb{K}}_{A}&={\hat{U}}^{\intercal}(\Omega_{A})({\hat{U}}^{\intercal}\Omega)^+\,.\label{tildeKA}
\end{align}

is a realization of $\omega$, in fact

\begin{align}
\tilde{e}&=\hat{U}^{\intercal}U e,\label{tildeee}\\
\tilde{\rho}&=\rho (\hat{U}^{\intercal}U)^{-1},\label{tilderhoo}\\
\tilde{\mathbb{K}}_A&=\hat{U}^{\intercal}U\mathbb{K}_{A}(\hat{U}^{\intercal}U)^{-1}.\label{tildek}
\end{align}

Any such realization is called \textbf{empirical realization}.
\end{proposition}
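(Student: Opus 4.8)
The plan is to show that the empirical realization $(\mathbb{C}^m,\tilde e,\tilde{\mathbb{K}},\tilde\rho)$ is simply the observable realization $(\mathbb{C}^m,e,\mathbb{K},\rho)$ of \cref{prop.observablequasireal} conjugated by the matrix $N:=\hat U^\intercal U$, which is invertible by hypothesis. Once the three conjugation identities $\tilde e=Ne$, $\tilde\rho=\rho N^{-1}$ and $\tilde{\mathbb{K}}_A=N\mathbb{K}_A N^{-1}$ are established, being a realization is immediate: in the product $\tilde\rho\tilde{\mathbb{K}}_{A_1}\cdots\tilde{\mathbb{K}}_{A_n}\tilde e$ the factors $N^{-1}N$ telescope and one recovers $\rho\mathbb{K}_{A_1}\cdots\mathbb{K}_{A_n}e=\omega(A_1\otimes\cdots\otimes A_n)$, and the fixed-point conditions $\tilde\rho\tilde{\mathbb{K}}_{\1}=\tilde\rho$, $\tilde{\mathbb{K}}_{\1}\tilde e=\tilde e$ transfer in the same way. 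So the entire task is to derive these identities from the definitions \eqref{tildee2}–\eqref{tildeKA}.

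The key algebraic step, which I expect to be the main obstacle, is a pseudoinverse identity. Using the SVD $\Omega=UDO$ with $U^\intercal U=OO^\intercal=\1_{\mathbb{C}^m}$ and $D$ invertible (as $\Omega$ has rank $m$), we have $U^\intercal\Omega=DO$ and $\hat U^\intercal\Omega=\hat U^\intercal U\,DO=NDO$; both are full-row-rank $m\times\dim\C_R$ maps with the same row space $\operatorname{range}(O^\intercal)$. I claim
\[
(\hat U^\intercal\Omega)^+=(U^\intercal\Omega)^+N^{-1},\qquad\text{i.e.}\qquad (NDO)^+=O^\intercal D^{-1}N^{-1}.
\]
I would prove this by checking the four Moore–Penrose conditions for $X:=O^\intercal D^{-1}N^{-1}$ against $A:=NDO$: using $OO^\intercal=\1$ one gets $AX=\1_{\mathbb{C}^m}$ (trivially symmetric) and $XA=O^\intercal O$, the symmetric orthogonal projector onto the common row space; since also $(DO)^+=O^\intercal D^{-1}$, the displayed identity follows. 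The point requiring care is exactly that pseudoinverses do not factor through products in general; the identity is valid here only because $\hat U^\intercal\Omega$ and $U^\intercal\Omega$ differ by left multiplication with the invertible $N$ and therefore share row space.

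The remaining ingredient is to absorb the orthogonal projector $P_U=UU^\intercal$ onto $V=\operatorname{range}(\Omega)$. I would note that $\Omega(\1)=\Psi_{\1}\in V$ and that $\operatorname{range}(\Omega_A)\subseteq V$: indeed $\Omega_A(X)=\Psi_{A\otimes X}\in V$ for every $X\in\C_R$, which is precisely the shift-invariance of $V$ guaranteed by the choice of $\Lambda$ (i.e.\ by $t^*$ being large enough, see \cref{appendix.tstar}). Hence $P_U\Omega(\1)=\Omega(\1)$ and $P_U\Omega_A=\Omega_A$, giving $\hat U^\intercal\Omega(\1)=\hat U^\intercal UU^\intercal\Omega(\1)=Ne$ and $\hat U^\intercal\Omega_A=\hat U^\intercal UU^\intercal\Omega_A=N\,U^\intercal\Omega_A$.

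Combining the two ingredients closes the argument:
\[
\tilde e=\hat U^\intercal\Omega(\1)=Ne,\qquad
\tilde\rho=\tau\Omega(\hat U^\intercal\Omega)^+=\tau\Omega(U^\intercal\Omega)^+N^{-1}=\rho N^{-1},
\]
\[
\tilde{\mathbb{K}}_A=\hat U^\intercal\Omega_A(\hat U^\intercal\Omega)^+=N\,U^\intercal\Omega_A(U^\intercal\Omega)^+N^{-1}=N\mathbb{K}_A N^{-1},
\]
which are exactly the forms \eqref{tildeee}–\eqref{tildek} with $N=\hat U^\intercal U$; by the telescoping remark above, $(\mathbb{C}^m,\tilde e,\tilde{\mathbb{K}},\tilde\rho)$ is therefore a realization of $\omega$.
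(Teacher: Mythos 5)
Your proof is correct and follows exactly the route the paper intends (the paper states the proposition without an explicit proof): you establish the conjugation identities $\tilde e = Ne$, $\tilde\rho = \rho N^{-1}$, $\tilde{\mathbb K}_A = N\mathbb K_A N^{-1}$ with $N=\hat U^\intercal U$ and let the telescoping do the rest. The two ingredients you isolate — the Moore–Penrose identity $(N\,DO)^+ = (DO)^+N^{-1}$, valid because $N$ is invertible and verified via the four defining conditions, and the absorption of $P_U=UU^\intercal$ using $\operatorname{range}(\Omega(\1)),\operatorname{range}(\Omega_A)\subseteq V$ (which the paper already uses in the proof of Proposition~\ref{prop.observablequasireal}) — are precisely the details the paper leaves implicit, and both are handled correctly.
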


\begin{proof}
The proof is by explicit substitution, noting that $({\hat{U}}^{\intercal}\Omega)^+=({{U}}^{\intercal}\Omega)^+(\hat{U}^{\intercal}U)^{-1}$ and that $UU^{\intercal}\Omega_A=\Omega_A$ because the images of $\Omega$ and $\Omega_{(\cdot)}$ coincide.
\end{proof}

The following proposition links empirical realizations with quotient realizations, and in particular with quotient realizations of completely positive ones. 
Let us define the map
\begin{align}\label{hatM}
\hat{M}:=\hat{U}^\intercal UM\,,
\end{align}
where $M$ is the linear map connecting a quotient realization with the observable realization, as in Proposition~\ref{propquotientc}. The following proposition easily follows.

\begin{proposition}[Empirical realizations and quotient realization]
For an empirical realization as in Proposition \ref{propempquasi}, and a quotient realization as in Proposition \ref{propquotientc}, we have

\begin{align}
\tilde{e}&=\hat{M} e',\label{tildee}\\
\tilde{\mathbb{K}}_A&=\hat{U}^{\intercal}U\mathbb{K}_{A}(\hat{U}^{\intercal}U)^{-1}
=\hat{M}\mathbb{K}'_{A}\hat{M}^{-1}
,\label{tildeK}\\
\tilde{\rho}&=\rho' \hat{M}^{-1}\label{tilderh}.
\end{align}
so that
\begin{align}
\tilde{e}&=\hat{M} L u,\label{tildee'}\\
\tilde{\mathbb{K}}_A\hat{M} L&=\hat{M}\mathbb{K}'_{A} L=\hat{M}L \E_{A}|_{W},\label{KAtilde}\\
\tilde{\rho}\hat{M}L&=\sigma|_{W},\label{eqrhorel}
\end{align}
and, hence,
\begin{align}\label{eqKerelation}
\tilde{\mathbb{K}}^t_{X_1\otimes\cdots\otimes X_t}\tilde{e}=\tilde{\mathbb{K}}_{X_1}\cdots\tilde{\mathbb{K}}_{X_t}\hat MLu=\hat{M}
L\mathbb{E}^t_{X_1\otimes\cdots\otimes X_t}u\,.
\end{align}
\end{proposition}

\section{Operator systems and realizations}\label{sec:opsysqr}

\subsection{Operator systems and completely bounded norms}
\label{appendixopsys}
The memory system $V$ of a finitely correlated state $\omega$ on $\A$ with regular realization $(V, e,\E, \rho)$ inherits the structure of an operator system \cite[Lem.~A.1]{Fannes1992}.
In brief, it is constructed from the positivity structure on $\mathbb M_n(\A)$ by push-forward with the map $\Psi:\A_R\to V$ and its amplifications.

We will summarize some facts about the theory of operator systems in this section.
For more details see \cite{paulsen_completely_2003}.

\begin{definition}\label{def:opsys}
    Let $V$ be a complex vector space with a (conjugate-linear) involution $*:V\to V$ and a family $C$ of cones $C_n\subset \mathbb M_n(V) \cong \mathbb M_n\otimes V$ defining an order on $\mathbb M_n(V)$.
    We then obtain a natural involution on $\mathbb M_n(V)$.
    Let $e\in V$.
    We call $V \equiv (V, C, e)$ an \emph{(abstract) operator system} if the following are satisfied
    \begin{enumerate}
        \item 
       Each $X\in C_n$ is self-adjoint.
        \item
        $e$ is an Archimedean matrix order unit for the cones $C_n$.
        \item
        For each $k\times n$-matrix $M\in\mathbb M_{k,n}$ it holds that $M^*C_k M\subset C_n$.
    \end{enumerate}
\end{definition}
Since $e$ is a matrix order unit, by definition for every $n$ and every self-adjoint $X\in\mathbb M_n(V)$ there exists $r>0$ such that $r e^{(n)} \ge X$ where $e^{(n)} = \1_{\mathbb M_n}\otimes e$.
For $X\in\mathbb M_n(V)$ let
$$S(X) = \begin{pmatrix}
    0 & X \\
    X^* & 0
\end{pmatrix}.$$
Note that $S(X) = S(X)^*$.
The \emph{order norm} of $X\in \mathbb M_n(V)$ is then defined as
$$\|X\|_{e^{(n)}} = \inf\{r\in\RR_+\ |\ r e^{(2n)}\ge S(X)\}\ge 0.$$
$e$ being Archimedean implies that the order norms are, in fact, norms and that the corresponding cones are closed.
For self-adjoint $X$ the order norm is also given as $\|X\|_{e^{(n)}} = \inf\{r>0\ |\ re^{(n)} \ge X, -X\}$.
It always holds that $\|e^{(n)}\|_{e^{(n)}} = 1$.
For a linear map $\alpha:V\to W$ between operator systems we denote its $n$-th amplification by $\alpha_n = \id_{\mathbb{M}_n}\otimes\ \alpha:\mathbb M_n(V)\to \mathbb M_n(W)$.
$\alpha$ is called \emph{positive} if it maps positive elements of $V$ to positive elements of $W$ and \emph{completely positive} if each of its amplifications is positive.
We then write $\alpha\in\CP(V,W)$.
It is \emph{unital} if it maps the order unit of $V$ to the order unit of $W$.
If $\alpha$ is completely positive then $\alpha$ is bounded with its operator norm satisfying $\|\alpha\| = \|\alpha(e_V)\|_{e_W}$.
The \emph{Choi-Effros theorem} \cite[Thm.\ 13.1]{paulsen_completely_2003} ensures that every abstract operator system can be represented faithfully as a \emph{concrete operator system}, meaning a $*$-invariant subspace of $\B(\H)$ for some Hilbert space $\H$.

The estimates in \cref{sec.statereconstruction} 
also require us to use norms of complete boundedness.
The natural setting for talking about these norms is that of an \emph{operator space}, that is a closed subspace of $\B(\H)$.
For more information see \cite{pisierIntroductionOperatorSpace2003}.
There is an abstract characterization of this structure, similar to the abstract characterization of operator systems above, such that every abstract operator space can be represented as a concrete operator space by \emph{Ruan's theorem} \cite[Thm.\ 13.4]{paulsen_completely_2003}.
The structure of an (abstract) operator system induces the structure of an (abstract) operator space.
A linear map $\alpha:V\to W$ between operator spaces is called \emph{completely bounded} if all its amplifications $\alpha_n$ are bounded and the sequence of operator norms $\|\alpha_n\|$ is uniformly bounded.
The corresponding \emph{norm of complete boundedness} is then
$$\|\alpha\|_{cb} \equiv \|\alpha\|_{\CB(V,W)} = \sup_n \|\alpha_n\|_{\B(\mathbb M_n(V), \mathbb M_n(W))}.$$
If $\alpha$ is a completely positive map between operator systems then $\|\alpha\|_{cb} = \|\alpha\| = \|\alpha(e_V)\|_{e_W}$.
If $\alpha$ is unital it is thus automatically completely contractive.
Furthermore, the following bound is true  \cite[Thm.~3.8]{pisierIntroductionOperatorSpace2003}:
\begin{lemma}\label{lemmaamp}
    If $\alpha$ has rank $m$ then $\|{\alpha}\|_{\CB(V,W)} \le m\|\alpha\|_{\B(V,W)}$.
    In particular, this is the case whenever $V$ or $W$ is finite-dimensional.
\end{lemma}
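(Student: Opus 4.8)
The plan is to reduce the estimate to a sum of rank-one maps, for which the completely bounded norm and the ordinary norm coincide, and then to keep both the number and the size of the summands under control by choosing an economical decomposition of $\alpha$.

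First I would reduce to the case $\dim W = m$. Since $\alpha$ has rank $m$, its image $W_0:=\operatorname{im}\alpha$ is $m$-dimensional, and the inclusion $W_0\hookrightarrow W$ is completely isometric for the induced operator space structure; composing with a complete isometry changes neither $\|\cdot\|$ nor $\|\cdot\|_{cb}$, so regarding $\alpha$ as a surjection onto $W_0$ loses nothing. The genuinely operator-space input I would then invoke is the standard fact that every bounded functional $\varphi$ on an operator space is automatically completely bounded with $\|\varphi\|_{cb}=\|\varphi\|$ (see \cite{paulsen_completely_2003}). From this one reads off that a rank-one map $u=\varphi(\cdot)\,y$ with $\varphi\in V^*$, $y\in W_0$ satisfies $\|u\|_{cb}=\|u\|=\|\varphi\|\,\|y\|$: on amplifications one has $u_k(X)=(\mathrm{id}_{\mathbb M_k}\otimes\varphi)(X)\otimes y$, and since $\|C\otimes y\|=\|C\|\,\|y\|$ in $\mathbb M_k(W_0)$ for scalar matrices $C\in\mathbb M_k$, the bound $\|u_k\|\le\|\varphi\|_{cb}\,\|y\|=\|\varphi\|\,\|y\|$ holds uniformly in $k$.

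The remaining step is to decompose $\alpha$ into exactly $m$ rank-one maps with controlled norms. Here I would apply Auerbach's lemma to the $m$-dimensional Banach space $(W_0,\|\cdot\|)$: it produces a basis $w_1,\dots,w_m$ with $\|w_i\|=1$ together with a biorthogonal system $\psi_1,\dots,\psi_m\in W_0^*$ satisfying $\psi_i(w_j)=\delta_{ij}$ and $\|\psi_i\|=1$. Setting $\varphi_i:=\psi_i\circ\alpha\in V^*$ yields $\alpha=\sum_{i=1}^m\varphi_i(\cdot)\,w_i$ with $\|\varphi_i\|\le\|\psi_i\|\,\|\alpha\|=\|\alpha\|$. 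The triangle inequality for the cb norm together with the rank-one identity then gives
$$\|\alpha\|_{cb}\le\sum_{i=1}^m\|\varphi_i(\cdot)\,w_i\|_{cb}=\sum_{i=1}^m\|\varphi_i\|\,\|w_i\|\le m\,\|\alpha\|,$$
which is the assertion; the ``in particular'' clause follows because $\mathrm{rank}\,\alpha\le\min(\dim V,\dim W)$ is finite as soon as either space is finite-dimensional.

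The main obstacle is getting the clean factor $m$ rather than an uncontrolled constant: an arbitrary writing of $\alpha$ as a sum of $m$ rank-one maps only bounds $\|\alpha\|_{cb}$ by $\sum_i\|\varphi_i\|\,\|w_i\|$, a quantity not comparable to $\|\alpha\|$ unless the decomposition is chosen carefully. Auerbach's lemma is precisely what simultaneously normalizes the $w_i$ and keeps each $\|\varphi_i\|=\|\psi_i\circ\alpha\|\le\|\alpha\|$, closing the gap. The only other point needing care is the automatic complete boundedness of functionals, which is the essential operator-space ingredient and has no purely Banach-space substitute; everything else is routine.
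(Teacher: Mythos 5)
Your argument is correct and is exactly the standard proof of this fact: the paper itself gives no proof, citing it directly to Pisier's book, and the argument there is the same combination of (i) automatic complete boundedness of functionals with $\|\varphi\|_{cb}=\|\varphi\|$, (ii) the resulting identity $\|u\|_{cb}=\|\varphi\|\,\|y\|$ for rank-one maps, and (iii) an Auerbach basis of the $m$-dimensional image to split $\alpha$ into $m$ rank-one pieces each of norm at most $\|\alpha\|$. All steps check out, including the reduction to $W_0=\operatorname{im}\alpha$ with its subspace operator space structure and the norm identity $\|C\otimes y\|=\|C\|\,\|y\|$ on matrix amplifications.
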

We also have the following result. It is basically \cite[Prop.~1.12]{pisierIntroductionOperatorSpace2003} which is valid also for abstract operator spaces by Ruan's theorem.
\begin{lemma}\label{lemmaamp2}
    If $\alpha$ is a map from an (abstract) operator space $V$ to $\mathbb{M}_n$, then $\|{\alpha}\|_{cb} = \|\alpha_n\|_{\B(\mathbb{M}_n(V),\mathbb{M}_{n^2})}$.
\end{lemma}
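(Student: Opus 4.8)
The plan is to prove this as an instance of Smith's lemma, following the argument of \cite[Prop.~1.12]{pisierIntroductionOperatorSpace2003} but phrased so that it applies to an \emph{abstract} operator space. First, by Ruan's theorem \cite[Thm.~13.4]{paulsen_completely_2003} I would fix a completely isometric representation $V\subseteq\B(\H)$, so that for every $k$ the norm on $\mathbb{M}_k(V)\subseteq\mathbb{M}_k(\B(\H))=\B(\mathbb{C}^k\otimes\H)$ is the inherited operator norm; in particular the operator-space axioms become honest operator-norm inequalities. The bound $\|\alpha_n\|\le\|\alpha\|_{cb}$ is immediate from the definition of the cb norm, and since the sequence $(\|\alpha_k\|)_k$ is non-decreasing (embed $\mathbb{M}_k$ as an upper-left corner of $\mathbb{M}_{k+1}$), it suffices to prove $\|\alpha_k\|\le\|\alpha_n\|$ for every $k>n$.

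So I would fix $k>n$ and $X=[x_{ij}]\in\mathbb{M}_k(V)$ with $\|X\|\le 1$, set $T:=\alpha_k(X)\in\mathbb{M}_k(\mathbb{M}_n)=\mathbb{M}_{kn}$, and by compactness choose unit vectors $\xi,\eta\in\mathbb{C}^{kn}\cong\mathbb{C}^k\otimes\mathbb{C}^n$ with $\|T\|=|\langle\xi,T\eta\rangle|$. The key observation is purely linear-algebraic: reshaping $\xi$ (respectively $\eta$) into a $k\times n$ matrix shows that its Schmidt rank across the bipartition $\mathbb{C}^k\otimes\mathbb{C}^n$ is at most $n$, so $\xi\in W_\xi\otimes\mathbb{C}^n$ and $\eta\in W_\eta\otimes\mathbb{C}^n$ for subspaces $W_\xi,W_\eta\subseteq\mathbb{C}^k$ of dimension at most $n$. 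This is exactly why a single $n$-fold amplification is enough, and it is the heart of the argument.

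Next I would use these supports to compress $T$ to size $n^2\times n^2$. Choosing $a\in\mathbb{M}_{n,k}$ with $a^*$ an isometry onto $W_\xi$ and $b\in\mathbb{M}_{k,n}$ an isometry onto $W_\eta$, both are contractions, and setting $\xi'=(a\otimes\id_n)\xi$, $\eta'=(b^*\otimes\id_n)\eta$ produces unit vectors in $\mathbb{C}^n\otimes\mathbb{C}^n$ satisfying $(a^*\otimes\id_n)\xi'=\xi$ and $(b\otimes\id_n)\eta'=\eta$. I then define $Y:=(a\otimes\id_V)\,X\,(b\otimes\id_V)\in\mathbb{M}_n(V)$. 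Two facts make everything click: by linearity of $\alpha$ and since $a,b$ are scalar matrices, $\alpha_n(Y)=(a\otimes\id_n)\,\alpha_k(X)\,(b\otimes\id_n)$; and by the operator-space axiom, available through the concrete representation, $\|Y\|\le\|a\|\,\|X\|\,\|b\|\le 1$.

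Finally I would combine these into
\[
\|\alpha_k(X)\|=|\langle\xi,\alpha_k(X)\eta\rangle|=|\langle\xi',\alpha_n(Y)\eta'\rangle|\le\|\alpha_n(Y)\|\le\|\alpha_n\|\,\|Y\|\le\|\alpha_n\|,
\]
where the middle equality uses $(a^*\otimes\id_n)\xi'=\xi$ and $(b\otimes\id_n)\eta'=\eta$. Taking the supremum over all $X$ with $\|X\|\le 1$ gives $\|\alpha_k\|\le\|\alpha_n\|$, hence $\|\alpha\|_{cb}=\sup_k\|\alpha_k\|=\|\alpha_n\|$. The only genuinely delicate points are the dimension count on the Schmidt rank, which pins down why $n$ amplifications suffice, and the verification that the module action $Y=(a\otimes\id_V)X(b\otimes\id_V)$ is norm-contractive — precisely the step where the passage from an abstract to a concrete operator space via Ruan's theorem is needed.
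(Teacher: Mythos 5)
Your argument is correct and is exactly the route the paper takes: the paper simply cites \cite[Prop.~1.12]{pisierIntroductionOperatorSpace2003} (Smith's lemma) together with Ruan's theorem to pass from the abstract to the concrete setting, and your write-up is a faithful expansion of that proof, with the Schmidt-rank bound on the norming vectors and the compression by scalar contractions being the standard key steps. No gaps.
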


\begin{remark}
    While there is a clear link between the structures of operator systems and operator spaces, the corresponding categories do not behave exactly the same.
    For example, the natural operator system structure induced on the quotient of an operator system does not in general coincide with the quotient operator space structure.
    A second subtlety concerns conventions in the literature.
    Concrete operator systems and, in particular, spaces are often assumed to be closed (hence complete) in norm.
    For the abstract structures, in particular operator systems, this is often not required (and then also not true when represented faithfully).
    This is certainly not a real problem in practice but should be kept in mind to avoid confusion.
    Both subtleties are illustrated by \cite[Prop.~4.5]{KavrukEtAl2013}.
\end{remark}

Let $\H$ be a Hilbert space.
Then there exists a natural operator space structure on $\H$ satisfying that the  (conjugate linear) identification $\mathbb M_n(\H) \cong \mathbb M_n(\H)^*$ is completely isometric \cite[Ch.~7]{pisierIntroductionOperatorSpace2003}.
We will implicitly use this operator system structure whenever we talk about complete boundedness of maps on Hilbert spaces.
It holds \cite[Prop.~7.2]{pisierIntroductionOperatorSpace2003}:
\begin{lemma}\label{eq:euclidan_diamond}
    If $\H$ and $\K$ are Hilbert spaces and $T\in\B(\H,\K)$ then $\|T\|_{cb} = \|T\|$.
\end{lemma}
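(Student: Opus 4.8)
The plan is to prove the two inequalities separately. The bound $\|T\|\le\|T\|_{cb}$ is immediate, taking $n=1$ in the supremum defining the completely bounded norm, so the content is the reverse estimate $\|T\|_{cb}\le\|T\|$, i.e.\ that the self-dual (operator Hilbert space) structure emphasised just before the lemma is \emph{homogeneous}. I would obtain this by recognising that structure as the complex interpolation space halfway between the column and row operator space structures $\H_c$ and $\H_r$, on each of which the claim is elementary, and then invoking log-convexity of the completely bounded norm under interpolation.

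First I would treat the column space. For $x=[x_{ij}]\in\mathbb M_n(\H_c)$ the matrix norm is $\|x\|^2 = \bigl\|\,[\sum_k \langle x_{ki},x_{kj}\rangle]_{ij}\,\bigr\|_{\mathbb M_n}$, the norm of a positive matrix, and applying the amplification gives
\begin{equation}
\|(\id_{\mathbb M_n}\otimes T)x\|^2 = \bigl\|\,[\textstyle\sum_k \langle T^{\dagger} T x_{ki},x_{kj}\rangle]_{ij}\,\bigr\|_{\mathbb M_n}.
\end{equation}
Since $0\le T^{\dagger} T\le \|T\|^2\,\id_{\H}$ and the linear map $A\mapsto [\sum_k\langle A x_{ki},x_{kj}\rangle]_{ij}$ is positive, hence monotone, the right-hand side is at most $\|T\|^2\,\bigl\|[\sum_k\langle x_{ki},x_{kj}\rangle]\bigr\| = \|T\|^2\|x\|^2$. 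Thus $\|T\|_{cb,\H_c\to\K_c}\le\|T\|$, and with the trivial bound this is an equality; the same computation with rows in place of columns gives $\|T\|_{cb,\H_r\to\K_r}=\|T\|$.

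Finally I would use Pisier's identification of the self-dual structure as $(\H_r,\H_c)_{1/2}$ together with the interpolation inequality for completely bounded maps, yielding $\|T\|_{cb}\le \|T\|_{cb,\H_r\to\K_r}^{1/2}\,\|T\|_{cb,\H_c\to\K_c}^{1/2}=\|T\|$. The main obstacle is precisely this last step: one must verify that $T$ (or a conjugate of it, forced by the conjugate-linear self-duality $\mathbb M_n(\H)\cong\mathbb M_n(\H)^*$ recorded before the lemma) is an admissible morphism of the interpolation couple and that $(\H_r,\H_c)$ is a compatible couple, so that the log-convexity bound applies. The conjugation is harmless for the final estimate, since $\|\bar T\|=\|T^{\dagger}\|=\|T\|$, so every endpoint norm equals $\|T\|$ regardless; what requires care is only the bookkeeping identifying the completely isometric self-dual structure with the half-way interpolation space. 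An alternative, entirely self-contained route avoiding interpolation would insert the explicit Gram-matrix formula for the $\mathbb M_n$-norm of the self-dual structure and repeat the operator-monotonicity estimate directly, but reproducing that formula and its positivity is more delicate than the two clean endpoint computations above.
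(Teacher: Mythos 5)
Your proof is correct, but note that the paper does not actually prove this lemma: it is quoted verbatim from Pisier's book (Prop.~7.2 of \emph{Introduction to Operator Space Theory}), whose own argument is the direct one you relegate to your final sentence. There, the self-dual structure on $\H$ is \emph{defined} by the Gram-matrix formula $\|[x_{ij}]\|_{\mathbb M_n(\H)}=\|[\langle x_{ij},x_{kl}\rangle]\|_{\mathbb M_{n^2}}^{1/2}$, and homogeneity follows from exactly the operator-monotonicity computation you carry out for the column space: for a contraction $T$ one has $[\langle Tx_{ij},Tx_{kl}\rangle]\le[\langle x_{ij},x_{kl}\rangle]$ as positive matrices, hence the norms are ordered. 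So the ``delicate'' route you set aside is in fact the shorter one, requiring no input beyond the defining formula. Your route instead passes through the two endpoint computations on $\H_c$ and $\H_r$ (both correct: the map $A\mapsto[\sum_k\langle Ax_{ki},x_{kj}\rangle]_{ij}$ is positive, hence monotone, and $T^{\dagger}T\le\|T\|^2\,\id_\H$ does the rest) and then invokes two substantial theorems of Pisier: the complete isometry $(\H_r,\H_c)_{1/2}\cong \operatorname{OH}(\H)$ and log-convexity of $\|\cdot\|_{cb}$ under complex interpolation. Both are true and the compatibility conditions you flag are trivially met ($\H_r$ and $\H_c$ share the underlying space and $T$ is the same linear map at both endpoints), so the argument closes; the conjugate-linearity of the self-duality is, as you say, immaterial since every endpoint norm equals $\|T\|$. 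What the interpolation route buys is conceptual clarity about where $\operatorname{OH}$ sits between row and column; what it costs is that the identification $(\H_r,\H_c)_{1/2}=\operatorname{OH}$ is at least as deep as the homogeneity statement itself, so as a proof of this lemma it is circular in spirit though not in logic.
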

\hide{\begin{corollary}\label{eq:U_euclideannorm}
    If $T\in\B(\H,\K)$ is a partial isometry (or, for that matter, any other contraction) then $\|T\|_{cb}\le 1$.
\end{corollary}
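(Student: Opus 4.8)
The plan is to prove the two inequalities $\|T\|\le\|T\|_{cb}$ and $\|T\|_{cb}\le\|T\|$ separately. The first is immediate: taking $n=1$ in the definition $\|T\|_{cb}=\sup_n\|\id_{\mathbb M_n}\otimes T\|$ recovers the ordinary operator norm, so $\|T\|_{cb}\ge\|T\|$. All the content is in the reverse bound, for which I would exploit the distinguished interpolation description of the operator Hilbert space $OH(\H)$, namely the self-dual structure singled out by the completely isometric conjugate-linear identification $\mathbb M_n(\H)\cong\mathbb M_n(\H)^*$.

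Concretely, I would use that $OH(\H)=(\mathsf R(\H),\mathsf C(\H))_{1/2}$ completely isometrically, where $\mathsf R$ and $\mathsf C$ are the row and column operator space structures. The argument then has three ingredients. First, on the column space $\mathsf C(\H)\cong\B(\mathbb C,\H)$ the operator $T$ acts by left multiplication $a\mapsto Ta$, and left multiplication by a fixed bounded operator is completely bounded with cb norm equal to its operator norm; hence $\|T:\mathsf C(\H)\to\mathsf C(\K)\|_{cb}=\|T\|$. Symmetrically, realizing $T$ as a right multiplication on the row spaces gives $\|T:\mathsf R(\H)\to\mathsf R(\K)\|_{cb}=\|T\|$. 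Second, complete boundedness interpolates: since $\mathbb M_n\big((X_0,X_1)_\theta\big)=(\mathbb M_n(X_0),\mathbb M_n(X_1))_\theta$ isometrically, the complex interpolation bound applied level by level and taking the supremum over $n$ yields
\[
\|T:(X_0,X_1)_\theta\to(Y_0,Y_1)_\theta\|_{cb}\le\|T:X_0\to Y_0\|_{cb}^{1-\theta}\,\|T:X_1\to Y_1\|_{cb}^{\theta}.
\]
Third, I would specialize to the row/column pairs and $\theta=1/2$, obtaining
\[
\|T:OH(\H)\to OH(\K)\|_{cb}\le\|T\|^{1/2}\|T\|^{1/2}=\|T\|,
\]
which closes the argument.

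The main obstacle is not the interpolation bookkeeping but the two structural inputs about $OH$: that $OH(\H)=(\mathsf R(\H),\mathsf C(\H))_{1/2}$ completely isometrically, and that $\mathbb M_n$ commutes with complex interpolation of operator spaces. These are exactly the deep facts developed in Pisier's treatment, which is why the cleanest route is to invoke them (the cited Prop.~7.2); the multiplication-operator computations on $\mathsf R$ and $\mathsf C$ are then routine. An alternative, purely computational route avoids interpolation by using the explicit Gram-matrix formula for the $OH$ matrix norm, which schematically reads
\[
\|x\|_{\mathbb M_n(OH)}^2=\big\|[\langle x_{ij},x_{kl}\rangle]_{(i,k),(j,l)}\big\|,
\]
and then observing that $\id_{\mathbb M_n}\otimes T$ sends the family $\{x_{ij}\}$ to $\{Tx_{ij}\}$: since $T^*T\preceq\|T\|^2\,\mathrm{id}$, the corresponding Gram matrix satisfies $[\langle Tx_{ij},Tx_{kl}\rangle]\preceq\|T\|^2[\langle x_{ij},x_{kl}\rangle]$ in the positive-semidefinite order, giving $\|(\id_{\mathbb M_n}\otimes T)x\|_{OH}\le\|T\|\,\|x\|_{OH}$ for every $n$. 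I would nonetheless keep interpolation as the primary argument, since it makes the symmetric role of row and column transparent and matches the literature.
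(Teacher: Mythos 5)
Your proposal is correct, but it takes a much longer route than the paper needs. In the source, this corollary sits immediately after Lemma~\ref{eq:euclidan_diamond} (Pisier's Prop.~7.2), which already asserts $\|T\|_{cb}=\|T\|$ for any $T\in\B(\H,\K)$ with respect to the self-dual operator space structure on Hilbert spaces; the corollary is then a one-line consequence, since a partial isometry (or any contraction) satisfies $\|T\|\le 1$. What you have done instead is essentially re-prove that lemma: your interpolation argument via $OH(\H)=(\mathsf R(\H),\mathsf C(\H))_{1/2}$, the commutation of $\mathbb M_n$ with complex interpolation, and the cb-norm computations for left/right multiplication on column and row spaces are all sound, as is your alternative Gram-matrix argument (the key point $[\langle Tx_{ij},Tx_{kl}\rangle]\preceq\|T\|^2[\langle x_{ij},x_{kl}\rangle]$ follows from $T^*T\preceq\|T\|^2\,\mathrm{id}$ and gives the bound directly from Pisier's formula for the $OH$ matrix norms). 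The trade-off is clear: the paper's derivation is immediate but leans entirely on the cited result, whereas yours is more self-contained in spirit yet still imports inputs of comparable depth (the identification of $OH$ as the interpolation midpoint, or the Gram-matrix norm formula). If you want a genuinely elementary proof, the Gram-matrix route is the one to flesh out; otherwise, simply quoting the preceding lemma is the intended argument.
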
}

\subsection{Operator systems and finitely correlated states}\label{sec.opsysfcs}

The memory system of a finitely correlated state $\omega$ is by definition a finite-dimensional operator system (cf.\ \cref{appendixopsys}), which means that it can be represented as a finite-dimensional subspace of some algebra $\B(\H)$.
However, the underlying Hilbert space $\H$ does not need to be finite-dimensional, and indeed there are examples of operator systems that can only be embedded into infinite-dimensional $C^*$-algebras (see \cite[Thm.~2.4]{FarenickPaulsen2012} together with the fact that the group $C^*$-algebra of a free group is infinite-dimensional).
Moreover, an arbitrary finitely correlated state is not necessarily $C^*$-finitely correlated (in fact, already in the abelian case~\cite{fanizza2023quantum} there are counterexamples, but we don't know of properly non-commutative examples).
Now, it is true that $C^*$-finitely correlated states are weak-$*$-dense in all translation invariant states \cite[Prop.~2.6]{Fannes1992}.
Thus, we cannot distinguish between $C^*$-finitely correlated states and a generic translation invariant state by measuring only finitely many observables up to precision $\epsilon_{HS} >0$.
But since the memory size needed to realize these approximating $C^*$-finitely correlated states will usually diverge with the size of the region where the observables are supported, we are also interested in extending the algorithm directly to the case of a general finitely correlated state.

Let $\Psi_{n}\coloneqq \id_{\mathbb M_n}\otimes\Psi$, and $V_n\coloneqq\{\Psi_{n}(X),X\in \mathbb M_n(\mathcal{A}_R)\}$. The order we consider in the following is given by the cones 
$$C_n\coloneqq\overline{\{\Psi_{n}(X)|X\in \mathbb M_n(\mathcal{A}_R),X\geq 0\}},$$
where the closure is in weak-* topology, with units $e_n=\Psi_{n}(\1)$. It is immediate to check that this construction gives a bona-fide operator system.
The construction can also be dualized which gives us a minimal and a maximal order. This will implicitly be used in the proof of \cref{lemnormineq} and is commented on in \cref{remminmaxorder}.
The order gives rise to an order norm $\|\Psi_X\|_{\Psi_{\1}}$ (see Appendix~\ref{appendixopsys}) that in the case of a $C^*$ order yields back the $C^*$ norm. In addition to this generalized operator norm, we will also need an analogue of the the Hilbert-Schmidt norm for the estimates in \cref{secparest}.
Since we cannot guarantee the memory systems to be represented on a finite-dimensional Hilbert space
we cannot directly use the Hilbert-Schmidt norm.
However, we can use the maps $\Psi_X$ defined in Equation \eqref{PsiX} to define an inner product. For $X\in\A_{[1,t^*]}$ set $\xi_X(i) = \tr{XX_i}$ and $X' = \sum_i \Psi_X(X_i) X_i$.
Then clearly $\Psi_X(Y) = \tr{X' Y}$ and $\xi_{X'}(i) = \Psi_X(X_i)$, so $\xi_{X'} = \ddot{\Omega}\xi_X$ where $\ddot{\Omega}_{ij} = \omega(X_i\otimes X_j)$ (`Ömega') is $\Omega$ written in coordinates.
Furthermore, $\Psi_{X} = \Psi_Z$ if and only if $\xi_{X'} = \xi_{Z'}$ if and only if $X' = Z'$.
We then set
$$(\Psi_X, \Psi_Y)_{\Omega} := \tr{(X')^* Y'} = \xi_{X'}\cdot\xi_{Y'} = \sum_i \overline{\Psi_X(X_i)}\Psi_Y(X_i)$$
where $x\cdot y = \sum_I \overline{x(I)}y(I)$ is the inner product on $\mathbb C^{d^{t^*}}$.
$(\cdot,\cdot)_\Omega$ thus defines an inner product on $V$.
We denote the corresponding norm by $\|\cdot\|_\Omega$ and observe the following inequalities:
\begin{lemma}\label{lemnormineq}
    Let $X\in\A_{[1,t^*]}=\C_R$.
    It holds that $\|{\Psi_X}\|_{\Psi_{\1}} \le \|P(X)\|_{2}$, where $P$ is the Hilbert-Schmidt orthogonal projection onto $\supp(\Omega)$, and
    \begin{align}
        \sigma_m(\Omega)\|{\Psi_X}\|_{\Psi_{\1}} \le\|\Psi_X\|_\Omega
        \le \|\Psi_X\|_{\Psi_{\1}}.
    \end{align}
   
\end{lemma}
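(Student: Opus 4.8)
The plan is to prove the three estimates by playing the two dual operator system structures on the memory $V$ against each other. Throughout I write $X'$ for the Hilbert--Schmidt density of $\Psi_X$ (so $\Psi_X(Y)=\tr{X'Y}$), and recall from the construction of $(\cdot,\cdot)_\Omega$ that $\|\Psi_X\|_\Omega=\|X'\|_2$, while in coordinates $\|\Psi_X\|_\Omega^2=\sum_i|\Psi_X(X_i)|^2=\|\ddot\Omega\,\xi_X\|_2^2$. Since $\Psi_X=\Psi_{P(X)}$ and $(\1-P)\xi_X\in\ker\ddot\Omega$, we have $\ddot\Omega\,\xi_X=\ddot\Omega\,P\xi_X$ with $P\xi_X$ in the span of the right singular vectors of $\ddot\Omega$ having nonzero singular value; the singular value decomposition then gives at once
\begin{equation}
\sigma_m(\Omega)\,\|P(X)\|_2\le\|\Psi_X\|_\Omega\le\sigma_1(\Omega)\,\|P(X)\|_2 .
\end{equation}
Only the left inequality will be needed below.

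I would first prove $\|\Psi_X\|_{\Psi_{\1}}\le\|P(X)\|_2$ using the order in which $\Psi_Z$ is declared positive exactly when $Z\ge 0$, i.e.\ the push-forward (operator system quotient) structure for which the unital map $\mathcal A_{[1,t^*]}\ni Z\mapsto\Psi_Z\in V$ is completely positive, hence a complete contraction. Thus $\|\Psi_Z\|_{\Psi_{\1}}\le\|Z\|_\infty$ for every $Z$, and applying this to $Z=P(X)$ together with $\Psi_{P(X)}=\Psi_X$ and $\|P(X)\|_\infty\le\|P(X)\|_2$ yields the claim for the push-forward norm, which dominates $\|\cdot\|_{\Psi_{\1}}$ (the min/max duality of \cref{remminmaxorder}), so the bound descends to the norm in the statement.

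Next I would prove $\|\Psi_X\|_\Omega=\|X'\|_2\le\|\Psi_X\|_{\Psi_{\1}}$, now in the dual order, in which $\Psi_X$ is positive precisely when $X'\ge 0$ as a functional on $\mathcal A_L$ and whose order unit $\Psi_{\1}$ has density $\1'$, the left marginal of $\omega$, a density matrix with $\|\1'\|_\infty\le 1$. Setting $r=\|\Psi_X\|_{\Psi_{\1}}$, the defining $2\times 2$ block positivity
\begin{equation}
\begin{pmatrix} r\1' & X' \\ (X')^{*} & r\1' \end{pmatrix}\ge 0
\end{equation}
produces a contraction $K$, $\|K\|_\infty\le 1$, with $X'=r\,(\1')^{1/2}K(\1')^{1/2}$, whence
\begin{equation}
\|X'\|_2^2=r^2\,\tr{K^{*}\1'K\1'}\le r^2\,\|\1'\|_\infty\,\tr{\1'KK^{*}}\le r^2\,\|\1'\|_\infty\,\|K\|_\infty^2\,\tr{\1'}\le r^2 .
\end{equation}
As the functional-order norm is dominated by $\|\cdot\|_{\Psi_{\1}}$, the inequality holds a fortiori for the norm in the statement.

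The remaining estimate is then immediate by chaining the first two steps: $\sigma_m(\Omega)\,\|\Psi_X\|_{\Psi_{\1}}\le\sigma_m(\Omega)\,\|P(X)\|_2\le\|\Psi_X\|_\Omega$. The one genuine obstacle is conceptual rather than computational: the order that makes the upper bound transparent (push-forward, where $\Psi$ is unital completely positive) is the opposite of the one that makes the lower bound transparent (functional positivity of the densities), so the argument closes only because the functional-order norm is dominated by the push-forward-order norm. The sole point needing a little care is the block-positivity step for non-self-adjoint $X$; for self-adjoint $X$ it reduces to the order relation $-r\1'\le X'\le r\1'$, and the general case follows from the $2\times 2$ amplification exactly as above.
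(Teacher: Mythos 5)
Your proposal is correct and, for two of the three inequalities, coincides with the paper's argument: the bound $\|\Psi_X\|_{\Psi_\1}\le\|P(X)\|_\infty\le\|P(X)\|_2$ via unitality and complete positivity of the push-forward map $Z\mapsto\Psi_Z$, and the bound $\sigma_m(\Omega)\|P(X)\|_2\le\|\Psi_X\|_\Omega$ via the singular value decomposition of $\ddot\Omega$ (your two-sided version with $\sigma_1(\Omega)$ is a harmless bonus). Where you genuinely diverge is the rightmost inequality $\|\Psi_X\|_\Omega\le\|\Psi_X\|_{\Psi_\1}$: the paper evaluates the level-two functional $\Psi_{2,r\1+S(X)}$ on the specific positive test element $\lambda\1-S(X')$ with $\lambda=\|X'\|_2$ and derives $2(r\lambda-\|\Psi_X\|_\Omega^2)<0$ for $r<\|\Psi_X\|_\Omega$, whereas you read off the block positivity of the density $\left(\begin{smallmatrix} r\1' & X'\\ (X')^*& r\1'\end{smallmatrix}\right)$, extract the standard factorization $X'=r(\1')^{1/2}K(\1')^{1/2}$ with $\|K\|_\infty\le 1$, and close with $\tr{K^*\1'K\1'}\le\|\1'\|_\infty\tr{\1'}\le 1$. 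The two mechanisms are equivalent in strength and give the same constant; yours is arguably more transparent because it isolates the only property of $\Psi_\1$ that is used, namely that its density is a state, while the paper's test-element computation is tailored to produce exactly $\|X'\|_2^2$. Both arguments stand or fall on the same structural input, which you correctly flag: that the level-two cone of the admissible order in the statement is contained in the cone of functionals with positive semidefinite block density, i.e.\ the minimal-versus-maximal order containment of \cref{remminmaxorder}. You invoke this as a black box exactly as the paper does, so no new gap is introduced; the one cosmetic point worth adding is that $\supp(X')\subseteq\supp(\1')$ (needed for the factorization through $(\1')^{1/2}$) follows from Cauchy--Schwarz for the state $\omega$, though block positivity already enforces it.
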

\begin{proof}
First, we observe that $\Psi_{P(X)}= \Psi_X$ by definition of $P$, and that $\Psi$ is completely positive and unital, thus contractive. 
Then
\begin{align*}
\|\Psi_X\|_{\Psi_{\1}}=\|\Psi_{P(X)}\|_{\Psi_{\1}}
&\le \|P(X)\|_\infty\le \|P(X)\|_2\,.
\end{align*}
        
        \hide{
        Next, it holds that
        \begin{align*}
            0\le \sigma_m(\Omega)^2\|\Psi_X\|_\Omega^2
             =\omega((X')^*\otimes X)
             \le\|{X}\|_{\A_{[1,N]}}\|{X'}\|_{\A_{[1,N]}}
             \le \|{X}\|_{\A_{[1,N]}}\|{X'}\|_{2}
            = \sigma_m(\Omega)\|{X}\|_{\A_{[1,N]}}\|{\Psi_X}\|_\Omega
        \end{align*}}
        Next, we have 
        \begin{align*}
             \|{\Psi_X}\|_\Omega^2= \|{\Psi_{P(X)}}\|_\Omega^2 = (\ddot\Omega\xi_{P(X)})\cdot(\ddot\Omega\xi_{P(X)})\ge \sigma_m(\Omega)^{2}\xi_{P(X)}\cdot \xi_{P(X)}= \sigma_m(\Omega)^{2}\|{P(X)}\|^2_{2}\,.
        \end{align*}
                For the last inequality, consider the matrix amplification $\Psi_{2}$ of $\Psi$ on $\mathbb M_2(\C_R)$ mapping to $\mathbb M_2(\C_L^*)\cong \mathbb M_2(\C_L)^*$ (the isomorphism is implemented by letting $A\otimes \varphi$ act on $B\otimes X$ as $(A\otimes\varphi). (B\otimes X) := \tr{AB}\varphi(X)$).
        In particular, $\Psi_{2}$ is completely positive.
        For $X\in \C_R$ let
        $$S(X) = 
        \begin{pmatrix}
            0 & X\\
            X^* & 0
        \end{pmatrix}
        = \ketbra{0}{1}\otimes X + \ketbra 10 \otimes X^*
        $$
        be the embedding of $\C_R$ as self-adjoint elements in $\mathbb M_2(\C_R)$.
        Note that $\Psi_{2, S(X)} = S(\Psi_X)$.
        Let $\lambda\ge\|X'\|_\infty$ so that $\lambda\1 - S(X')$ is a positive matrix.
        Furthermore, let $r> 0$.
        We then have that
        \begin{align*}
            \Psi_{2,r\1 + S(X)}(\lambda\1 - S(X'))
            =& r\lambda (\id_{\mathbb M_2}\otimes\Psi_{\1}).\1 + \lambda S(\Psi_X).\1\\
            & - r(\id_{\mathbb M_2}\otimes\Psi_{\1}).S(X') - S(\Psi_X).S(X')\\
            =& 2r\lambda - S(\Psi_X).S(X').
        \end{align*}
        Here, we used $(\id_{\mathbb M_2}\otimes\Psi_{\1}).\1 = 2$ and that if $A$ is block-diagonal and $B$ is `block-off-diagonal' then $(A\otimes\varphi).(B\otimes X) = 0$. 
        Now, we have that
        \begin{align*}
            S(\Psi_X).S(X') = \Psi_X((X')^*) + \Psi_{X^*}(X') = 2\|{\Psi_X}\|_\Omega^2.
        \end{align*}
We find that 
\begin{align*}
\Psi_{2,r\1+S(X)}(\lambda\1-S(X'))=2(r\lambda-\|\Psi_X\|_\Omega^2)\,.
\end{align*}
        Thus, choosing $\lambda = \|\Psi_X\|_\Omega = \|X'\|_{2} \ge \|{X'}\|_{\infty}$ and $r<  \|\Psi_X\|_\Omega$ we find
        $$r\|\Psi_X\|_\Omega < \|\Psi_X\|_\Omega^2$$
        showing that $\Psi_{2, r\1 + S(X)}$ is not a positive functional.
        By definition of the order norm this shows $r\le\|\Psi_X\|_{\Psi_{\1}}$ and thus the desired norm inequality (cf.\ also \cref{remminmaxorder}). 
\end{proof}
\begin{remark}\label{rem:ineq}
    Replacing $\C_R$ by $\mathbb M_n(\C_R)\cong \mathbb M_n\otimes\C_R$ (similarly for $\C_L$) and $\Psi$ by its $n$-th amplification $\Psi_n$, we can apply the above proof mutatis mutandis.
    This yields the inequalities
    $$\sigma_m(\Omega)\|\cdot\|_{{\Psi_n}(\1)}\le\|\cdot\|_{n,\Omega} \le \sqrt{n}\|\cdot\|_{{\Psi_n}(\1)}$$
    for the norms on $\mathbb M_n(V)$.
\end{remark}

The inequalities in \cref{lemnormineq} hold for the norms defined on the memory system of the abstract (minimal) realization.
Consider the restriction of $U^\intercal$ to $V$, $\Psi_X\mapsto U^{\intercal}\Psi_X$, which is invertible. It induces an operator system structure on $\mathbb{C}^{m}$ that is completely order isomorphic to the order structure of the regular realization and it holds that $U^\intercal\Psi_\1 = e$ as defined in \cref{sec.observablereal}. Also note that $\norm{(\id_{\mathbb{M}_n}\otimes U^{\intercal})\Psi_{n,X}}_2 = \norm{\Psi_X}_{n,\Omega}$.

In particular, the following inequalities hold for $x\in \mathbb{M}_n(\mathbb{C}^{m})$: 

\begin{equation}\label{eq:normineqobserv}
    \sigma_m(\Omega)\norm{x}_{\1_{\mathbb{M}_n}\otimes e} \le \norm{x}_2 \le \sqrt{n}\norm{x}_{\1_{\mathbb{M}_n}\otimes e}
\end{equation}
where $\norm{\cdot}_2$ is just the standard euclidean norm.

\begin{remark}\label{remminmaxorder}
    In fact, the matrix order on the memory system $V$ is not unique.
    The construction in \cite{Fannes1992} mentioned in the beginning of this section can be applied to both sides of the chain.
    Dualizing the order on one side then gives a matrix order on the other side and the two need not coincide.
    However, every positive element in the directly constructed one is also positive in the one obtained through dualizing and in fact every admissible matrix order making the generating map $\E$ completely positive has to lie between the two orders.
    We may thus call them minimal and maximal and since the cones of the minimal matrix order are smaller than those of the maximal matrix order, the corresponding order norms satisfy the reverse inequality.
    This has also been observed by I. Todorov \cite{notesIvan} (and possibly others).
    The proof of \cref{lemnormineq} works for the minimal matrix order for the leftmost inequality and implicitly uses the maximal matrix order for the rightmost.
    Therefore, the bounds are valid for the order norm of any admissible matrix order.
\end{remark}
\begin{remark}
    Every bipartite state between two $C^*$-algebras gives rise to operator system structures analogous to \cref{remminmaxorder}.
    Assuming that this operator system is finite-dimensional, it is also sufficient to consider a finite-dimensional subspace of the algebra, which can be assumed to be an operator (sub) system.
    If this operator subsytem can be represented on a finite-dimensional Hilbert space, we can again use the Hilbert-Schmidt inner product to define a norm on the space of `correlation functionals' that satisfies the bounds as in \cref{lemnormineq}.
    This might be useful to obtain bounds in other settings.
    We are not aware of this being mentioned in the literature explicitly in this context. However, the results of \cite{LANCE1973157, notesIvan, vanLuijkEtAl2023} are very close to this.
\end{remark}

\begin{remark}
We also note that a quotient realization of a $C^*$-finitely correlated state can be given the structure of an abstract operator system, which is not necessarily isomorphic to the the minimal or the maximal order, but is inherited from the usual matrix order on $\B$.  
$\V=\W/({\W\cap \tilde{\W}^{\perp}})$ can also be given the structure of an operator system, with unit $e'=L\1_{\B}$ and cones $C_n:= L(\mathbb M_n(\W)^+)$. By ``push-forward'' via the invertible map $M$, the observable realization also gets an operator system structure and $M$ automatically becomes unital and completely positive. This operator system, given by the cones $MC_n$ is admissible in the sense above. Indeed, the minimal operator system
in the observable realization is given by the cones $$C^{\min}_n\coloneqq\overline{\{\mathbb{K}^{n}_Xe|X\in \mathbb M_n(\mathcal{A}_{[1,n]}),X\geq 0,n\geq 1\}},$$
and the units $e_n$. The maximal operator system is still given by the units $e_n$, but the cones are  
$$C^{\max}_n\coloneqq\{\rho\mathbb{K}^{n}_Y|Y\in \mathbb M_n(\mathcal{A}_{[-n,0]}),Y\geq 0,n\geq 0\}^*.$$
Using the notation of Proposition~\ref{thm:quotient}, we have $\mathbb{K}^{t}_{X}e=M\mathbb{K}'^{t}_{X}e'=L\mathcal{E}^{t}_{X}(\1_{\B})$, and since $\mathcal{E}^{t}_{X}(\1_{\B})\geq 0$ if $X\geq 0$, $\mathbb{K}^{t}_{X}e\in C^{\min}_{n}$ implies $\mathbb{K}^{t}_{X}e\in MC_{n}$. On the other hand, if $Y\geq 0$ and $\mathcal{E}^{t_1}_{X}(\1_{\B})\geq 0$ (so that $\mathbb{K}^{t_2}_{X}e\in MC_n$),
\begin{align}
\rho\mathbb{K}^{t_1}_{Y}\mathbb{K}^{t_2}_{X}e=\rho'\mathbb{K}'^{t_1}_{Y}\mathbb{K}'^{t_2}_{X}e'=\rho_0\mathcal{E}^{t_1}_{Y}|_{W}\mathcal{E}^{t_2}_{X}(\1_{\B})\geq 0,
\end{align}

and thus $\mathbb{K}^{t_2}_{X}e\in C_n^{\max}$.

The map $R$ identifying $\V$ with $\V'\subset\W$, see \cref{finvert}, then is a (unital) complete order isomorphism.
$\V$ inherits the Hilbert-Schmidt norm from $\B$, which we denote as $\norm{\cdot}_{2}$ without ambiguity.
For $x\in \mathbb{M}_n(\V)$ we have $\norm{x}_{(e')^{(n)}}\leq \norm{Rx}_{\1_{\mathbb{M}_n(\B)}}\leq \norm{x}_{2}$ by contractivity of $L$ and the standard inequality between Schatten norms. We also have that $U^{\intercal}\F R e'=e$, and for the norms induced by the cones $C_n$, 
\begin{align}
&\norm{\id_{\mathbb{M}_n} \otimes U^{\intercal}\F R}_{\1_{\mathbb{M}_n}\otimes e'\rightarrow 2}\leq \norm{\id_{\mathbb{M}_n} \otimes \id_{\mathbb{C}^m}}_{\1_{\mathbb{M}_n} \otimes e\rightarrow 2}\norm{\id_{\mathbb{M}_n} \otimes  U^{\intercal}\F R}_{\1_{\mathbb{M}_n}\otimes e'\rightarrow {\1_{\mathbb{M}_n} \otimes e}} \leq \sqrt{n},\label{contractC}
\end{align}

where we used that the \eqref{eq:normineqobserv} (in particular, the second inequality) holds for the norms of any admissible matrix order on the memory system and, thus $\norm{\id_{\mathbb{M}_n} \otimes \id_{\mathbb{C}^m}}_{\1_{\mathbb{M}_n} \otimes e\rightarrow 2}\leq \sqrt{n}$, and that $U^{\intercal}\F R$ is a unital complete order isomorphism.

\end{remark}

\section{State reconstruction for FCS}\label{sec.statereconstruction}

Given a finitely-correlated state $\omega$ on the infinite chain, we denote its restriction to $\mathcal{A}_{[1,t]}$ as $\omega_t$, i.e.
\begin{align}
\omega_t(X_1\otimes \cdots\otimes X_t):=\rho\mathbb{K}_{X_1}\cdots \mathbb{K}_{X_t}e\,.
\end{align}
Consider the corresponding map $\Omega_{(\cdot)}:(A,A')\mapsto \Omega_A(A')$ as defined in Equation \eqref{def.OmegaOmega()} and its observable realization $(\mathbb{C}^m,e,\mathbb{K},\rho)$ as in \cref{prop.observablequasireal}.
We can reconstruct an approximation of the realization parameters from estimates $\hat{\Omega},\hat{\Omega}_{(\cdot)}, \widehat{\Omega(\1)}, \widehat{\tau{\Omega}}$ obtained from estimating the expectation values of $\omega_t$. Here, we need to impose that the estimate $\hat{\Omega}$ has at least rank $m$, and we take its $m$-truncated singular value decomposition, meaning that we only retain the first $m$ singular values and singular vectors. This means that $\hat{U}$ is the matrix of the first $m$ left singular vectors of $\hat\Omega$.

\begin{definition}[Spectral state reconstruction]\label{defstaterecon}

\begin{equation}
\hat\omega_{t}(X_1\otimes\cdots\otimes X_t):=\hat{\rho}\hat{\mathbb{K}}_{X_1}\cdots\hat{\mathbb{K}}_{X_t}\hat{e},
\end{equation}
where
\begin{align}
\hat{e}&={\hat{U}}^{\intercal}\widehat{\Omega(\1)},\label{hate}\\
\hat{\rho}&=\widehat{\tau{\Omega}}(\hat{U}^\intercal \hat{\Omega})^+,\label{hatrho}\\
\hat{\mathbb{K}}_{A}&={\hat{U}}^{\intercal}(\hat\Omega_{A})({\hat{U}}^{\intercal}\hat\Omega)^+\,.\label{hatKA}
\end{align}
\end{definition}

To bound the accuracy of the estimate $\hat\omega_t$ it is convenient to compare the above empirical reconstruction with the empirical realization, rather than the observable realization.

\subsection{Empirical estimates of realization parameters}
In the following, we obtain bounds on the estimates of realization parameters, in terms of the accuracy of the estimation of the marginals.

\begin{lemma}If $\|\Omega - \hat{\Omega} \|_{2\rightarrow 2}\leq \frac{\sigma_m(\Omega)}{3}$, it holds that:
\begin{align}
    \| \tilde{\mathbb{K}} - \hat{\mathbb{K}} \|_{2\rightarrow 2} &\leq \frac{1+\sqrt{5}}{2} \frac{\|\Omega - \hat{\Omega} \|_{2\rightarrow2}}{\min \lbrace \sigma_m (\hat{\Omega}) , \sigma_m (\hat{U}^{\intercal}\Omega) \rbrace^2} 
    + \frac{\| \Omega_{(\cdot)}-\hat{\Omega}_{(\cdot)}\|_{2\rightarrow 2}}{\sigma_m (\hat{U}^{\intercal}\hat\Omega)}\label{tildeKhatK},
    \\&\leq 4\left( \frac{\|\Omega - \hat{\Omega} \|_{2}}{\sigma_m ({\Omega})^2} + \frac{\| \Omega_{(\cdot)}-\hat{\Omega}_{(\cdot)}\|_{2}}{3\sigma_m (\Omega)}\right),\label{boundDelta12}\\
    \| \hat e-\tilde e \|_{2} &\leq \| \Omega(\1) - \widehat{\Omega(\1)} \|_{2},\label{bounde}
    \end{align}
    \begin{align}
\| \hat\rho-\tilde\rho \|_{2} & \leq \frac{1+\sqrt{5}}{2}\frac{\| \hat\Omega - \Omega \|_{2}}{ \min \lbrace \sigma_m(\hat\Omega) , \sigma_m (\hat{U}^{\intercal} \Omega ) \rbrace^2 } + \frac{\| \widehat{\tau\Omega} -\tau\Omega\|_2} {\sigma_m (\hat{U}^\intercal \hat\Omega )}\,\label{bounddelta12}\\
&\leq 4 \left(\frac{\| \hat\Omega - \Omega \|_{2}}{\sigma_m(\Omega)^2} + \frac{\| \widehat{\tau\Omega} -\tau\Omega\|_2} {3\sigma_m (\Omega)}\right)
 .\label{eathohattilde1}\\
 \| (\hat U^{\intercal} U)^{-1} \|_{2\rightarrow 2}&\leq \frac{2}{\sqrt{3}}\label{uupert}
\end{align}

\end{lemma}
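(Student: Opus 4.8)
My plan is to reduce everything to the single intrinsic scale $\sigma_m(\Omega)$ and to the contractivity of the correlation maps, handling the three nontrivial bounds by one common scheme: telescope each difference, control the "numerator" pieces by contractivity, and control the "pseudoinverse" pieces by a matrix-perturbation bound. I would prove the last inequality first, since it is exactly what licenses replacing the data-dependent denominators by $\sigma_m(\Omega)$. Writing $\sigma_m(\hat U^\intercal U)$ as the cosine of the largest principal angle $\theta$ between $\mathrm{range}(U)=\supp(\Omega)$ and $\mathrm{range}(\hat U)$, I would invoke the Davis--Kahan/Wedin $\sin\theta$ bound: since $\Omega$ has rank exactly $m$, one gets $\sin\theta\le\|\Omega-\hat\Omega\|_{2\to2}/(\sigma_m(\Omega)-\sigma_{m+1}(\hat\Omega))$, and as $\sigma_{m+1}(\hat\Omega)\le\sigma_{m+1}(\Omega)+\|\Omega-\hat\Omega\|_{2\to2}=\|\Omega-\hat\Omega\|_{2\to2}\le\tfrac13\sigma_m(\Omega)$, the denominator is at least $\tfrac23\sigma_m(\Omega)$, so $\sin\theta\le\tfrac12$ and $\sigma_m(\hat U^\intercal U)=\cos\theta\ge\tfrac{\sqrt3}{2}$, which is exactly $\|(\hat U^\intercal U)^{-1}\|_{2\to2}\le 2/\sqrt3$. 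The same estimates yield the three conversions I reuse throughout: $\sigma_m(\hat\Omega)\ge\tfrac23\sigma_m(\Omega)$ by Weyl; $\sigma_m(\hat U^\intercal\hat\Omega)=\sigma_m(\hat\Omega)$ because $\hat U$ carries the top-$m$ left singular vectors of $\hat\Omega$; and $\sigma_m(\hat U^\intercal\Omega)=\sigma_m\!\big((\hat U^\intercal U)DO\big)\ge\sigma_m(\hat U^\intercal U)\,\sigma_m(\Omega)\ge\tfrac{\sqrt3}{2}\sigma_m(\Omega)$, using $\sigma_m(AB)\ge\sigma_m(A)\sigma_m(B)$ for invertible square $A$.

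The bound on $\hat e-\tilde e$ is immediate: $\hat e-\tilde e=\hat U^\intercal\big(\widehat{\Omega(\1)}-\Omega(\1)\big)$ and $\|\hat U^\intercal\|_{2\to2}\le1$ give the claim. For the remaining two I would telescope each difference into a pseudoinverse-error term and a numerator-error term,
\begin{align}
\tilde{\mathbb{K}}-\hat{\mathbb{K}} &=\hat U^\intercal\Omega_{(\cdot)}\big[(\hat U^\intercal\Omega)^+-(\hat U^\intercal\hat\Omega)^+\big]+\hat U^\intercal\big[\Omega_{(\cdot)}-\hat\Omega_{(\cdot)}\big](\hat U^\intercal\hat\Omega)^+,\\
\hat\rho-\tilde\rho &=\tau\Omega\big[(\hat U^\intercal\hat\Omega)^+-(\hat U^\intercal\Omega)^+\big]+[\widehat{\tau\Omega}-\tau\Omega](\hat U^\intercal\hat\Omega)^+.
\end{align}
The numerator-error summands produce the second terms of the stated bounds directly, via $\|\hat U^\intercal\|_{2\to2}\le1$ and $\|(\hat U^\intercal\hat\Omega)^+\|_{2\to2}=1/\sigma_m(\hat U^\intercal\hat\Omega)$. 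For the pseudoinverse-error summands I would apply the Wedin perturbation bound for the Moore--Penrose inverse of two matrices of \emph{equal rank}, $\|A^+-B^+\|_{2\to2}\le\tfrac{1+\sqrt5}{2}\|A^+\|_{2\to2}\|B^+\|_{2\to2}\|A-B\|_{2\to2}$, with $A=\hat U^\intercal\Omega$ and $B=\hat U^\intercal\hat\Omega$: here $\|A-B\|_{2\to2}\le\|\Omega-\hat\Omega\|_{2\to2}$ and $\|A^+\|\,\|B^+\|=1/\big(\sigma_m(\hat U^\intercal\Omega)\sigma_m(\hat\Omega)\big)\le 1/\min\{\sigma_m(\hat\Omega),\sigma_m(\hat U^\intercal\Omega)\}^2$, while the equal-rank hypothesis (both are rank $m$) is exactly what the invertibility of $\hat U^\intercal U$ from the first paragraph secures.

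It remains to bound the prefactors multiplying the pseudoinverse-error terms by $1$. For $\tilde{\mathbb{K}}$ this is $\|\hat U^\intercal\Omega_{(\cdot)}\|_{2\to2}\le\|\Omega_{(\cdot)}\|_{2\to2}\le1$, because the singular values of $\Omega_{(\cdot)}$ are the operator-Schmidt coefficients of a marginal of $\omega$, so $\sigma_1(\Omega_{(\cdot)})\le\|\omega_{2s+1}\|_2\le1$; for $\hat\rho$ it is $\|\tau\Omega\|_2\le1$, since $\tau\Omega(X)=\omega(\1\otimes X)$ is given by a reduced density operator of Hilbert--Schmidt norm at most one. Assembling the two summands gives the first (sharp-denominator) inequalities; the simplified $4(\cdots)$ forms follow by substituting $\sigma_m(\hat\Omega)\ge\tfrac23\sigma_m(\Omega)$ and $\sigma_m(\hat U^\intercal\Omega)\ge\tfrac{\sqrt3}{2}\sigma_m(\Omega)$ (so that $\min\{\cdot\}\ge\tfrac23\sigma_m(\Omega)$), bounding $\|\cdot\|_{2\to2}\le\|\cdot\|_2$, and collecting the numerical constants (note $\tfrac{1+\sqrt5}{2}\cdot\tfrac94\le4$). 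The main obstacle is not any single hard estimate but the perturbation bookkeeping: legitimately invoking the golden-ratio pseudoinverse bound (which hinges on the equal-rank condition, hence on inequality (4)) and uniformly converting the three empirical spectral quantities $\sigma_m(\hat\Omega)$, $\sigma_m(\hat U^\intercal\Omega)$, $\sigma_m(\hat U^\intercal U)$ back to $\sigma_m(\Omega)$ via Weyl, the $\sin\theta$ theorem, and submultiplicativity of $\sigma_m$; once these are in place the contractivity facts and the telescoping identities are routine.
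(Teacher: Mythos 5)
Your proposal follows essentially the same route as the paper: the identical telescoping decompositions of $\tilde{\mathbb{K}}-\hat{\mathbb{K}}$ and $\hat\rho-\tilde\rho$, the golden-ratio Wedin pseudoinverse perturbation bound, the contractivity facts $\|\Omega_{(\cdot)}\|_2\le 1$ and $\|\tau\Omega\|_2\le 1$, and the same singular-value conversions $\sigma_m(\hat\Omega)\ge\tfrac23\sigma_m(\Omega)$, $\sigma_m(\hat U^\intercal U)\ge\tfrac{\sqrt3}{2}$, $\sigma_m(\hat U^\intercal\Omega)\ge\tfrac{\sqrt3}{2}\sigma_m(\Omega)$ (the paper packages these as Lemma~\ref{lemma:subspace} via Corollary~17 of~\cite{Siddiqi2009}, which is just the Wedin $\sin\theta$ bound you invoke directly). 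The argument is correct and matches the paper's proof in all essentials.
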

\begin{proof}
In order to bound
\begin{align}
\| \tilde{\mathbb{K}}-\hat{\mathbb{K}} \|_{2\rightarrow 2} = \sup_{X\in\mathcal{A}\otimes \mathbb{C}^m,\, \| X \|_2 \leq 1} \| (\tilde{\mathbb{K}}-\hat{\mathbb{K}}) (X ) \|_2,
\end{align}
note that, by Equations \eqref{tildeKA} and \eqref{hatKA}, 
\begin{align}
  \|\tilde{\mathbb{K}}-\hat{\mathbb{K}}\|_{2\rightarrow 2}= \|\hat{U}^{\intercal}(\Omega_{(\cdot)})(\hat{U}^{\intercal}\Omega)^+-\hat{U}^{\intercal}(\hat{\Omega}_{(\cdot)})(\hat{U}^{\intercal}\hat{\Omega})^+\|_{2\rightarrow 2}
\end{align}
By triangle inequalities,
\begin{align}
    \| \tilde{\mathbb{K}} - \hat{\mathbb{K}}\|_{2\rightarrow 2} &\leq \| \hat{U}^{\intercal}(\Omega_{(\cdot)})\left( (\hat{U}^{\intercal}\Omega)^+ - (\hat{U}^{\intercal}\hat{\Omega})^+ \right) \|_{2\rightarrow 2}  +\| \hat{U}^{\intercal}(\Omega_{(\cdot)} - \hat{\Omega}_{(\cdot)})(\hat{U}^{\intercal}\hat\Omega)^+ \|_{2\rightarrow 2} \,.
\end{align}
Therefore using Lemma~\ref{lem:error_pseudoinverses} and the inequality $\|A\|_{2\rightarrow 2} = \norm{A}_\infty\leq \|A\|_{2} $ and the fact that $\sigma_m(\hat{\Omega})=\sigma_m(\hat{U}^\intercal\hat{\Omega})$,

\begin{align}
    \| \tilde{\mathbb{K}} - \hat{\mathbb{K}} \|_{2\rightarrow 2} &\leq \| \Omega_{(\cdot)}\|_{2} \frac{1+\sqrt{5}}{2} \frac{\|\Omega - \hat{\Omega} \|_{2\rightarrow 2}}{\min \lbrace \sigma_m (\hat{\Omega}) , \sigma_m (\hat{U}^{\intercal}\Omega) \rbrace^2} + \frac{\| \Omega_{(\cdot)}-\hat{\Omega}_{(\cdot)}\|_{2\rightarrow 2}}{\sigma_m (\hat{U}^{\intercal}\hat\Omega)}.
\end{align}
Moreover, given bases $\{X_\ell\}_{\ell=1}^{d_{\A}^{2s}}$, $\{A_j\}_{j=1}^{d_{\A}^2}$, $\{Y_i\}_{i=1}^{d_{\A}^{2s}}$ of $\mathcal{C}_R$, $\mathcal{A}$ and $\mathcal{C}_L$, respectively,
\begin{align}
   \| \Omega_{(\cdot)}\|_{2} &\leq \sqrt{\sum_{i=1}^{d_{\A}^{2s}}\sum_{j=1}^{d_{\A}^{2}}\sum_{l=1}^{d_{\A}^{2s}} | \omega(Y_i\otimes A_j \otimes X_\ell ) |^2}\le \sqrt{\Tr[\omega_{2t^*+1}^2]}\leq 1\,.
\end{align}
Therefore,
\begin{align}
    \| \tilde{\mathbb{K}} - \hat{\mathbb{K}} \|_{2\rightarrow 2} &\leq \frac{1+\sqrt{5}}{2} \frac{\|\Omega - \hat{\Omega} \|_{2\rightarrow2}}{\min \lbrace \sigma_m (\hat{\Omega}) , \sigma_m (\hat{U}^{\intercal}\Omega) \rbrace^2} 
    + \frac{\| \Omega_{(\cdot)}-\hat{\Omega}_{(\cdot)}\|_{2\rightarrow 2}}{\sigma_m (\hat{U}^{\intercal}\hat\Omega)}.\label{tildeKhatK1}
\end{align}

For the second inequality,  using the expressions for $\tilde e$ and $\hat e$ given in Equations 
\eqref{tildee2} and \eqref{hate} respectively
\begin{align}
     \| \hat e-\tilde e \|_{2} &= \|\hat{U}^{\intercal}(\Omega(\1) - \widehat{\Omega(\1)})\|_{2} \\
    &\leq \| \hat{U}^{\intercal} \|_{2\rightarrow 2} \| \Omega(\1) - \widehat{\Omega(\1)}\|_{2} \\
    &\leq \| \Omega(\1) - \widehat{\Omega(\1)}\|_{2}
\end{align}

For the third inequality, using the expressions for $\tilde\rho$ and $\hat\rho$ given in Equations 
\eqref{tilderho} and \eqref{hatrho} respectively,
\begin{align}
    \| \hat\rho-\tilde\rho \|_{2} & = \|\widehat{\tau\Omega}(\hat{U}^{\intercal}\hat\Omega )^+- \tau\Omega(\hat{U}^{\intercal}\Omega )^+\|_2 \\
    & \leq \|\tau\Omega \big((\hat{U}^{\intercal}\hat\Omega )^+- (\hat{U}^{\intercal}\Omega )^+ \big)\|_2 + \|\widehat{\tau\Omega}(\hat{U}^{\intercal}\hat\Omega )^+ - \tau\Omega (\hat{U}^{\intercal}\hat\Omega )^+\|_2 \\
   & \leq  \| (\hat{U}^{\intercal}\hat\Omega )^+ - (\hat{U}^{\intercal}\Omega )^+ \|_{2\rightarrow 2} \|\tau\Omega\|_2 + \| (\hat{U}^{\intercal}\hat\Omega )^+ \|_{2\rightarrow 2} \| \widehat{\tau\Omega} -\tau\Omega\|_2 \,.
\end{align}

Then using $\|\tau\Omega\|_2 = \sqrt{\sum_{i=1}^{d_{\A}^{2s}} {\omega}(X_i)^2 }\leq 1$ and Lemma~\ref{lem:error_pseudoinverses},
\begin{align}
\| \hat\rho-\tilde\rho \|_{2} & \leq \frac{1+\sqrt{5}}{2}\frac{\| \hat\Omega - \Omega \|_{2}}{ \min \lbrace \sigma_m(\hat\Omega) , \sigma_m (\hat{U}^{\intercal} \Omega ) \rbrace^2 } + \frac{\| \widehat{\tau\Omega} -\tau\Omega\|_2} {\sigma_m (\hat{U}^\intercal \hat\Omega )}\,.
\end{align}

Then, ~\eqref{boundDelta12}, \eqref{eathohattilde1},~\eqref{uupert} directly follow from perturbation bounds for singular values, which are collected in Appendix~\ref{appendixpert}, and in particular Lemma~\ref{lemma:subspace}. For these bounds we closely follow the treatment of previous works in the classical setting~\cite{Hsu2008,Siddiqi2009,balle2013learning}.
\end{proof}

\subsection{Error propagation bound}
\label{subsec:errorprop}
With $(\mathbb C^m, \K, e, \rho)$ being the observable realization from \cref{prop.observablequasireal}, in the following we will assume that $\V$ is an operator system with order unit $e'$ and $M:\V\rightarrow \mathbb{C}^m$ is an invertible linear map such that $Me'=e$ and that $\rho'=\rho M$ is a positive functional. It holds that $\rho'(e') = 1$ and that the map ${M}^{-1}{\mathbb{K}}{M}$ is unital, in the sense that ${M}^{-1}{\mathbb{K}}_{\1_{\A}}{M}e'=e'$. We further assume that it is completely positive. Of course, we can always take $\V$ to be the vector space of functionals $\Psi_A$, $A\in \A_{R}$, whose positive elements at level $n$ are $\Psi_n(A)$, $A\in\mathbb{M}_n(A)^+$.
The following error parameters will be used in the analysis: given $\widetilde{e}$, $\widetilde{\rho}$, $\widetilde{\mathbb{K}}_A$ and $\hat{M}$ as in Equations  
\eqref{tildeee}, \eqref{tilderhoo}, \eqref{tildek} and \eqref{hatM}.  
respectively, where $\hat{U}$ and is chosen as in Definition \ref{defstaterecon},
\begin{align}\label{errorpardef1}
\delta_1&:=\|(\hat\rho-\tilde{\rho})\hat{M}\|_{e',*},\\
\delta_{\infty}&:=\|\hat{M}^{-1}(\tilde{e}-\hat{e})\|_{e'},\label{errorpardef2}\\
\Delta&:=\|\hat{M}^{-1}(\tilde{\mathbb{K}}-\hat{\mathbb{K}})\hat{M}\|_{\1\otimes e'\rightarrow e',cb}.\label{errorpardef3}
\end{align}
Above, we denote by $\tilde{\mathbb{K}}:\A\otimes\mathbb{C}^m\to \mathbb{C}^m$ and $\hat{\mathbb{K}}:\A\otimes\mathbb{C}^m\to \mathbb{C}^m$ the maps $A\otimes x\mapsto \tilde{\mathbb{K}}_A(x)$ and $A\otimes x\mapsto \hat{\mathbb{K}}_A(x)$, respectively. 

Our first main result is the following error propagation bound.
We write $\|\cdot\|_1$ since finite-size marginals are essentially density operators.
Furthermore, if $\|\cdot\|$ is any norm, we denote the corresponding norm on the dual space by $\|\cdot\|_*$, so $\|\cdot\|_1 \equiv \|\cdot\|_{\mathscr A,*}$ if $\mathscr A$ denotes an arbitrary $C^*$-algebra.
We recall that for maps between two spaces with norms $\|\cdot\|_a$ and $\|\cdot\|_b$, say, we denote the corresponding operator norm as $\|\cdot\|_{a\to b}$, so, e.g., $\|\cdot\|_{a,*} = \|\cdot\|_{a\rightarrow p}$ for any $p$-norm on $\mathbb C$.

\begin{theorem}[Error propagation]\label{theoerrprop}
Let $\omega$ be a finitely correlated state. 
For the state reconstruction in Definition~\ref{defstaterecon}, and the error parameters defined in Eqs.~\eqref{errorpardef1},~\eqref{errorpardef2} and \eqref{errorpardef3}, if $\hat{U}^{\intercal}U$ is invertible, then
\begin{align}
\|\hat\omega_{t}-\omega_{t}\|_1\leq (1+\delta_1)(1+\delta_{\infty})(1+\Delta)^t-1.
\end{align}
\end{theorem}
\begin{proof}[Proof of Theorem~\ref{theoerrprop}]
By the triangle inequality, we have the following bound
\begin{align}
\|\hat\omega_{t}-\omega_{t}\|_1&\leq\underbrace{\|(\hat\rho-\tilde\rho)\tilde{\mathbb{K}}^{t}\tilde{e}\|_1}_{(\operatorname{I})}+\underbrace{\|(\hat\rho-\tilde\rho)(\hat{\mathbb{K}}^{t}\hat{e}-\tilde{\mathbb{K}}^{t}\tilde{e})\|_1}_{(\operatorname{II})}+\underbrace{\|\tilde{\rho}(\tilde{\mathbb{K}}^{t}\tilde{e}-\hat{\mathbb{K}}^{t}\hat{e})\|_1}_{(\operatorname{III})}\label{eqtriangleineq}
\end{align}

Let us start with $(\operatorname{I})$. By Equations \eqref{tildeee} and \eqref{tildek},
\begin{align}
\|(\hat\rho-\tilde\rho)\tilde{\mathbb{K}}^{t}\tilde{e}\|_1&=\|(\hat\rho-\tilde\rho)\hat M M^{-1} \mathbb{K}^{t}M(e')\|_1\nonumber\\
&\le \norm{(\hat\rho-\tilde{\rho})\hat M}_{e',*} \norm{M^{-1} \mathbb{K}^{t}M(e')}_{\infty \rightarrow e'}\\
&=\norm{(\hat\rho-\tilde{\rho})\hat M}_{e',*}\,,\label{eq(I)}
\end{align}
where we used that $\norm{M^{-1} \mathbb{K}^{t}M(e')}_{\infty\rightarrow e'}=1$ because $M^{-1} \mathbb{K}^{t}M$ is unital. By similar arguments, $(\operatorname{II})$ is bounded as
\begin{align}
\|(\hat\rho-\tilde\rho)(\tilde{\mathbb{K}}^{t}\tilde{e}-\hat{\mathbb{K}}^{t}\hat{e})\|_1
&\leq \norm{(\hat\rho-\tilde{\rho})\hat M}_{e',*}\|(\hat M)^{-1}(\tilde{\mathbb{K}}^{t}\tilde{e}-\hat{\mathbb{K}}^{t}\hat{e})\|_{\infty\rightarrow e'},\label{eq(II)}
\end{align}
using invertibility of $\hat U^{\intercal} U$. Finally, we bound $(\operatorname{III})$:
\begin{align}
\|\tilde\rho(\tilde{\mathbb{K}}^{t}\tilde{e}-\hat{\mathbb{K}}^{t}\hat{e})\|_1
&\leq \|\tilde\rho \hat M\|_{e',*}\|\hat {M}^{-1}(\tilde{\mathbb{K}}^{t}\tilde{e}-\hat{\mathbb{K}}^{t}\hat{e})\|_{\infty\rightarrow e'}\nonumber\\
&\leq\|\rho M\|_{e',*}\|\hat {M}^{-1}(\tilde{\mathbb{K}}^{t}\tilde{e}-\hat{\mathbb{K}}^{t}\hat{e})\|_{\infty\rightarrow e'}\label{eqIII}\\
&=\|\hat {M}^{-1}(\tilde{\mathbb{K}}^{t}\tilde{e}-\hat{\mathbb{K}}^{t}\hat{e})\|_{\infty\rightarrow e'},\label{eq(III)}
\end{align}
where we also used that $\rho'=\rho M$ is a non-negative functional and $\rho'(e')=1$. 
By definition of the cb norm, the triangle inequality, and denoting $\hat{e}_t:=\hat{\mathbb{K}}^t\hat{e}$ and $\tilde{e}_t:=\tilde{\mathbb{K}}^t\tilde{e}$, 

\begin{align}
\|\hat{M}^{-1}(\tilde{\mathbb{K}}^{t}\tilde{e}-\hat{\mathbb{K}}^{t}\hat{e})\|_{\infty\to e'} &\le
\|\hat{M}^{-1}(\tilde{\mathbb{K}}^{t}\tilde{e}-\hat{\mathbb{K}}^{t}\hat{e})\|_{\infty\rightarrow e', cb}\\
&\leq
\underbrace{\|\hat{M}^{-1}(\tilde{\mathbb{K}}-\hat{\mathbb{K}})\tilde{e}_{t-1}\|_{\infty\rightarrow e', cb}}_{(\operatorname{I}')}\\
&+\underbrace{\|\hat{M}^{-1}(\tilde{\mathbb{K}}-\hat{\mathbb{K}})(\hat{e}_{t-1}-\tilde{e}_{t-1})\|_{\infty\rightarrow e', cb}}_{(\operatorname{II}')}\nonumber\\&+\underbrace{\|\hat{M}^{-1}\tilde{\mathbb{K}}^{t}(\tilde{e}_{t-1}-\hat{e}_{t-1})\|_{\infty\rightarrow e', cb}}_{(\operatorname{III}')}\,.
\end{align}

For the first piece $(\operatorname{I}')$, we have
\begin{align}
\|\hat{M}^{-1}(\tilde{\mathbb{K}}-\hat{\mathbb{K}})\tilde{e}_{t-1}\|_{\infty\rightarrow e', cb}&=\|\hat{M}^{-1}(\tilde{\mathbb{K}}-\hat{\mathbb{K}})\hat{M}\hat{M}^{-1}\tilde{e}_{t-1}\|_{\infty\rightarrow e', cb}\nonumber\\
&\leq \Delta \|\hat{M}^{-1}\tilde{e}_{t-1}\|_{\infty\rightarrow e',cb}=\Delta\,,
\end{align}
where we recall that $\Delta$ is defined in Equation \eqref{errorpardef3}, because $\|\hat{M}^{-1}\tilde{e}_{t-1}\|_{\infty\rightarrow e',cb}=\|M^{-1}\mathbb{K}^{t-1}M(e')\|_{\infty\rightarrow e',cb}=1$ since $M^{-1}\mathbb{K}^{t-1}M$ is unital and completely positive. For the second piece $(\operatorname{II}')$, we have 
\begin{align}
\|\hat{M}^{-1}(\tilde{\mathbb{K}}-\hat{\mathbb{K}})(\hat{e}_{t-1}-\tilde{e}_{t-1})\|_{\infty\rightarrow e', cb}&=\|\hat{M}^{-1}(\tilde{\mathbb{K}}-\hat{\mathbb{K}})\hat {M}\hat{M}^{-1}(\hat{e}_{t-1}-\tilde{e}_{t-1})\|_{\infty\rightarrow e', cb}\nonumber \\
&\leq\|\hat{M}^{-1}(\tilde{\mathbb{K}}-\hat{\mathbb{K}})\hat{M}\|_{\1\otimes e'\rightarrow e', cb}\|\hat{M}^{-1}(\hat{e}_{t-1}-\tilde{e}_{t-1})\|_{\infty\rightarrow e', cb}\nonumber\\
&\leq \Delta \|\hat{M}^{-1}(\hat{e}_{t-1}-\tilde{e}_{t-1})\|_{\infty\rightarrow e', cb}\,.
\end{align}
Finally, for $(\operatorname{III}')$ we have
\begin{align}
\|\hat{M}^{-1}\tilde{\mathbb{K}}(\tilde{e}_{t-1}-\hat{e}_{t-1})\|_{\infty\rightarrow e', cb}&=\|M^{-1}\mathbb{K}^{t-1}M[\hat{M}^{-1}(\tilde{e}_{t-1}-\hat{e}_{t-1})]\|_{\infty\rightarrow e', cb}\label{III'}\\
&\leq \|M^{-1}\mathbb{K}^{t-1}M\|_{\1 \otimes e'\rightarrow e', cb}\|\hat{M}^{-1}(\tilde{e}_{t-1}-\hat{e}_{t-1})\|_{\infty\rightarrow e', cb}\nonumber\\
&=\|\hat{M}^{-1}(\tilde{e}_{t-1}-\hat{e}_{t-1})\|_{\infty\rightarrow e', cb}\,,\label{III'2}
\end{align}
where \eqref{III'} follows from Equations \eqref{tildek} and \eqref{tildeK}, and \eqref{III'2} holds since $M^{-1}\mathbb{K}^{t-1}M$ is unital. By combining the bounds we just found for $(\operatorname{I}')$, $(\operatorname{II}')$ and $(\operatorname{III}')$, we get
\begin{equation}
\|\hat{M}^{-1}(\tilde{e}_{t}-\hat{e}_{t})\|_{\infty\rightarrow e', cb}\leq \Delta+\Delta \|\hat{M}^{-1}(\tilde{e}_{t-1}-\hat{e}_{t-1})\|_{\infty\rightarrow e', cb}+\|\hat{M}^{-1}(\tilde{e}_{t-1}-\hat{e}_{t-1})\|_{\infty\rightarrow e', cb}\,.
\end{equation}
Recalling the definition of $\delta_\infty$ from Equation \eqref{errorpardef2}, we get by induction that
\begin{equation}
\|\hat{M}^{-1}(\tilde{e}_{t}-\hat{e}_{t})\|_{\infty\rightarrow e', cb}\leq (1+\Delta)^t\delta_\infty+(1+\Delta)^t-1\,.\label{eq.induction}
\end{equation}
Finally, combining Equations \eqref{eq.induction}, \eqref{eq(I)}, \eqref{eq(II)}, \eqref{eq(III)} and \eqref{eqtriangleineq}, and using the definition \eqref{errorpardef1} for $\delta_1$,

\begin{align}
\|\hat\omega_{t}-\omega_{t}\|_1&\leq \delta_1+(1+\delta_1)((1+\Delta)^t\delta_\infty+(1+\Delta)^t-1)\nonumber\\
&=(1+\delta_1)(1+\delta_\infty)(1+\Delta)^t-1\,.\nonumber
\end{align}
\end{proof}

\subsection{General bound}
\label{secparest}
With the notation of Section~\ref{sec:opsysqr}, let us consider the order norm on $\mathbb{C}^m$ denoted as $\norm{\cdot}_{e}$. Then the error propagation bound follows from Theorem~\ref{theoerrprop}, with $M$ being the identity map.

\begin{lemma}\label{parambounds}
If $\|\Omega - \hat{\Omega} \|_{2\rightarrow 2}\leq \frac{\sigma_m(\Omega)}{3}$, we have
\begin{align}
    \delta_\infty &\le \tfrac 2{\sqrt{3}\sigma_m(\Omega)}\| \Omega(\1) - \widehat{\Omega(\1)}\|_{2} \\ 
    \Delta&\le \frac{8 m\sqrt{d_\A}}{\sqrt{3}
    \sigma_m(\Omega)}\left( \frac{\|\Omega - \hat{\Omega} \|_{2}}{\sigma_m ({\Omega})^2} + \frac{\| \Omega_{(\cdot)}-\hat{\Omega}_{(\cdot)}\|_{2}}{3\sigma_m (\Omega)}\right),\label{Deltaboundgen}\\
    \delta_1 &\le  4 \left(\frac{\| \hat\Omega - \Omega \|_{2}}{\sigma_m(\Omega)^2} + \frac{\| \widehat{\tau\Omega} -\tau\Omega\|_2} {3\sigma_m (\Omega)}\right)
\end{align}
\end{lemma}
\begin{proof}
From~\eqref{errorpardef1}, 
\begin{align}
    \delta_{\infty} &= \|(\hat U^{\intercal} U)^{-1}(\tilde e-\hat e)\|_{e} \\
    &\leq \| (\hat U^{\intercal} U)^{-1} \|_{2\rightarrow e}\| \tilde e-\hat e \|_{2} \\
    &\leq \tfrac 1{\sigma_m(\Omega)}\| (\hat U^{\intercal} U)^{-1} \|_{2\rightarrow 2}\| \Omega(\1) - \widehat{\Omega(\1)}\|_{2}\\
    &\leq\tfrac 2{\sqrt{3}\sigma_m(\Omega)}\| \Omega(\1) - \widehat{\Omega(\1)}\|_{2} ,
\end{align}
where we used~\eqref{eq:normineqobserv} and~\eqref{bounde} for the second inequality, and~\eqref{uupert} for the last one.  
For $\Delta$ we have, from~\eqref{errorpardef3}
\begin{align}
    \Delta&=\|(\hat U^{\intercal} U)^{-1}(\tilde{\mathbb{K}}-\hat{\mathbb{K}})\hat U^{\intercal} U\|_{\1_\A\otimes e\rightarrow e,cb}\\ 
     &\leq m \|(\hat U^{\intercal} U)^{-1}(\tilde{\mathbb{K}}-\hat{\mathbb{K}})\hat U^{\intercal} U\|_{\1_\A\otimes e\rightarrow e}\\
     &\leq  m\|(\hat U^{\intercal} U)^{-1}\|_{2 \rightarrow e}
     \|\tilde{\mathbb{K}}-\hat{\mathbb{K}}\|_{2\rightarrow 2}
     \|\hat{U}^{\intercal}U\|_{2\rightarrow 2}
     \|\id_{\A\otimes \mathbb{C}^m}\|_{\1_\A\otimes e\rightarrow 2,2} \\
      &\leq m \tfrac{\sqrt{d_{\A}}}{\sigma_m(\Omega)} \|(\hat U^{\intercal} U)^{-1}\|_{2\rightarrow 2} \|\tilde{\mathbb{K}}-\hat{\mathbb{K}}\|_{2\rightarrow 2}\label{equivSchatten2}\\
      &\le \frac{8 m\sqrt{d_\A}}{\sqrt{3}
    \sigma_m(\Omega)}\left( \frac{\|\Omega - \hat{\Omega} \|_{2}}{\sigma_m ({\Omega})^2} + \frac{\| \Omega_{(\cdot)}-\hat{\Omega}_{(\cdot)}\|_{2}}{3\sigma_m (\Omega)}\right),
\end{align}
where we used Lemma~\ref{lemmaamp} for the first inequatity,~\eqref{eq:normineqobserv} for the second inequality, and~\eqref{boundDelta12} for the last one.
Lastly, for $\delta_1$, from~\eqref{errorpardef1}
\begin{align}
    \delta_{1} &= \|(\hat\rho-\tilde\rho)\hat U^{\intercal} U\|_{e,*}\\
    &\leq \|(\hat\rho-\tilde\rho)\hat U^{\intercal} U\|_{2}\\
    &\leq \|\hat\rho-\tilde\rho\|_{2}\label{deltabound1}\,,\\
    &\le  4 \left(\frac{\| \hat\Omega - \Omega \|_{2}}{\sigma_m(\Omega)^2} + \frac{\| \widehat{\tau\Omega} -\tau\Omega\|_2} {3\sigma_m (\Omega)}\right), 
\end{align}
where we used~\eqref{eathohattilde1} for the last inequality.
\end{proof}

We can thus prove the following. 
\begin{theorem}[Reconstruction error from local estimates]\label{thmerrorpropagation} Assuming that, for $\epsilon<1$, 
\begin{align}
 \|\tau\Omega - \widehat{\tau\Omega}) \|_{2}&\leq \frac{3\sigma_m(\Omega)}{4}\,\epsilon\,,\\ 
 \|\Omega(\1) - \widehat{\Omega(\1)} \|_{2}&\leq \frac{\sqrt{3}\sigma_m( \Omega)}{2}
 \epsilon\,,\\
 \|\Omega - \hat{\Omega} \|_{2}&\leq \frac{1}{3}\min\left(\frac{\sigma_m(\Omega)^2}{4}
 \epsilon, \frac{\sqrt{3}\sigma_m(\Omega)^3}{8 t m\sqrt{d_{\A}}
 }\epsilon\right),\label{eqmala}\\  
 \|\Omega_{(\cdot)} - \hat{\Omega}_{(\cdot)} \|_{2}&\leq \frac{3\sqrt{3}\sigma_m(\Omega)^2}{8 tm \sqrt{d_{\A}}}\epsilon\,,
 \end{align}
 
we have 
\begin{align}
\|\hat\omega_{t}-\omega_{t}\|_1\leq \frac{145}{9} \epsilon\,.
\end{align}

 \begin{proof}
 It is immediate to verify that the above inequalities are such that $\|\Omega - \hat{\Omega} \|_{2\rightarrow 2}\leq \|\Omega - \hat{\Omega} \|_{2}\leq \frac{\sigma_m(\Omega)^2}{3}\leq \frac{\sigma_m(\Omega)}{3}$, because $\sigma_m(\Omega)\leq \|\Omega\|_2\leq 1$.  We can thus insert the bounds from Proposition~\ref{parambounds} in the bound of Theorem~\ref{theoerrprop}, which now reads $\|\hat\omega_{t}-\omega_{t}\|_1\leq (1+\epsilon)(1+\frac{4}{3}\epsilon)(1+\frac{4}{3t}\epsilon)^t-1,$ and use $(1+a/t)^t\leq 1+2a$ for $0\leq a\leq 1$.
 \end{proof}

\end{theorem}

Thanks to these bounds, for general finitely-correlated states, we can prove (see~\cref{defrelclass} for the definiton of the class $\mathcal{S}(m,s,\eta)$):

\begin{theorem}[Theorem~\ref{theomainintro}, restated]\label{theomain}
Let $\tilde{\omega}\in \mathcal{S}(m,s,\eta)$. Let $\hat{\omega}_{s},\hat{\omega}_{2s}, \hat{\omega}_{2s+1}$ be estimates of ${\omega}_{s},{\omega}_{2s}, {\omega}_{2s+1}$ respectively, such that $D_{HS}(\hat{\omega}_{s},{\omega}_{s})$, $D_{HS}(\hat{\omega}_{2s},{\omega}_{2s})$, $D_{HS}(\hat{\omega}_{2s+1},{\omega}_{2s+1})$ are smaller than $\frac{\epsilon\eta^3}{20 t m \sqrt{d_{\A}}}$. Then, $\hat{\omega}_t$ constructed from the output of Algorithm~\ref{alg:learnFCS} satisfies
\begin{equation}
\frac{\|\hat{\omega}_t-\omega_t\|_1}{2}\leq \epsilon.
\end{equation}
\end{theorem}

\begin{proof}
According to the error propagation bound in Theorem~\ref{thmerrorpropagation}, the bottleneck is the second term in Eq.~\eqref{eqmala}, which is $ O\left(\frac{\epsilon\sigma_m(\Omega)^3}{t m\sqrt{d_{\A}}}\right)$. All the relevant marginals can be estimated at that precision with high probability with any preferred method. Then, the first $m$ singular values of the estimated marginals will be closer than $O\left(\frac{\epsilon\sigma_m(\Omega)^3}{t m\sqrt{d_{\A}}}\right)$ to the true values, via Lemma~\ref{lem:error_perturbation}, meaning that singular value decomposition truncated to singular values larger than $\sigma_m(\Omega)-O\left(\frac{\epsilon\sigma_m(\Omega)^3}{t m\sqrt{d_{\A}}}\right)$ selects a map $\hat{U}$ of rank $m$  and the error propagation bound applies.
\end{proof}
\subsection{Bound for $C^*$-FCS}\label{sec.statereconstructionc}

If the state admits a $C^*$-realization, we can make use of the operator system inherited by the usual $C^*$ matrix order to get an alternative error propagation bound, which can be better than the one obtained for general states. In fact, the new sample complexity bound is obtained by exchanging $m$ with $d_{\B}$, and we have $m\leq d^2_{\B}$, making the particular bound for $C^*$ states non trivial. The most relevant difference in the proof is that we can explicitly compute the cb norm of a map into $\B$ by considering a single amplification which is determined by $d_{\B}$, see \cref{lemmaamp2}.
With the notation of Section~\ref{sec:opsysqr}, let us consider the order norm on $\V$ denoted as $\norm{\cdot}_{e'}$, $e'=L\1_{\B}$. Then the error propagation bound from Theorem~\ref{theoerrprop} applies, where we can choose the map $M$ to be $M=U^{\intercal} \F R$.

\begin{lemma}
We have that
\begin{align}
    \delta_\infty &\le \tfrac {2\sqrt{d_{\B}}}{\sqrt{3}\sigma_m(\Omega)}\| \Omega(\1) - \widehat{\Omega(\1)}\|_{2} \\ 
    \Delta&\le \frac{8 d_{\B}\sqrt{d_\A}}{\sqrt{3}
    \sigma_m(\Omega)}\left( \frac{\|\Omega - \hat{\Omega} \|_{2}}{\sigma_m ({\Omega})^2} + \frac{\| \Omega_{(\cdot)}-\hat{\Omega}_{(\cdot)}\|_{2}}{3\sigma_m (\Omega)}\right)\\
    \delta_1 &\le  4 \left(\frac{\| \hat\Omega - \Omega \|_{2}}{\sigma_m(\Omega)^2} + \frac{\| \widehat{\tau\Omega} -\tau\Omega\|_2} {3\sigma_m (\Omega)}\right)
\end{align}
\end{lemma}
\begin{proof}

We will use the Euclidean norm on $\V$ induced by the identification of $\V$ with a subspace $\V'$ of $\W$ via the map $R$. Via this identification we can also consider the Schatten norms of $x\in \mathcal{V}$. For $x\in\V$, we thus have $\norm{x}_{\infty}= \norm{R x}_{\infty}\leq \norm{R x}_{2}=\norm{x}_{2}$. This implies $\norm{RM^{-1}}_{2\rightarrow \infty}\leq \norm{RM^{-1}}_{2\rightarrow 2} $. We also have $\|x\|_{(Re')^{(n)}}\leq \|x\|_{\infty}$ for $x\in \mathbb{M}_n(\V)$ as an immediate consequence of the definition of the order norm on $\V$.
Note that the singular values of $U^{\intercal}\F R$  and $U^{\intercal}\F\Pi_{\V'}$ are the same. On the other hand, we can observe that the unnormalized $l$-th right singular vector of $U^{\intercal}\F\Pi_{\V'}$ can be written as $\cE^{t^*}_{X_l}(\1_{\B})$ for some $X_l\in\A_{[1,t^*]}$ and orthogonal to the kernel of 
$\Omega$, and $\|X_{l}\|_2=1$. But then 
\begin{equation}
\sqrt{d_\cB}\,\sigma_l(U^{\intercal}\F\Pi_{\V'})\ge\sigma_l(U^{\intercal}\F\Pi_{\V'})\|\cE^{(t^*)}_{X_l}(\1_{\B})\|_2 =\|U^{\intercal}\F\cE^{(t^*)}_{X_l}(\1_{\B})\|_2=\|U^{\intercal}\Omega(X_l)\|_2\geq \sigma_{m}(\Omega),\label{boundsingval}
\end{equation}
using that $\|\cE^{(t^*)}_{X_l}(\1_{\B})\|_2\le \sqrt{d_\cB}\|\cE^{(t^*)}_{X_l}(\1_{\B})\|_\infty\le \sqrt{d_\cB}$,
which implies 
\begin{equation}\label{Mminus1norm}
\norm{M^{-1}}_{2\rightarrow e'}=\norm{RM^{-1}}_{2\rightarrow Re'}\leq \norm{RM^{-1}}_{2\rightarrow \infty}\leq\norm{RM^{-1}}_{2\rightarrow 2}\leq \frac{\sqrt{d_{\B}}}{\sigma_{m}(\Omega)}.
\end{equation}

In the following, we repeat the steps of Lemma~\ref{parambounds}, explaining only what is different in this case. 
We have
\begin{align}
    \delta_{\infty} &= \|M^{-1}(\hat U^{\intercal} U)^{-1}(\tilde e-\hat e)\|_{e'} \\
    &\leq \norm{M^{-1}}_{2\rightarrow e'} \| (\hat U^{\intercal} U)^{-1} \|_{2\rightarrow 2}\| \tilde e-\hat e \|_{2} \\
    &\leq \tfrac {\sqrt{d_B}}{\sigma_m(\Omega)}\| (\hat U^{\intercal} U)^{-1} \|_{2\rightarrow 2}\| \Omega(\1) - \widehat{\Omega(\1)}\|_{2}\\
    &\leq\tfrac {2\sqrt{d_B}}{\sqrt{3}\sigma_m(\Omega)}\| \Omega(\1) - \widehat{\Omega(\1)}\|_{2},
\end{align}

where we used~\eqref{Mminus1norm} in the second inequality.

For $\Delta$ we have
\begin{align}
    \Delta&=\|M^{-1}(\hat U^{\intercal} U)^{-1}(\tilde{\mathbb{K}}-\hat{\mathbb{K}})\hat U^{\intercal} U M\|_{\1_\A\otimes e'\rightarrow e',cb}\\ 
    &=\|R M^{-1}(\hat U^{\intercal} U)^{-1}(\tilde{\mathbb{K}}-\hat{\mathbb{K}})\hat U^{\intercal} U M\|_{\1_\A\otimes e'\rightarrow Re',cb}\\ 
    &\leq\| R M^{-1}(\hat U^{\intercal} U)^{-1}(\tilde{\mathbb{K}}-\hat{\mathbb{K}})\hat U^{\intercal} U M\|_{\1_\A\otimes e'\rightarrow \infty,cb}\\ 
    &=\|\id_{\B}\otimes R M^{-1}(\hat U^{\intercal} U)^{-1}(\tilde{\mathbb{K}}-\hat{\mathbb{K}})\hat U^{\intercal} U M\|_{\1_\A\otimes \1_\B\otimes e'\rightarrow \infty}\\
     &\leq  \norm{\id_{\B} \otimes R M^{-1}}_{2\rightarrow \infty} \|(\hat U^{\intercal} U)^{-1}\|_{2 \rightarrow 2}
     \|\tilde{\mathbb{K}}-\hat{\mathbb{K}}\|_{2\rightarrow 2}
     \|\hat{U}^{\intercal}U\|_{2\rightarrow 2}
     \|\id_{\A}\otimes \id_{\B}\otimes M \|_{\1_\A\otimes e'\rightarrow 2} \\
      &\leq  \tfrac{\sqrt{d_{\A}}d_{\B}}{\sigma_m(\Omega)} \|(\hat U^{\intercal} U)^{-1}\|_{2\rightarrow 2} \|\tilde{\mathbb{K}}-\hat{\mathbb{K}}\|_{2\rightarrow 2}\label{equivSchatten22}\\
      &\le \frac{8 \sqrt{d_\A}d_{\B}}{\sqrt{3}
    \sigma_m(\Omega)}\left( \frac{\|\Omega - \hat{\Omega} \|_{2}}{\sigma_m ({\Omega})^2} + \frac{\| \Omega_{(\cdot)}-\hat{\Omega}_{(\cdot)}\|_{2}}{3\sigma_m (\Omega)}\right),
\end{align}
where we used that $\|x\|_{(Re')^{(n)}}\leq \|x\|_{\infty}$ in the first inequality, Lemma~\ref{lemmaamp2} in the third equality and \eqref{Mminus1norm} and \eqref{contractC} in the third inequality.
Lastly, for $\delta_1$:
\begin{align}
    \delta_{1} &= \|(\hat\rho-\tilde\rho)\hat U^{\intercal} U M\|_{e',*}\\
    &\leq \|(\hat\rho-\tilde\rho)\hat U^{\intercal} U\|_{2}\\
    &\leq \|\hat\rho-\tilde\rho\|_{2}\label{deltabound1c}\,\\
    &\le  4 \left(\frac{\| \hat\Omega - \Omega \|_{2}}{\sigma_m(\Omega)^2} + \frac{\| \widehat{\tau\Omega} -\tau\Omega\|_2} {3\sigma_m (\Omega)}\right) 
\end{align}
\end{proof}

We can thus prove
\begin{theorem}[Reconstruction error from local estimates]\label{thmerrorpropagationc} Assuming that, for $\epsilon<1$, 
 \begin{align}
 \|\tau\Omega - \widehat{\tau\Omega}) \|_{2}&\leq \frac{3\sigma_m(\Omega)}{4}\,\epsilon\,,\\ 
 \|\Omega (\1)- \widehat{\Omega(\1)} \|_{2}&\leq \frac{\sqrt{3}\sigma_m( \Omega)}{2\sqrt{d_B}}
 \epsilon\,,\\
 \|\Omega - \hat{\Omega} \|_{2}&\leq \frac{1}{3}\min\left(\frac{\sigma_m(\Omega)^2}{4}
 \epsilon, \frac{\sqrt{3}\sigma_m(\Omega)^3}{8 t d_{\B}\sqrt{d_{\A}}
 }\epsilon\right),\label{eqmala2}\\  
 \|\Omega_{(\cdot)} - \hat{\Omega}_{(\cdot)} \|_{2}&\leq \frac{3\sqrt{3}\sigma_m(\Omega)^2}{8 t d_{\B} \sqrt{d_{\A}}}\epsilon\,,
 \end{align}
we have
\begin{align}
\|\hat\omega_{t}-\omega_{t}\|_1\leq \frac{145}{9} \epsilon\,.
\end{align}

 \begin{proof}

 It is immediate to verify that the above inequalities are such that $\|\Omega - \hat{\Omega} \|_{2\rightarrow 2}\leq \|\Omega - \hat{\Omega} \|_{2}\leq \frac{\sigma_m(\Omega)^2}{3}\leq \frac{\sigma_m(\Omega)}{3}$, because $\sigma_m(\Omega)\leq \|\Omega\|_2\leq 1$.  We can thus insert the bounds from Proposition~\ref{parambounds} in the bound of Theorem~\ref{theoerrprop}, and use $(1+a/t)^t\leq 1+2a$ for $0\leq a\leq 1$.
 \end{proof}

\end{theorem}

The proof of the main theorem is analogous to the one of Theorem~\ref{theomain}. Note that the only difference is the dependence of the Hilbert-Schmidt error $\epsilon_{HS}$ on the parameters of the model, and see~\cref{defrelclassc} for the definiton of the class $\mathcal{S}_q(m,s,\eta)$.

\begin{theorem}\label{theomain2}
Let $\tilde{\omega}\in \mathcal{S}_q(d_{\mathcal{B}},s,\eta)$. Given $\hat{\omega}_{s},\hat{\omega}_{2s}, \hat{\omega}_{2s+1}$ estimates of ${\omega}_{s},{\omega}_{2s}, {\omega}_{2s+1}$ respectively, such that $D_{HS}(\hat{\omega}_{s},{\omega}_{s})$, $D_{HS}(\hat{\omega}_{2s},{\omega}_{2s})$, $D_{HS}(\hat{\omega}_{2s+1},{\omega}_{2s+1})$ are smaller than $\frac{\epsilon\eta^3}{20 t d_{\mathcal{B}} \sqrt{d_{\A}}}$. Then, $\hat{\omega}_t$ constructed from the output of Algorithm~\ref{alg:learnFCS} satisfies
\begin{equation}
\frac{\|\hat{\omega}_t-\omega_t\|_1}{2}\leq \epsilon
\end{equation}
\end{theorem}

\section{Non-translation-invariant FCS on a finite chain}
\label{secnontrans}

The techniques and the formalism of the translation invariant case can be generalized to the non-translation invariant case on a finite chain. We consider states given by the following definition

\begin{definition}[Finitely correlated states on a finite chain]
For a state $\omega\in\mathcal{A}_{[1,N]}^*$, let $\Omega^{[i,j,k]}:\A_{[j+1,k]}\rightarrow \A_{[i,j]}^*$ be defined as
\begin{align}
\Omega^{[i,j,k]}[X](Y)=\omega(\1^{\otimes i-1}\otimes Y\otimes X\otimes \1^{\otimes N-k}),\,\,  1\leq i\leq j\leq k\leq N,
\end{align}
and $\Omega^{[i,j,k]}=\Omega^{[\max\{1,i\},j,\min\{k,N\}]}$ for all the other values of $i$ and $k$, with $0\leq j\leq N$.
A state $\omega\in\mathcal{A}^*_{[1,N]}$ is $(r,l,m)$ finitely correlated if $\mathrm{rank}\,\Omega^{[1,j,N]}= \mathrm{rank}\,\Omega^{[j-l+1,j,j+r]}=:m_j\leq m$ for $0\leq j\leq N$. We also use the notation $\Omega^{[i,j,k]}_{X}:=\Omega^{[i,j,k]}(X\otimes \cdot)$.
\end{definition}

 Let also $\Omega^{[j-l+1,j,j+r]}=U_jD_jO_j$ be singular value decompositions. A notion of realizations for states on a finite chain can be obtained analogously to the translation invariant case, as well as a notion of regular realization, but we will not report on this in depth. 
\begin{proposition}[Observable realization]\label{reconstructionnontrans}Defining
\begin{align}
{\mathbb{K}}^{(1)}_{A}&=\Omega_{A}^{[1,0,1+r]}({U}_{1}^{\intercal}\Omega^{[1,1,1+r]})^+, \\
{\mathbb{K}}^{(N)}_{A}&={{U}}^{\intercal}_{N-1}\Omega_{A}^{[N-l,N-1,N]},\\
\mathbb{K}^{(j)}_{X_{j}}&:=U_{j-1}^{\intercal}\Omega^{[j-l,j-1,j+r]}_{X_{j}}(U_j^{\intercal}\Omega^{[j-l+1,j,j+r]})^+, \mathrm{for}\,2\leq j\leq N-1,
\end{align}
we have
\begin{align}
\omega(X_1\otimes \cdots\otimes X_N)
&=\mathbb{K}_{X_{1}}^{(1)}\cdots\mathbb{K}_{X_{N}}^{(N)}.
\end{align}
In analogy to the translation-invariant case it will be called \textbf{observable realization}. 
\end{proposition}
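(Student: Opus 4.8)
The plan is to verify the factorization $\omega(X_1\otimes\cdots\otimes X_N)=\mathbb K^{(1)}_{X_1}\cdots\mathbb K^{(N)}_{X_N}$ by reproducing, site by site, the telescoping argument that made the translation-invariant observable realization work in \cref{prop.observablequasireal}, now tracking the fact that the projectors $U_jU_j^\intercal$ change from site to site. The essential relation I would establish first is the analogue of $\mathbb E_A\Omega(X)=\Omega_A(X)$: from the definition of $\Omega^{[i,j,k]}$ one reads off directly that prepending an observable $X_j$ on site $j$ relates the maps with a shifted left boundary, namely $\Omega^{[j-l,j-1,j+r]}_{X_j}=\Omega^{[j-l,j-1,j+r]}(X_j\otimes\,\cdot\,)$ acts on $\A_{[j+1,j+r]}$ and lands in $\A_{[j-l,j-1]}^*$, whereas $\Omega^{[j-l+1,j,j+r]}$ lands in $\A_{[j-l+1,j]}^*$; the point is that both encode the same correlation functional of $\omega$ with one extra site absorbed.

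First I would fix the bookkeeping. Using $U_j^\intercal U_j=\1_{\mathbb C^{m_j}}$ and $\sigma_m(\Omega^{[\cdot]})\neq 0$ on the retained subspace, the pseudoinverse identity $(U_j^\intercal\Omega^{[j-l+1,j,j+r]})^+ U_j^\intercal\Omega^{[j-l+1,j,j+r]}=P_j$ holds, where $P_j$ is the projector onto the row space, i.e.\ the support of $\Omega^{[j-l+1,j,j+r]}$ on $\A_{[j+1,j+r]}$. The rank hypothesis $\mathrm{rank}\,\Omega^{[1,j,N]}=\mathrm{rank}\,\Omega^{[j-l+1,j,j+r]}=m_j$ is exactly what guarantees that restricting to the finite block $\A_{[j-l+1,j+r]}$ loses no information, so that composing these local reconstruction maps reproduces the full correlation structure rather than a truncation of it. I would state this as the non-homogeneous counterpart of the $t^*$-sufficiency used implicitly via $s\ge\max\{t_2^*,t_1^*+1\}$.

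The core computation is then a right-to-left induction. Define partial contractions $v^{(k)}:=\mathbb K^{(k)}_{X_k}\cdots\mathbb K^{(N)}_{X_N}$, an element of $\mathbb C^{m_{k-1}}$ (a vector, since $\mathbb K^{(N)}_A$ has no pseudoinverse factor and terminates the chain). The inductive claim is that $U_{k-1}v^{(k)}$ equals the functional $Y\mapsto \omega(\1^{\otimes k-1}\otimes Y\otimes X_k\otimes\cdots\otimes X_N\otimes\1^{\otimes\cdots})$ restricted to $\A_{[\,\cdot\,,k-1]}$, up to the support projector. At each step one inserts $U_jU_j^\intercal=P_{U_j}$, uses that the relevant correlation functionals lie in the support of $\Omega^{[\cdot]}$ so the projector acts as the identity on them (this is where the matched-rank hypothesis is used), and applies $\mathbb K^{(j)}_{X_j}(U_j^\intercal\Omega^{[j-l+1,j,j+r]})=U_{j-1}^\intercal\Omega^{[j-l,j-1,j+r]}_{X_j}P_j=U_{j-1}^\intercal\Omega^{[j-l,j-1,j+r]}_{X_j}$ to peel off one site. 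The boundary terms $\mathbb K^{(1)}$ and $\mathbb K^{(N)}$ play the roles of $\rho$ and $e$ from the translation-invariant construction and close the telescope.

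The main obstacle I anticipate is purely organizational rather than conceptual: keeping the three shifting index windows $[j-l,j-1,j+r]$, $[j-l+1,j,j+r]$ and $[1,j,N]$ consistent so that the pseudoinverse of one site's map cancels against the range of the next, and checking at the two boundaries (site $1$ with no left memory, site $N$ with no right pseudoinverse) that the degenerate maps match the conventions $\Omega^{[i,j,k]}=\Omega^{[\max\{1,i\},j,\min\{k,N\}]}$. Once the support-projector-invisibility lemma (matched ranks $\Rightarrow P_j$ acts as identity on the occurring functionals) is in place, the rest is the same telescoping identity proved in \cref{prop.observablequasireal}, carried out with $U$ replaced by the site-dependent $U_j$.
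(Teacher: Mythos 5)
Your proposal is correct and follows essentially the same route as the paper's proof in Appendix C.1: the paper realizes $\omega$ through intermediate maps $\mathbb{E}^{(j)}_{X_j}$ on the functional spaces $V_{[j+1,j+r]}$ satisfying $\mathbb{E}^{(j)}_{X_j}\Omega^{[j-l+1,j,j+r]}=\Omega^{[j-l,j-1,j+r]}_{X_j}$ and then conjugates by the $U_j$, using exactly your two key facts — the matched-rank hypothesis making the restriction to the finite window lossless, and $U_jU_j^{\intercal}$ acting as the identity on the image of $\Omega^{[j-l+1,j,j+r]}$ — so that the product telescopes. Your right-to-left induction on the partial contractions is just a reorganization of the same telescoping identity.
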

\begin{proof}
Let $V_{[i+1,j]}$ be the space generated by the span of the functionals $\Psi^{(i+1,j)}_{X}\in \A_{[\max(1,i-l+1),\min(i,N)]}^*$, $X\in\A_{[\max(1,i+1),\min(j,N)]}$, $\Psi^{(i+1,j)}_{X}(Y):=\omega(Y\otimes X)$ (where $V_{[N+1,N]}$ is one-dimensional and it is the span of the functional $\Psi(Y)=\omega(\1\otimes Y)$, $Y\in \A_{[N-l,N]}$, and $V_{[1,j]}$ is one-dimensional as it is the span of functionals $\Phi(X)=\omega(X\otimes \1)$, $X\in \A_{[1,j]}$ also a one-dimensional vector space). For $i=1,...,N$
we can define the linear maps $\mathbb{E}_A^{(i)}:V_{[i+1,i+r+1]}\rightarrow V_{[i,i+r]}$ as $\mathbb{E}^{(i)}_A\Psi^{(i+1,i+r+1)}_{X}=\Psi^{(i,i+r+1)}_{A\otimes X}=\Psi^{(i,i+r)}_{X'}$, where the last equality holds for some $X'\in A_{[\max(1,i+1),\min(N,i+r)]}$ because of the definition of $(l,r,m)$-finitely correlated state.
We obtain

\begin{equation}
\omega(X_1\otimes \cdots\otimes X_N)=\mathbb{E}_{X_{1}}^{(1)}\cdots\mathbb{E}_{X_{N}}^{(N)}.
\end{equation}
The image of $\Omega^{[j-l+1,j,j+r]}$ lies in $V_{[j+1,j+r+1]}$, so we can write
\begin{align}
\mathbb{E}_{X_{j}}^{(j)}\Omega^{[j-l+1,j,j+r]}=\Omega^{[j-l,j-1,j+r]}_{X_{j}},
\end{align}
where $\Omega^{[j-l,j-1,j+r]}_{X_{j}}:\A_{[j,\min(j+r,N)]}\rightarrow V_{[j,j+r]}=V_{[j,j+r-1]}$, $\Omega^{[j-l,j-1,j+r]}_{X_{j}}(X)=\Psi^{(j,j+r)}_{X_{j}\otimes X}$. Using the singular value decomposition
\begin{align}
\Omega^{[j-l+1,j,j+r]}={{U}}_{j}D_jO_{j},
\end{align}
with $U_{j}:\mathbb{R}^{m_j}\rightarrow V_{[j+1,j+r+1]}$, $D:\mathbb{R}^{m_j}\rightarrow \mathbb{R}^{m_j}$, $O_{j}: \A_{[j+1,j+r]}\rightarrow \mathbb{R}^{m_j}$. 
We get
\begin{align}
\mathbb{E}_{X_{N}}^{(N)}&=\Omega^{[N-l,N-1,N+r]}_{X_{N}}\\
\mathbb{E}_{X_{j}}^{(j)}{U}_{j}&=\Omega^{[j-l,j-1,j+r]}_{X_{j}}({U}_j^{\intercal}\Omega^{[j-l+1,j,j+r]})^+
\end{align}
Note that for $X\in A_{[j+1,j+r]}$, ${U}_j{U}_j^{\intercal}\Psi^{[j+1,j+r]}_{X}={U}_j{U}_j^{\intercal}\Omega^{[j-l+1,j,j+r]}(X)=\Omega^{[j-l+1,j,j+r]}(X)=\Psi^{[j+1,j+r]}_X$, therefore 
we have
\begin{align}
\omega(X_1\otimes \cdots\otimes X_n)&=
\mathbb{K}_{X_{1}}^{(1)}\cdots\mathbb{K}_{X_{N}}^{(N)}.
\end{align}
\end{proof}

 Note that $m_N=m_0=1$ and thus we do not need the analogue of $e$ and $\rho$ in this case because they are essentially included in $\mathbb{K}^{(N)}:\mathbb C \otimes \A\to \mathbb{C}^{m_{N-1}}$ and $\mathbb{K}^{(1)}:\mathbb{C}^{m_{1}}\otimes \A\to\mathbb C$ respectively.
In the following, we use the notation $\mathbb{K}^{[i,j]}:=\mathbb{K}^{(i)}\mathbb{K}^{(i+1)}\cdots\mathbb{K}^{(j)}$.

\begin{proposition}[Empirical realizations]\label{propempquasinonhom}
For any collection of maps $\hat U_j:\mathbb{C}^{m_j}\rightarrow \A_{[j-l+1,j]}^*$ (with real coefficients in a self-adjoint basis) such that $\hat{U}_j^{\intercal}U_j$ is invertible, defining

\begin{align}
\tilde{\mathbb{K}}^{(1)}_{A}&=\Omega_{A}^{[1,0,1+r]}({\hat{U}_{1}}^{\intercal}\Omega^{[1,1,1+r]})^+, \\
\tilde{\mathbb{K}}^{(N)}_{A}&={\hat{U}}^{\intercal}_{N-1}\Omega_{A}^{[N-l,N-1,N]},\\
\tilde{\mathbb{K}}^{(j)}_{A}&={\hat{U}}^{\intercal}_{j-1}\Omega_{A}^{[j-l,j-1,j+r]}({\hat{U}_{j}}^{\intercal}\Omega^{[j-l+1,j,j+r]})^+, \mathrm{for}\,2\leq j\leq N-1,
\end{align}
we have
\begin{align}
\omega(X_1\otimes \cdots\otimes X_N)
&=\tilde{\mathbb{K}}_{X_{1}}^{(1)}\cdots\tilde{\mathbb{K}}_{X_{N}}^{(N)}.
\end{align}
Any such specification of maps $\tilde{\mathbb{K}}^{(j)}_{(\cdot)}$ is called \textbf{empirical realization}.
\end{proposition}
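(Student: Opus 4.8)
The plan is to mirror the translation-invariant argument behind Proposition~\ref{propempquasi}: I will show that each empirical operator is obtained from the corresponding observable operator of Proposition~\ref{reconstructionnontrans} by a similarity transformation whose conjugating factors telescope away in the product. Set $G_j := \hat U_j^{\intercal}U_j$ for $1\le j\le N-1$, which are invertible by hypothesis, and adopt the boundary convention $G_0 = G_N = 1$ (legitimate because $m_0=m_N=1$). The goal is the family of identities
\begin{align}\label{eq:empsim}
\tilde{\mathbb{K}}^{(j)}_A = G_{j-1}\,\mathbb{K}^{(j)}_A\,G_j^{-1}, \qquad 1\le j\le N.
\end{align}
Granting \eqref{eq:empsim}, the proposition follows at once, since
\begin{align}
\tilde{\mathbb{K}}^{(1)}_{X_1}\cdots\tilde{\mathbb{K}}^{(N)}_{X_N}
&= \mathbb{K}^{(1)}_{X_1}G_1^{-1}G_1\mathbb{K}^{(2)}_{X_2}G_2^{-1}\cdots G_{N-1}\mathbb{K}^{(N)}_{X_N}\\
&= \mathbb{K}^{(1)}_{X_1}\cdots\mathbb{K}^{(N)}_{X_N}
= \omega(X_1\otimes\cdots\otimes X_N)
\end{align}
by Proposition~\ref{reconstructionnontrans}.

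To establish \eqref{eq:empsim} I would insert the singular value decompositions $\Omega^{[j-l+1,j,j+r]}=U_jD_jO_j$, where $U_j^{\intercal}U_j=O_jO_j^{\intercal}=\1$. Since $O_j$ has orthonormal rows and $G_jD_j$ is invertible, this gives $(U_j^{\intercal}\Omega^{[j-l+1,j,j+r]})^+=O_j^{\intercal}D_j^{-1}$ and $(\hat U_j^{\intercal}\Omega^{[j-l+1,j,j+r]})^+=(G_jD_jO_j)^+=O_j^{\intercal}D_j^{-1}G_j^{-1}$, so the only difference on the right-hand factor of $\tilde{\mathbb{K}}^{(j)}_A$ versus $\mathbb{K}^{(j)}_A$ is precisely the trailing $G_j^{-1}$ demanded by \eqref{eq:empsim}. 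On the left-hand factor one has to convert $\hat U_{j-1}^{\intercal}$ into $G_{j-1}U_{j-1}^{\intercal}=\hat U_{j-1}^{\intercal}P_{U_{j-1}}$, where $P_{U_{j-1}}=U_{j-1}U_{j-1}^{\intercal}$ is the orthogonal projector onto $\operatorname{ran}U_{j-1}$; everything else is routine algebra with the partial isometries. For $j=1$ there is no left factor ($G_0=1$) and only the right pseudoinverse contributes $G_1^{-1}$, while for $j=N$ there is no right pseudoinverse and only the left conversion is needed.

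The crux, and the only genuine obstacle, is to justify that the inserted projector acts as the identity, i.e.
\begin{align}
P_{U_{j-1}}\,\Omega_A^{[j-l,j-1,j+r]} = \Omega_A^{[j-l,j-1,j+r]},
\end{align}
equivalently that the column space of $\Omega_A^{[j-l,j-1,j+r]}$ in $\A_{[j-l,j-1]}^*$ lies in $\operatorname{ran}U_{j-1}$. Since inserting $A$ at site $j$ and restricting can only shrink the column space, it suffices to show $\operatorname{col}\Omega^{[j-l,j-1,j+r]}=\operatorname{ran}U_{j-1}=\operatorname{col}\Omega^{[j-l,j-1,j-1+r]}$, which I would obtain by a squeeze using the rank-saturation hypothesis defining an $(r,l,m)$-finitely correlated state. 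Namely, enlarging the right window is monotone, $\operatorname{col}\Omega^{[j-l,j-1,j-1+r]}\subseteq\operatorname{col}\Omega^{[j-l,j-1,j+r]}\subseteq\operatorname{col}\Omega^{[j-l,j-1,N]}$, while restricting the left window from $[1,j-1]$ to $[j-l,j-1]$ realizes $\operatorname{col}\Omega^{[j-l,j-1,N]}$ as a linear image of $\operatorname{col}\Omega^{[1,j-1,N]}$, so that $\dim\operatorname{col}\Omega^{[j-l,j-1,N]}\le m_{j-1}$; since the smallest space $\operatorname{col}\Omega^{[j-l,j-1,j-1+r]}=\operatorname{ran}U_{j-1}$ already has dimension $m_{j-1}$, all containments are equalities (the clipping convention handles $j+r>N$). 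The same argument at $j=N$ gives $P_{U_{N-1}}\Omega_A^{[N-l,N-1,N]}=\Omega_A^{[N-l,N-1,N]}$, completing \eqref{eq:empsim} and hence the proof.
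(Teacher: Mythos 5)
Your proof is correct and follows essentially the route the paper intends: the paper states this proposition without proof as the non-homogeneous analogue of Proposition~\ref{propempquasi}, whose content is exactly your similarity relations $\tilde{\mathbb{K}}^{(j)}_A = G_{j-1}\mathbb{K}^{(j)}_A G_j^{-1}$ followed by telescoping against Proposition~\ref{reconstructionnontrans}. Your squeeze argument for $P_{U_{j-1}}\Omega_A^{[j-l,j-1,j+r]}=\Omega_A^{[j-l,j-1,j+r]}$ is the same rank-saturation fact the paper invokes (more tersely) in the appendix proof of the observable realization, so you have in fact supplied a detail the paper glosses over.
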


A reconstruction algorithm similar to~\cref{alg:learnFCS} can be run in this case fixing a threshold $\eta$, with inputs  $\hat{\Omega}^{[j-l,j,j+r]}$ as an estimate of ${\Omega}^{[j-l,j,j+r]}$, $\hat{\Omega}_{(\cdot)}^{[j-l-1,j-1,j+r]}$ as an estimate of ${\Omega}_{(\cdot)}^{[j-l-1,j-1,j+r]}$ and computing $\hat{U}_{j}$ from obtained from the singular value decomposition of $\hat{\Omega}^{[j-l,j,j+r]}$ truncated to the singular values larger than $\eta/2$. As in the translation invariant case, as soon as the Hilbert-Schmidt norm errors are less than $\eta/3$, the matrices $\hat{U}_{j}$ correspond to those obtained from truncation to the first $m_j$ singular values. We refer to this algorithm as $\textsf{LearnFCS(l,r)}$. The estimated realization that it aims to computes is the following: 

\begin{definition}[Spectral state reconstruction, non-homogeneous case]\label{defstatereconnonhom}
Given estimates $\hat{\Omega}_{(\cdot)}^{[j-l-1,j-1,j+r]}$ of $\Omega_{(\cdot)}^{[j-l-1,j-1,j+r]}$, and estimates $\hat{\Omega}^{[j-l,j,j+r]}$, of ${\Omega}^{[j-l,j,j+r]}$ a state reconstruction $\hat\omega$ is obtained as

\begin{equation}
\hat\omega(X_1\otimes\cdots\otimes X_N)=\hat{\mathbb{K}}^{(1)}_{X_{1}}\cdots\hat{\mathbb{K}}^{(N)}_{X_{N}},
\end{equation}
where

\begin{align}
\hat{\mathbb{K}}^{(1)}_{A}&=\hat\Omega^{[1,0,1+r]}_{A}({\hat{U}_1}^{\intercal}\hat\Omega^{[1,1,1+r]})^+,\\
\hat{\mathbb{K}}^{(N)}_{A}&={\hat{U}_{N-1}}^{\intercal}\hat\Omega^{[N-l,N-1,N+r]}_{A},\\
\hat{\mathbb{K}}^{(j)}_{A}&={\hat{U}_{j-1}}^{\intercal}\hat\Omega^{[j-l,j-1,j+r]}_{A}({\hat{U}_j}^{\intercal}\hat\Omega^{[l-j+1,j,j+r]})^+,
\end{align}
and $\hat{U}_j\hat{D}_j\hat{O}_j$ is the singular value decomposition of $\hat{\Omega}^{[j-l+1,j,j+r]}$ truncated to the first $m_j$ singular values.
\end{definition}

Assuming we have operator systems $\V_j$, $j=0,\dots,N$ (with $\V_0$, $\V_N$ being $\mathbb{C}$ with unit $1$) with units $e_j$ and invertible linear maps $M_{j}:\V_j\rightarrow \mathbb{C}^{m_j}$, such that $M_{j}^{-1}\mathbb{K}^{[j+1,k]}M_k$ are unital and completely positive 
(where $M_0$ and $M_N$ are identity maps), we can thus define the following error parameters (where $\hat U^{\intercal}_{N}U_{N}$, $\hat U^{\intercal}_{0}U_{0}$ are just one dimensional identities):

\begin{align}
\Delta'&:=\max_{j\in [N]}\|M^{-1}_{j-1}(\hat{U}^{\intercal}_{j-1}U_{j-1})^{-1}(\tilde{\mathbb{K}}^{(j)}-\hat{\mathbb{K}}^{(j)})\hat{U}_{j}^{\intercal}U_{j}M_j\|_{\1_{\A}\otimes e_{j+1}\rightarrow e_{j},cb}.\label{errorpardefb32gen}
\end{align}

We also denote $\hat{M}_j\coloneqq\hat U^{\intercal}_{j}U_{j}M_j$. We then have

\begin{theorem}[Error propagation for general orders and finite-size chain]\label{theoerrprop22}

For the state reconstruction in Definition~\ref{defstaterecon}, and the error parameter defined in Eq.~\eqref{errorpardefb32gen}, if $\hat{U}_j^{\intercal}U_j$ are invertible, then

\begin{align}
\|\hat\omega-\omega\|_1\leq (1+\Delta')^N-1.
\end{align}
\end{theorem}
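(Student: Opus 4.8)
The plan is to follow the structure of the translation-invariant error propagation bound (Theorem~\ref{theoerrprop}), with the single contractive generator $\mathbb{K}$ and the boundary data $e,\rho$ replaced by the position-dependent chain $\mathbb{K}^{(1)},\dots,\mathbb{K}^{(N)}$; here the boundary conditions are already folded into $\mathbb{K}^{(1)}$ and $\mathbb{K}^{(N)}$ because $m_0=m_N=1$. Abbreviating $\hat M_j:=\hat U_j^{\intercal}U_j\,M_j$, so that $\hat M_0=\hat M_N=\id$ and Eq.~\eqref{errorpardefb32gen} reads $\Delta'=\max_j\|\hat M_{j-1}^{-1}(\tilde{\mathbb{K}}^{(j)}-\hat{\mathbb{K}}^{(j)})\hat M_j\|_{\1_\A\otimes e_j\to e_{j-1},cb}$, I would introduce the tail products $\hat e^{(j)}:=\hat{\mathbb{K}}^{(j)}\cdots\hat{\mathbb{K}}^{(N)}$ and $\tilde e^{(j)}:=\tilde{\mathbb{K}}^{(j)}\cdots\tilde{\mathbb{K}}^{(N)}$, regarded as maps $\A_{[j,N]}\to\mathbb{C}^{m_{j-1}}$, with $\hat e^{(1)}=\hat\omega$ and $\tilde e^{(1)}=\omega$ by Propositions~\ref{reconstructionnontrans} and~\ref{propempquasinonhom}. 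The quantity to control recursively is
\[
d_j:=\|\hat M_{j-1}^{-1}(\hat e^{(j)}-\tilde e^{(j)})\|_{\infty\to e_{j-1},\,cb},
\]
and, since a functional into $\mathbb{C}=\mathbb{M}_1$ has cb-norm equal to its ordinary norm, $d_1=\|\hat\omega-\omega\|_1$.

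First I would record the non-homogeneous analogue of Eq.~\eqref{tildek}, namely $\tilde{\mathbb{K}}^{(j)}=(\hat U_{j-1}^{\intercal}U_{j-1})\,\mathbb{K}^{(j)}\,(\id_\A\otimes(\hat U_j^{\intercal}U_j)^{-1})$, which follows from Propositions~\ref{reconstructionnontrans} and~\ref{propempquasinonhom} exactly as in the translation-invariant case (using that each $\Omega^{[\cdot]}_A$ has range in the support of the corresponding $U_j$). Its role is that the interior factors $\hat U^{\intercal}U$ telescope inside any sub-product: $\tilde e^{(j+1)}=(\hat U_j^{\intercal}U_j)\,\mathbb{K}^{[j+1,N]}$, so that $\hat M_j^{-1}\tilde e^{(j+1)}=M_j^{-1}\mathbb{K}^{[j+1,N]}M_N$ and $\hat M_{j-1}^{-1}\tilde{\mathbb{K}}^{(j)}(\id_\A\otimes\hat M_j)=M_{j-1}^{-1}\mathbb{K}^{(j)}M_j$. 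By the standing hypothesis both maps are unital and completely positive, hence completely contractive with cb-norm $1$.

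The recursion then reproduces the three-term split of Theorem~\ref{theoerrprop}. Writing $\hat e^{(j)}-\tilde e^{(j)}=\hat{\mathbb{K}}^{(j)}(\id_\A\otimes\hat e^{(j+1)})-\tilde{\mathbb{K}}^{(j)}(\id_\A\otimes\tilde e^{(j+1)})$ and inserting $\hat M_j\hat M_j^{-1}$, I would split into $(\hat{\mathbb{K}}^{(j)}-\tilde{\mathbb{K}}^{(j)})(\id_\A\otimes\tilde e^{(j+1)})$, the mixed term $(\hat{\mathbb{K}}^{(j)}-\tilde{\mathbb{K}}^{(j)})(\id_\A\otimes(\hat e^{(j+1)}-\tilde e^{(j+1)}))$, and $\tilde{\mathbb{K}}^{(j)}(\id_\A\otimes(\hat e^{(j+1)}-\tilde e^{(j+1)}))$. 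Submultiplicativity of the cb-norm, together with the definition of $\Delta'$, the contractivity facts of the previous paragraph, and $\|\id_\A\otimes\Phi\|_{cb}=\|\Phi\|_{cb}$, bound these respectively by $\Delta'$, $\Delta'\,d_{j+1}$, and $d_{j+1}$, giving $d_j\le\Delta'+(1+\Delta')\,d_{j+1}$ with base case $d_N\le\Delta'$ (the $j=N$ term of the maximum, as $\hat M_N=\id$). Unrolling from $j=N$ to $j=1$ yields $d_j\le(1+\Delta')^{\,N-j+1}-1$ by induction, and in particular $\|\hat\omega-\omega\|_1=d_1\le(1+\Delta')^N-1$.

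The main obstacle I expect is not the recursion but the bookkeeping that makes the ``clean'' sub-products genuinely unital and completely positive: this rests on the correct telescoping of the $\hat U^{\intercal}U$ factors, which in turn requires stating the conjugation identity for $\tilde{\mathbb{K}}^{(j)}$ carefully at the two boundaries $j=1,N$ where $\V_0=\V_N=\mathbb{C}$, together with the hypothesis that every $M_{j-1}^{-1}\mathbb{K}^{[j,k]}M_k$ is unital completely positive (which is exactly what makes $M_{j-1}^{-1}\mathbb{K}^{(j)}M_j$ and $M_j^{-1}\mathbb{K}^{[j+1,N]}M_N$ contractive in the order norms). A secondary point to verify is the identification $d_1=\|\hat\omega-\omega\|_1$, i.e.\ that the cb-norm of the difference functional on the $C^*$-algebra $\A_{[1,N]}$ into $\mathbb{C}$ collapses to the trace-dual norm.
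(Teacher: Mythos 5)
Your proposal is correct and follows essentially the same route as the paper's proof: the same three-term splitting of $\hat{\mathbb{K}}^{[j,N]}-\tilde{\mathbb{K}}^{[j,N]}$ into the terms controlled by $\Delta'$, $\Delta'\,d_{j+1}$ and $d_{j+1}$ via complete contractivity of the unital completely positive sub-products $M_{j-1}^{-1}\mathbb{K}^{[j,k]}M_k$, followed by unrolling the recursion $d_j\le \Delta'+(1+\Delta')d_{j+1}$ to get $(1+\Delta')^N-1$. The only difference is presentational: you make explicit the conjugation identity for $\tilde{\mathbb{K}}^{(j)}$, the telescoping of the $\hat U_j^{\intercal}U_j$ factors, and the base case at $j=N$, which the paper leaves largely implicit.
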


\begin{proof}[Proof of Theorem~\ref{theoerrprop}]

By definition we have
\begin{align}
\|\hat\omega_{t}-\omega_{t}\|_1&=\|\tilde{\mathbb{K}}^{[1,N]}-\hat{\mathbb{K}}^{[1,N]}\|_1\label{eqtriangleineq2}
\end{align}

We can bound this in a similar way to the translation invariant case:
\begin{align}
\|\tilde{\mathbb{K}}^{[1,N]}-\hat{\mathbb{K}}^{[1,N]}\|_1
&\leq \|\hat M_0\|_{e'_{0},*}\|\hat {M_0}^{-1}(\tilde{\mathbb{K}}^{[1,N]}-\hat{\mathbb{K}}^{[1,N]})\|_{\infty\rightarrow e'_0}\nonumber\\
&=\|\hat {M_0}^{-1}(\tilde{\mathbb{K}}^{[1,N]}-\hat{\mathbb{K}}^{[1,N]})\|_{\infty\rightarrow e'_0}
\end{align}

By definition of the cb norm and the triangle inequality, 
\begin{align}
\|\hat{M}_{j-1}^{-1}(\tilde{\mathbb{K}}^{[j,N]}-\hat{\mathbb{K}}^{j,N})\|_{\infty\to e'_{j-1}}
&\leq
\underbrace{\|\hat{M}_{j-1}^{-1}(\tilde{\mathbb{K}}^{(j)}-\hat{\mathbb{K}}^{(j)})\tilde{\mathbb{K}}^{[j+1,N]}\|_{\infty\rightarrow e'_{j-1}, cb}}_{(\operatorname{I}')}\\
&+\underbrace{\|\hat{M}_{j-1}^{-1}(\tilde{\mathbb{K}}^{(j)}-\hat{\mathbb{K}}^{(j)})(\hat{\mathbb{K}}^{[j+1,N]}-\tilde{\mathbb{K}}^{[j+1,N]})\|_{\infty\rightarrow e'_{j-1}, cb}}_{(\operatorname{II}')}\nonumber\\&+\underbrace{\|\hat{M}_{j-1}^{-1}\tilde{\mathbb{K}}^{(j)}(\hat{\mathbb{K}}^{[j+1,N]}-\tilde{\mathbb{K}}^{[j+1,N]})\|_{\infty\rightarrow e'_{j-1}, cb}}_{(\operatorname{III}')}\,.
\end{align}

For the first piece $(\operatorname{I}')$, we have
\begin{align}
\|\hat{M}_{j-1}^{-1}(\tilde{\mathbb{K}}^{(j)}-\hat{\mathbb{K}}^{(j)})\tilde{\mathbb{K}}^{[j+1,N]}\|_{\infty\rightarrow e'_{j-1}, cb}
&\leq \Delta' \|\hat{M}_{j}^{-1}\tilde{\mathbb{K}}^{[j+1,N]}\|_{\infty\rightarrow e'_{j},cb}=\Delta'\,,
\end{align}
where we recall that $\Delta$ is defined in Equation \eqref{errorpardefb32gen}, because \begin{equation}
\|\hat{M}_{j}^{-1}\tilde{\mathbb{K}}^{[j+1,N]}\|_{\infty\rightarrow e'_j,cb}=\|M_{j}^{-1}\mathbb{K}^{[j+1,N]}M_N M_{N}^{-1}\|_{\infty\rightarrow e'_j,cb}=1
\end{equation}
since $M_{j}^{-1}\mathbb{K}^{[j+1,N]}M_N$ is completely positive and unital. For the second piece $(\operatorname{II}')$, we have 
\begin{align}
\|\hat{M}_{j-1}^{-1}(\tilde{\mathbb{K}}^{(j)}-\hat{\mathbb{K}}^{(j)})(\hat{\mathbb{K}}^{[j+1,N]}-\tilde{\mathbb{K}}^{[j+1,N]})\|_{\infty\rightarrow e'_{j-1}, cb}
&\leq \Delta' \|\hat{M}_{j}^{-1}(\hat{\mathbb{K}}^{[j+1,N]}-\tilde{\mathbb{K}}^{[j+1,N]})\|_{\infty\rightarrow e'_j, cb}\,.
\end{align}
Finally, for $(\operatorname{III}')$ we have
\begin{align}
&\|\hat{M}_{j-1}^{-1}\tilde{\mathbb{K}}^{(j)}(\hat{\mathbb{K}}^{[j+1,N]}-\tilde{\mathbb{K}}^{[j+1,N]})\|_{\infty\rightarrow e'_{j-1}, cb}\nonumber\\
&\leq \|\hat{M}_{j-1}^{-1}\tilde{\mathbb{K}}^{(j)}\hat{M}_{j}\|_{\1 \otimes e'_j\rightarrow e'_{j-1}, cb}\|\hat{M}_{j}^{-1}(\hat{\mathbb{K}}^{[j+1,N]}-\tilde{\mathbb{K}}^{[j+1,N]})\|_{\infty\rightarrow e'_j, cb}\nonumber\\
&=\|\hat{M}_{j}^{-1}(\hat{\mathbb{K}}^{[j+1,N]}-\tilde{\mathbb{K}}^{[j+1,N]})\|_{\infty\rightarrow e'_j, cb}\,,\label{III'2nontrans}
\end{align}

since $\hat{M}_{j-1}^{-1}\tilde{\mathbb{K}}^{(j)}\hat{M}_{j}$ is unital. By combining the bounds we just found for $(\operatorname{I}')$, $(\operatorname{II}')$ and $(\operatorname{III}')$, we get
\begin{align}
\|\hat{M}_{j-1}^{-1}(\hat{\mathbb{K}}^{[j,N]}-\tilde{\mathbb{K}}^{[j,N]})\|_{\infty\rightarrow e'_{j-1}, cb}&\leq \Delta'+\Delta' \|\hat{M}_{j}^{-1}(\hat{\mathbb{K}}^{[j+1,N]}-\tilde{\mathbb{K}}^{[j+1,N]})\|_{\infty\rightarrow e'_{j}, cb}\nonumber\\&+\|\hat{M}_{j}^{-1}(\hat{\mathbb{K}}^{[j+1,N]}-\tilde{\mathbb{K}}^{[j+1,N]})\|_{\infty\rightarrow e'_{j}, cb}\,.
\end{align}

Therefore, we get by induction that
\begin{equation}
\|\hat{M}_{0}^{-1}(\hat{\mathbb{K}}^{[1,N]}-\tilde{\mathbb{K}}^{[1,N]})\|_{\infty\rightarrow e'_{0}, cb}\leq (1+\Delta')^N-1\,.\label{eq.inductionnontrans}
\end{equation}

\end{proof}

\subsection{Bound for $(r,l,m)$-FCS}

 As in the translation invariant case, we can construct operator systems $\mathcal{O}_j$ for which the maps $\mathbb{K}^{(j)}$, $j=1,\dots,N$ are completely positive (this amounts to choosing $M_j$ to be identity maps so that we can identify $\mathcal{O}_j$ with $\V_j$ of the previous section). Indeed, for $1\leq j \leq N-1$ take $\mathcal{O}_j$ to be matrix ordered by the cones $C_{n,j}:=\{\id_{\mathbb M_n}\otimes\mathbb{K}^{[j+1,N]}(X)\ |\ X\in \mathbb{M}_n(\A^{\otimes N-j}), n\in\mathbb{N}, X\geq 0 \}$, and the unit $e_{j}=\mathbb{K}^{[j+1,N]}(\1^{\otimes N-j})$ for $0\leq j \leq N-1$, while $\mathcal O_1$, $\mathcal O_N$ are $\mathbb{C}$ with unit 1. 
 Then by definition $\mathbb{K}^{(j)}:\mathbb{M}_{d_{\A}}(\mathbb{C}^{m_{j+1}})\rightarrow \mathbb{C}^{m_{j}}$ is a unital completely positive map between the operator systems $\mathbb{M}_{d_{\A}}(\mathcal{O}_{j+1})$ and  $\mathcal{O}_{j}$, for $j=0,...,N$.

The inequalities between the order norms and the Euclidean norm of \eqref{eq:normineqobserv} (induced by the Hilbert-Schmidt norm) are still true for each operator system $\mathcal{O}_j$. The error propagation bound can be obtained in the same way, with the caveat that the completely bounded norms appearing in the calculation are now norms of maps between operator systems that are possibly different. 
For $A\in (\mathbb{M}_n\otimes \mathbb{C}^{m_j})^*$, the dual norm (that is, the usual norm of a linear functional) is defined as 
\begin{equation}
\|A\|_{e^{(n)}_{l},*}:=\sup \{ |A(X)|\ |\  X \in \mathbb{M}_n\otimes\mathbb{C}^{m_j}, ||X||_{e_{j}^{(n)}}\leq 1\}.
\end{equation}
For $B\in \mathrm{Hom}(\mathbb{C}^{m_{j+1}},\mathbb{C}^{m_j})$, the completely bounded operator norm is defined as
\begin{equation}
\|B\|_{e_{j+1}\rightarrow e_{j},cb}:=\sup \{ \|(\id_{\mathbb M_n}\otimes B)(X)\|_{e_{j}^{(n)}}\ |\ X \in \mathbb{C}^{m_j}\otimes \mathbb{M}_n, \|X\|_{e_{j+1}^{(n)}}\leq 1, n\in \mathbb{N}\}.
\end{equation}

On the other hand, as we did in Section~\ref{sec:opsysqr}, we can also define norms from the inner product defined by the maps $\Omega^{[j-l,j,j+r]}$, observing that any vector $v$ in $\mathbb{C}^{m_j}$ can be written as $\mathbb{K}^{[j+1,N]}_{X(v)\otimes {\1}^{\otimes N-j-r}}$ for some $X(v)\in \A_{[j+1,j+r]}$, and then $(v,w)_{\Omega^{[j-l,j,j+r]}}:= \sum_{i=1}^{d_{\A}^{2l}}\omega(Y_i\otimes X(v))\omega(Y_i\otimes X(w)),$ with $\{Y_i\}$ being a self-adjoint orthonormal basis of $\A_{[j-l,j]}$. We denote such norms as $\|\cdot\|_{2,j}$, their amplifications as $\|\cdot\|_{2,j,n}$ and by the same argument of Lemma~\ref{lemnormineq} and Remark~\ref{rem:ineq} 
and therefore, as in~\eqref{eq:normineqobserv}, we have the inequalities

\begin{equation}
\norm{x}_{\1_{\mathbb{M}_n}\otimes e_j} \le \sigma_{m_j}(\Omega^{[j-l,j,j+r]})^{-1}\norm{x}_{2,j, n} \le \sqrt{n}\sigma_{m_j}(\Omega^{[j-l,j,j+r]})^{-1}\norm{x}_{\1_{\mathbb{M}_n}\otimes e_j}
\end{equation}

We can just take the $M_j$ in~\eqref{errorpardefb32gen} to be identity maps and then we have, in complete analogy with~\eqref{Deltaboundgen}

\begin{align}
\Delta'&\leq\max_{j\in [N]}\frac{8 m_{j-1}\sqrt{d_{\A}}}{\sqrt{3}\sigma_{m}\left(\Omega^{[j-l,j-1,j-1+r]}\right)}\nonumber\\&\times\left( \frac{\|\Omega^{[j-l+1,j,j+r]}  - \hat{\Omega}^{[j-l+1,j,j+r]} \|_{2}}{\sigma_m (\Omega^{[j-l+1,j,j+r]})^2} + \frac{\| \Omega_{(\cdot)}^{[j-l,j-1,j+r]}-\hat{\Omega}^{[j-l,j-1,j+r]} _{(\cdot)}\|_{2}}{3\sigma_m (\Omega^{[j-l+1,j,j+r]} )}\right).\label{bounderrorpardef24D}
\end{align}

By using the error progapagation bound in Theorem~\ref{theoerrprop22}, we obtain

\begin{theorem}\label{theonontrans}
Let $\omega$ be a $(l,r,m)$-finitely correlated state, with associated maps $\Omega^{[i,j,k]}$ satisfying $\eta\leq \min_{j=0,\dots,N}\sigma_{m}(\Omega^{[j-l+1,j,j+r]})/2$. Assume 
we have estimates of marginals  $\omega_{[\max(j-l,1),\min(j+r,N)]}$, $\omega_{[\max(j-l+1,1),\min(j+r,N)]}$, $j=0,...,N$, such that the Hilbert-Schmidt norm errors are smaller than $ \frac{\epsilon\eta^3 }{20 tm  \sqrt{d_{\A}}}$. %distance at precision $\epsilon_{HS}$.
The estimated realization parameters from $\textsf{LearnFCS(l,r)}$ (given by~\ref{defstatereconnonhom}) give rise to an estimated operator $\hat\omega$ such that
\begin{equation}
\frac{\|\hat{\omega}-\omega\|_1}{2}\leq \epsilon,
\end{equation}
\end{theorem}

A sample complexity bound follows immediately as in the non-translation invariant case.

\subsection{Bound for $C^*$-$(r,l,m)$-FCS}
We can also define $C^*$-realizations for states on a finite chain; note that any state on a finite chain admits a realization of this kind, but the dimension of the memory system vector space can grow exponentially with $N$ in general.

\begin{definition}\label{compposnontrans}
A \textbf{$C^*$-realization} of a state $\omega$ is defined by a unital $C^*$-algebra $\B$, a state $\rho_0\in\B^*$ and a collection of unital maps $\mathcal E^{(j)}:\A\otimes \B\rightarrow\B$ such that
\begin{equation}
\omega(X_1\otimes \cdots\otimes X_N)=\rho_0\mathcal E_{X_{1}}^{(1)}\cdots\mathcal E_{X_{N}}^{(N)}(\1_{\B}).
\end{equation}
\end{definition}

This representation suggests that the state is generated by sequential maps on a quantum memory system. Note that in favour of a simpler presentation we are choosing a (suboptimal) memory system that does not depend on $j$. If we allowed the memory system to change with $j$ we could absorb $\rho_0$ and $\1_B$ in the definitions of respectively $\mathcal{E}^{(1)}$ and $\mathcal{E}^{(N)}$ as in \cref{reconstructionnontrans}.

We denote 
\begin{equation}
\rho_{s}:=\rho_0\mathcal E_{\1_\A}^{(1)}\cdots\mathcal E_{\1_\A}^{(s)}.
\end{equation}
We also define the maps  $\mathcal{F}_j:\mathcal{B}\rightarrow \mathcal{A}_{[\max(j-l+1,1),j]}^*$ as 
\begin{equation}
\mathcal{F}_j[Z](X_{j-l+1}\otimes \cdots\otimes X_j)=\rho_{j-l}\mathcal{E}_{X_{j-l+1}}^{j-l+1}\cdots\mathcal{E}_{X_j}^{j}(Z),
\end{equation}

 while $\mathcal{F}_0:\mathcal{B}\rightarrow \mathbb{C}$ is simply $\rho_0$, and set
\begin{align}
\W_{j}&=\{\mathcal{E}^{[j+1,\min(j+r,N)]}(A\otimes\1_{\B}),A \in \mathcal{A}_{[j+1,\min(j+r,N)]} \},\\
\tilde{\W}_j&=\{\rho_{\max(j-l+1,1)}\mathcal{E}^{[\max(j-l+1,1),j]}(A\otimes \cdot) , A \in \mathcal{A}_{[\max(j-l+1,1),j]} \}.
\end{align}

As in the translation invariant case, we can give $\V_j=\W_j/(\W_j\cap \tilde\W^{\perp}_j)$, $1\leq j\leq N-1$, the structure of a quotient operator system, with projections $L_j:\W_j\rightarrow \V_j$, such that the units are $L_j\mathcal{E}^{[j+1,\min(j+r,N)]}(\1_{\A_{[j+1,\min(j+r,N)]}}\otimes\1_{\B})=L_j\1_{\B}$, the norms are denoted as $\norm{\cdot}_{L_j\1_{\B}}$ and  $L_j$ are completely positive and contractive. Via invertible maps denoted as $R_j:\V_j\rightarrow \V'_j$ we can then identify $\V_j$ with $\V'_j\subseteq \W_j$ being the subspaces of $\W_j$ orthogonal to $\W_j\cap \tilde{\W}_j^{\perp}$, while $\V_0$, $\V'_0$, $\V_N$, $\V'_N$ are just $\mathbb{C}$ with its trivial operator system and $R_0$ and $R_N$ are identity maps. We have that the maps $\hat{U}_j^{\intercal} \F_j R_j$, for $j=1,..,N-1$, are invertible (see Lemma~\ref{lemmainvertj}), and % in Appendix~\ref{lemmaFj}), and
\begin{proposition}\label{proponontransquotorg}
For an observable realization as in Definition \ref{reconstructionnontrans}, and a realization as in Definiton \ref{compposnontrans}, we have
\begin{align}
{\mathbb{K}}^{(1)}_A{U}_1^{\intercal} \F_1|_{\W_1}&=\F_{0} \cE_{A}^{(1)}|_{\W_1},\\
{\mathbb{K}}^{(N)}_A&={U}_{N-1}^{\intercal}\F_{N-1} \cE_{A}^{(N)}(\1_B),\\
{\mathbb{K}}^{(j)}_A{U}_j^{\intercal} \F_j|_{\W_j}&={U}_{j-1}^{\intercal}\F_{j-1} \cE_{A}^{(j)}|_{\W_j},\mathrm{for}\,2\leq j\leq N-1,
\end{align}
and
\begin{align}\label{eqKerelationnontrans}
{\mathbb{K}}^{[j,N]}_{X_1\otimes\cdots\otimes X_N}&={U}_{j-1}^{\intercal}\F_{j-1}\mathcal{E}^{[j,N]}_{X_1\otimes\cdots\otimes X_{N}}(\1_{\B}), \mathrm{for}\,2\leq j\leq N,\\
{\mathbb{K}}^{[1,N]}_{X_1\otimes\cdots\otimes X_N}&=\rho_0\mathcal{E}^{[1,N]}_{X_1\otimes\cdots\otimes X_{N}}(\1_{\B}).
\end{align}
\end{proposition}
Let us define the linear maps $\mathcal{E}^{[i,j]}:\A_{{i,j}}\otimes \B\rightarrow \B$ which act on a product vector $X_{j}\otimes\cdots\otimes X_{k}\otimes Z$ as
\begin{equation}
\mathcal{E}^{[i,j]}(X_i\otimes \cdots\otimes X_j)=\mathcal{E}_{X_j}^{(j)}\cdots\mathcal{E}_{X_{k}}^{(k)}(Z).
\end{equation}
For $C^*$-finitely correlated state we can write
\begin{align}
\omega(X_1\otimes \cdots\otimes X_N)&=\rho_0{\mathcal{E}_{X_1}^{(1)}}\cdots{\mathcal{E}_{X_{N}}^{(N)}}(\1_{\B})=\rho_0\mathcal{E}^{[1,N]}(X_{1}\otimes\cdots\otimes X_{N}\otimes \1_{\B}),
\end{align}
and defining
$
\rho_{i-1}:=\rho_0{\mathcal{E}_{\1}^{(1)}}\cdots{\mathcal{E}_{\1}^{(i-1)}}
$
we see
\begin{align}
\Omega^{[i,j,k]}[X_{i}\otimes\cdots\otimes X_j](X_{j+1}\otimes\cdots\otimes X_{k})=(\mathcal{E}_{X_{i}}^{(i)}\cdots\mathcal{E}_{X_{j}}^{(j)})^{\dagger}(\rho_{i-1})(\mathcal{E}_{X_{j+1}}^{(j+1)}\cdots{\mathcal{E}_{X_{k}}^{(k)}}(\1_{\B}))
\end{align}

First we prove the following:

\begin{lemma}\label{lemmainvertj}
$U^{\intercal}_{j}\F_j$ is invertible on $\V'_j$.
\end{lemma}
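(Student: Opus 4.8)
The plan is to transcribe the argument of the translation-invariant case (\cref{thm:quotientM} and \cref{finvert}) to the non-homogeneous setting: I will show that $U_j^{\intercal}\F_j$, restricted to $\W_j$, has kernel exactly $\W_j\cap\tilde{\W}_j^{\perp}$, so that it descends to an injective map on the orthogonal complement $\V'_j$, which by a dimension count is in fact invertible.

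First I would identify the image of $\F_j$ on $\W_j$. Using the factorization of $\Omega^{[i,j,k]}$ recorded just before this lemma, one has, for $Y=Y_{j-l+1}\otimes\cdots\otimes Y_j\in\A_{[j-l+1,j]}$ and $X\in\A_{[j+1,j+r]}$,
\[
\Omega^{[j-l+1,j,j+r]}[X](Y)=\rho_{j-l}\,\cE^{(j-l+1)}_{Y_{j-l+1}}\cdots\cE^{(j)}_{Y_j}\,\cE^{[j+1,j+r]}(X\otimes\1_{\B}).
\]
Comparing with the definition of $\F_j$, this says precisely that $\F_j\big(\cE^{[j+1,j+r]}(X\otimes\1_{\B})\big)=\Omega^{[j-l+1,j,j+r]}[X]$. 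Since $\W_j$ is by definition the span of the vectors $\cE^{[j+1,j+r]}(X\otimes\1_{\B})$, it follows that $\F_j|_{\W_j}$ has image equal to $\mathrm{range}\,\Omega^{[j-l+1,j,j+r]}=\mathrm{range}\,U_j$.

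Next I would note that $U_j^{\intercal}$ is isometric on $\mathrm{range}\,U_j$ (because $U_j^{\intercal}U_j=\1$), hence injective there, so that $\ker(U_j^{\intercal}\F_j|_{\W_j})=\ker(\F_j|_{\W_j})$. I then compute the latter kernel: an element $w\in\W_j$ lies in $\ker\F_j$ if and only if the functional $Z\mapsto\rho_{j-l}\cE^{(j-l+1)}_{Y_{j-l+1}}\cdots\cE^{(j)}_{Y_j}(Z)$ annihilates $w$ for every $Y\in\A_{[j-l+1,j]}$, which is exactly the defining condition $w\in\tilde{\W}_j^{\perp}$. Hence $\ker(U_j^{\intercal}\F_j|_{\W_j})=\W_j\cap\tilde{\W}_j^{\perp}$. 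Since $\V'_j$ is by construction the Hilbert--Schmidt orthogonal complement of $\W_j\cap\tilde{\W}_j^{\perp}$ inside $\W_j$, the restriction $U_j^{\intercal}\F_j|_{\V'_j}$ is injective. Finally, $\dim\V'_j=\dim\V_j=m_j=\dim\mathbb{R}^{m_j}$ by the rank hypothesis defining an $(l,r,m)$-finitely correlated state, so an injective linear map between equidimensional spaces is invertible; the boundary cases $j=0,N$ are trivial since the relevant spaces are one-dimensional.

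The main obstacle I anticipate is the index bookkeeping, in particular reconciling the shift between the $\rho_{j-l}$ appearing in $\F_j$ and the generator appearing in $\tilde{\W}_j$, so that the functionals arising as matrix elements $Z\mapsto\F_j[Z](Y)$ of $\F_j$ coincide \emph{exactly} with the functionals generating $\tilde{\W}_j$. This coincidence is what upgrades the obvious inclusion $\ker(\F_j|_{\W_j})\supseteq\W_j\cap\tilde{\W}_j^{\perp}$ to an equality, and it must be verified carefully against the definitions of $\rho_s$ and of $\mathcal{E}^{[\cdot,\cdot]}$. Once the conventions are aligned, the remainder is a direct translation of the translation-invariant argument.
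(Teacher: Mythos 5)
Your proposal is correct and follows essentially the same route as the paper: both arguments rest on the identity $\F_j\bigl(\cE^{[j+1,j+r]}(X\otimes\1_{\B})\bigr)=\Omega^{[j-l+1,j,j+r]}[X]$ and on the fact that a nonzero element of $\V'_j$ cannot lie in $\tilde{\W}_j^{\perp}$, so its image under $U_j^{\intercal}\F_j$ is nonzero (the paper phrases this via the SVD factorization $U_jD_jO_j$, you via computing $\ker(\F_j|_{\W_j})=\W_j\cap\tilde{\W}_j^{\perp}$). Your added dimension count and the explicit injectivity of $U_j^{\intercal}$ on $\mathrm{range}\,U_j$ are left implicit in the paper but are correct, and the index mismatch you flag between $\rho_{j-l}$ in $\F_j$ and $\rho_{j-l+1}$ in the definition of $\tilde{\W}_j$ is indeed a typo to be reconciled exactly as you describe.
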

\begin{proof}
We have that for $A\in\mathcal{A}_{[j+1,\min(j+r,N)]}$ such that $0\neq\mathcal{E}^{[j,\min(j+r,N)]}(A\otimes \1_{\mathcal B})\in \V'_j$, there exists $A'\in \mathcal{A}_{[\max(j-l+1,1),j]}$ such that
\begin{align}
\rho_{\max(j-l+1,1)} \mathcal{E}^{[\max(j-l+1,1),j]}(A'\otimes \mathcal{E}^{[j+1,j+r]}(A\otimes \1_{\mathcal B})) &=\Omega^{[j-l+1,j,\min(j+r,N)]}[A](A')\\&=U_j D_j O_j[A](A')\neq 0,
\end{align}
implying $O_j[A]\neq0$. Thus we have
\begin{align}
U_j^{\intercal}\F_j(\mathcal{E}^{[j+1,\min(j+r,N)]}(A\otimes \1_{\mathcal B}))&=U_j^{\intercal}\Omega^{[j-l+1,j,j+r]}[A]=D_jO_j[A]\neq 0.
\end{align}
\end{proof}
Then we can prove Proposition~\ref{proponontransquotorg}.
\begin{proof}[Proof of Proposition~\ref{proponontransquotorg}]
We have, for any $X_{j+1}\otimes\cdots\otimes X_{\min(j+r,N)} \in \mathcal{A}_{[j+1,\min(j+r,N)]}$, $X_{\max(j-l+1,1)}\otimes\cdots\otimes X_{j} \in \mathcal{A}_{[\max(j-l+1,1),j]}$,
\begin{align}
 &\mathcal{F}_j [\mathcal{E}_{X_{j+1}}^{(j+1)}\cdots{\mathcal{E}_{X_{\min(j+r,N)}}^{(\min(j+r,N))}}(\1)](X_{\max(j-l+1,1)}\otimes \cdots\otimes X_j)\nonumber\\&=\rho_{j-l} \mathcal{E}_{X_{\max(j-l+1,1)}}^{\max(j-l+1,1)}\cdots\mathcal{E}_{X_j}^{(j)}\mathcal{E}_{X_{(j+1)}}^{(j+1)}\cdots{\mathcal{E}_{X_{\min(j+r,N)}}^{(\min(j+r,N))}}(\1_{\B}),
\end{align}
in particular, for $X\in \mathcal{A}_{[j+1,\min(j+r,N)]}$, $Y \in \mathcal{A}_{[\max(j-l+1,1),j]}$
\begin{equation}
\mathcal{F}_j[\mathcal{E}^{[j+1,\min(j+r,N)]}(X\otimes \1_{\B})](Y)=\Omega^{[j-l+1,j,j+r]}[X](Y),
\end{equation}
and for $X\in \mathcal{A}_{[j+1,\min(j+r,N)]}$, $X_j\in \mathcal{A}_{[j,j]}$,$Y \in \mathcal{A}_{[\max(j-l,1),j-1]}$
\begin{equation}
\mathcal{F}_{j-1}[\mathcal{E}^{[j,\min(j+r,N)]}(X_j\otimes X \otimes \1_{\B})](Y)=\Omega^{[j-l,j-1,j+r]}_{X_j}[X](Y).
%\mathcal{F}_{j-1}[\mathcal{E}_{X_{j}}^{(j)}\mathcal{E}_{X_{j+1}}^{(j+1)}\cdots{\mathcal{E}_{X_{k}}^{(j+r)}}(\1)](X_{j-l}\otimes\cdots\otimes X_{j-1})=\Omega^{[j-l,j-1,j+r]}_{X_j}[X_{j+1}\otimes\cdots\otimes X_{j+r}](X_{j-l}\otimes\cdots\otimes X_{j-1}),
\end{equation}

Writing $(U^{\intercal}_j\F_j)^{-1}$ for the inverse of $U^{\intercal}_j\F_j$ on $\V'_j\subseteq\W_j$, for $2\leq j\leq N-1$ we have

\begin{align}
\mathbb{K}^{(j)}_A(x)&:=U_{j-1}^{\intercal}\F_{j-1}\cE^{[j,\min(j+r,N)]}(A \otimes(U^{\intercal}_j\Omega^{[j-l+1,j,j+r]})^{+}(x)\otimes \1_{\B})\\
&=U_{j-1}^{\intercal}\F_{j-1}\cE^{(j)}_A(\cE^{[j+1,\min(j+r,N)]}((U^{\intercal}_j\Omega^{[j-l+1,j,j+r]})^{+}(x)\otimes \1_{\B}) )\\
&=U_{j-1}^{\intercal}\F_{j-1}\cE^{(j)}_A( (U^{\intercal}_j\F_j)^{-1}U^{\intercal}_j\F_j\cE^{[j+1,\min(j+r,N)]}((U_j^{\intercal}\Omega^{[j-l+1,j,j+r]})^{+}(x)\otimes \1_{\B}) ) \\
&=U_{j-1}^{\intercal}\F_{j-1}\cE^{(j)}_A(U^{\intercal}_j\F_j)^{-1}(x),
\end{align}

and similarly for $j=1$ and $j=N$. The rest follows by composition.

\end{proof}

%\begin{proof}
%    See \cref{lemmaFj}.
%\end{proof}

As an immediate consequence, we have that the maps $M_{j-1}^{-1}{\mathbb{K}}^{(j)}_{(\cdot)}M_j$ with $M_j={U}_j^{\intercal} \F_j R_j$ for $2\leq j\leq N-1$, $M_0(z)=z$, $M_N(z)=z$, are completely positive and unital with respect to the orders on $\V'_j$ and $\V'_{j-1}$, and the error propagation bound from Theorem~\ref{theoerrprop22} applies. Analogously to the translation invariant case, we have that

\begin{align}
\Delta'&\leq\max_{j\in [N]}\frac{8 d_{\B}\sqrt{d_{\A}}}{\sqrt{3}\sigma_{m}\left(\Omega^{[j-l,j-1,j-1+r]}\right)}\nonumber\\&\times\left( \frac{\|\Omega^{[j-l+1,j,j+r]}  - \hat{\Omega}^{[j-l+1,j,j+r]} \|_{2}}{\sigma_m (\Omega^{[j-l+1,j,j+r]})^2} + \frac{\| \Omega_{(\cdot)}^{[j-l,j-1,j+r]}-\hat{\Omega}^{[j-l,j-1,j+r]} _{(\cdot)}\|_{2}}{3\sigma_m (\Omega^{[j-l+1,j,j+r]} )}\right).\label{bounderrorpardef24D2}
\end{align}

By using the error progapagation bound in Theorem~\ref{theoerrprop22}, we obtain

\begin{theorem}\label{theonontransC}
Let $\omega$ be a $C^*$, $(l,r,m)$-finitely correlated state, with associated maps $\Omega^{[i,j,k]}$ satisfying $\eta\leq \min_{j=0,\dots,N}\sigma_{m}(\Omega^{[j-l+1,j,j+r]})/2$. Assume 
we have estimates of marginals  $\omega_{[\max(j-l,1),\min(j+r,N)]}$, $\omega_{[\max(j-l+1,1),\min(j+r,N)]}$, $j=0,...,N$, such that the Hilbert-Schmidt norm errors are smaller than $ \frac{\epsilon\eta^3 }{20 td_{\B}  \sqrt{d_{\A}}}$. %distance at precision $\epsilon_{HS}$.
The estimated realization parameters from $\textsf{LearnFCS(l,r)}$ (given by~\ref{defstatereconnonhom}) give rise to an estimated operator $\hat\omega$ such that
\begin{equation}
\frac{\|\hat{\omega}-\omega\|_1}{2}\leq \epsilon,
\end{equation}

\end{theorem}

\section{Learning states $\epsilon$-close to FCS}\label{sec.learningclose}

%\subsection{Robustness of the algorithm}\label{sec.Robustness}

A feature of our learning algorithm is that it works even if the true state is sufficiently close to a finitely correlated state. The following proposition holds.

\begin{proposition}\label{robustnessalgo}
Suppose that a state $\sigma$ is close to an $(l,r,m)$-finitely correlated $\omega$, say $d_{\mathrm{Tr}}(\sigma_t,\omega_t)\leq \xi(t)$, for any marginals of size $t$, and that estimates of the marginals in Hilbert-Schmidt distance at precision $\epsilon_{HS}$ are given to the reconstruction algorithm of Theorem~\ref{theonontrans}.
Then, if $\xi(l+r)\leq \sigma_{m_j}(\Omega^{[j-l+1,j,j+r]})/8$ for $j=0,\dots,N$, the estimate $\hat\sigma$ obtained from the reconstruction algorithm satisfies $\frac{\|\sigma-\hat{\sigma}\|_1}{2}\leq \xi(N)+\frac{\|\omega-\hat{\sigma}\|_1}{2}$,
with $\frac{\|\omega-\hat{\sigma}\|_1}{2}<D\epsilon$, where $D>0$ is a universal constant and $\epsilon_{HS}+\xi(l+r+1)= \min_{j=0,\dots,N} \frac{\epsilon\sigma^3_{m_j}(\Omega^{[j-l+1,j,j+r]})}{20 N m \sqrt{d_{\A}}}$.
\end{proposition}

\begin{proof}
Observe that $\|\hat{\Omega}^{[i,j,k]}-{\Omega}^{[i,j,k]}\|_2 \leq \epsilon_{HS}+\xi(k-i+1)$ for the relevant marginals. Provided that $\xi(l+r)\leq \sigma_{m_j}(\Omega^{[j-l+1,j,j+r]})/8$, then the truncated singular value decompositions return maps $\hat{U}_j$ of rank $m_j$ and $\|\hat{\Omega}^{[i,j,k]}-{\Omega}^{[i,j,k]}\|_2\leq \sigma_{m_j}({\Omega}^{[i,j,k]})/3$ for the relevant marginals. Therefore the error propagation bound can be invoked. 
\end{proof}

Many physically relevant many-body quantum states can be efficiently approximated by finitely correlated states. One such class corresponds to states that are efficiently approximated by quantum circuits: for an $n=2t$ qudit system initiated in the tensor product state $|0\rangle^{\otimes n}$, consider the following possibly non-unitary depth $D$ quantum circuit with brickwork architecture (see left part of Figure \ref{fig:circuits} for an illustration in the case of $D=2$):
\begin{align*}
\mathcal{C}:= \prod_{\ell=1}^D\, \Phi^{(\ell)}\,,
\end{align*}
where for each layer $\ell$, the quantum channel $\Phi^{(\ell)}$ factorizes as 
\begin{align*}
\Phi^{(\ell)}:= \bigotimes_{j=1}^{t}\Phi^{(\ell)}_{j}\,,
\end{align*}
with each $\Phi_j^{(l)}$ a completely positive, trace preserving map acting on qudits $j$ and $j+1$ if $\ell$ is odd, and on qudits $j-1$ and $j$ if $\ell$ is even. By commuting through maps acting on different qudits, the circuit $\mathcal{C}$ can equivalently be written as a concatenation of maps $\Psi_j$, $j\in[t]$ (see right part of Figure \ref{fig:circuits}):
\begin{align*}
\mathcal{C}:=\prod_{j=1}^t\Psi_j\,.
\end{align*}
With this alternative decomposition, it becomes clear that the state $\mathcal{C}(|0\rangle\langle 0|^{\otimes t})$ is a $C^*$-finitely correlated state, with realization: $(\mathcal{B},\1_\B,\mathcal{E}^{(1)},\dots,\mathcal{E}^{(t)},|0\rangle\langle 0|)$, where $\B:=\mathbb{M}_d^{\otimes {D-1}}$, $\mathcal{A}:=\mathbb{M}_d^2$ and, for any $j\in[t]$, $A\in\mathcal{A}$ and $B\in \mathcal{B}$,
\begin{align*}
\mathcal{E}^{(j)}(A\otimes B):=\langle 0^2|\Psi^*_j(A\otimes B)|0^2\rangle\in\B\,,
\end{align*}
where the evaluation in the tensor product state $|0^2\rangle=|0\rangle\otimes |0\rangle$ is with respect to the qudit systems at the output of the first channel included in the concatenation $\Phi_j$ acting on the input state $|0\rangle^{\otimes n}$. With these notations, it is easy to see that 
\begin{align*}
\tr{\mathcal{C}(|0^n\rangle\langle 0^n|) A_1\otimes \dots\otimes A_t}=\langle 0^2|\mathcal{E}_{A_t}\dots \mathcal{E}_{A_1}(\1_\B)|0^2\rangle\,.
\end{align*}
\begin{figure}[h]
    \centering  \includegraphics[scale = 0.15]{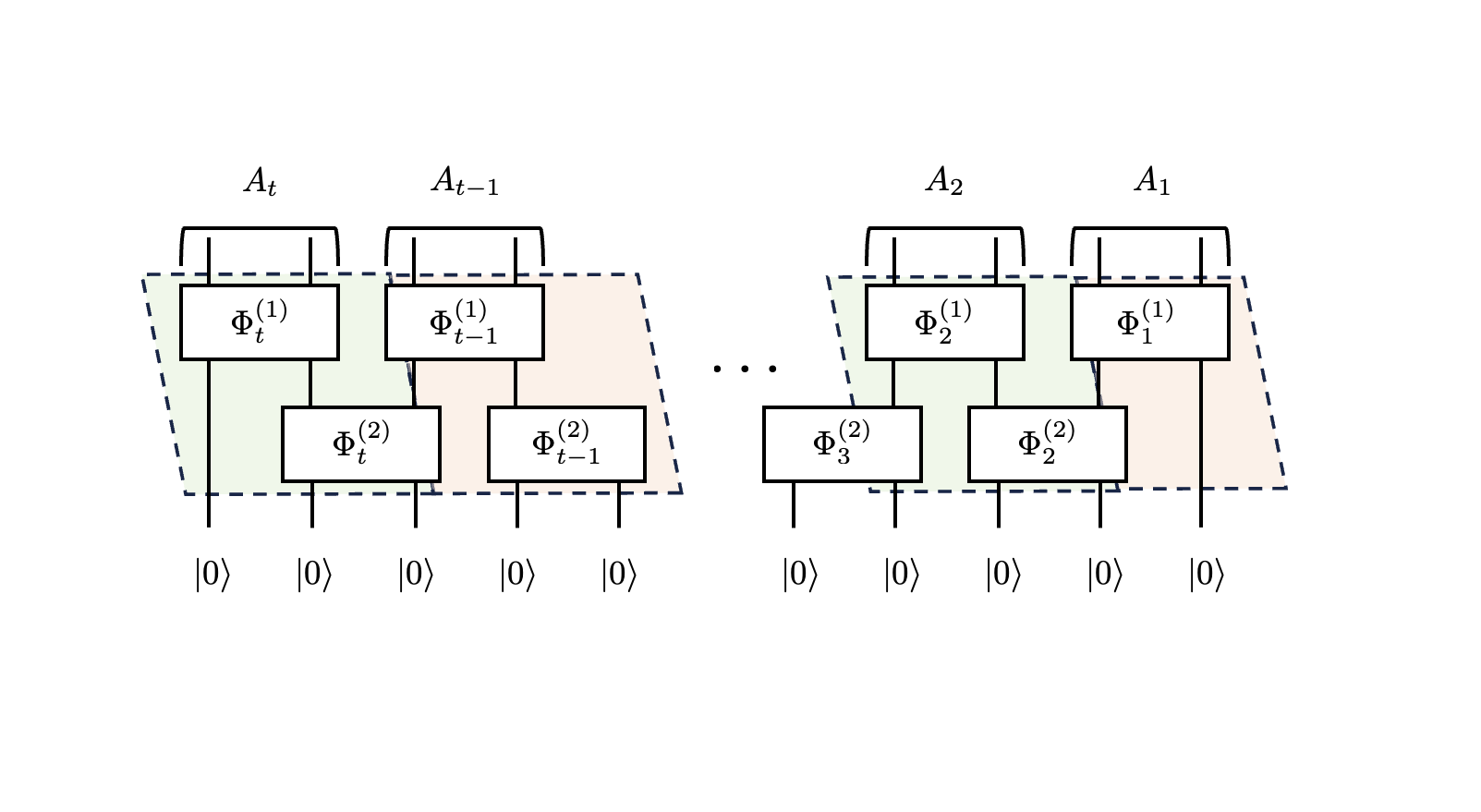} \includegraphics[scale = 0.15]{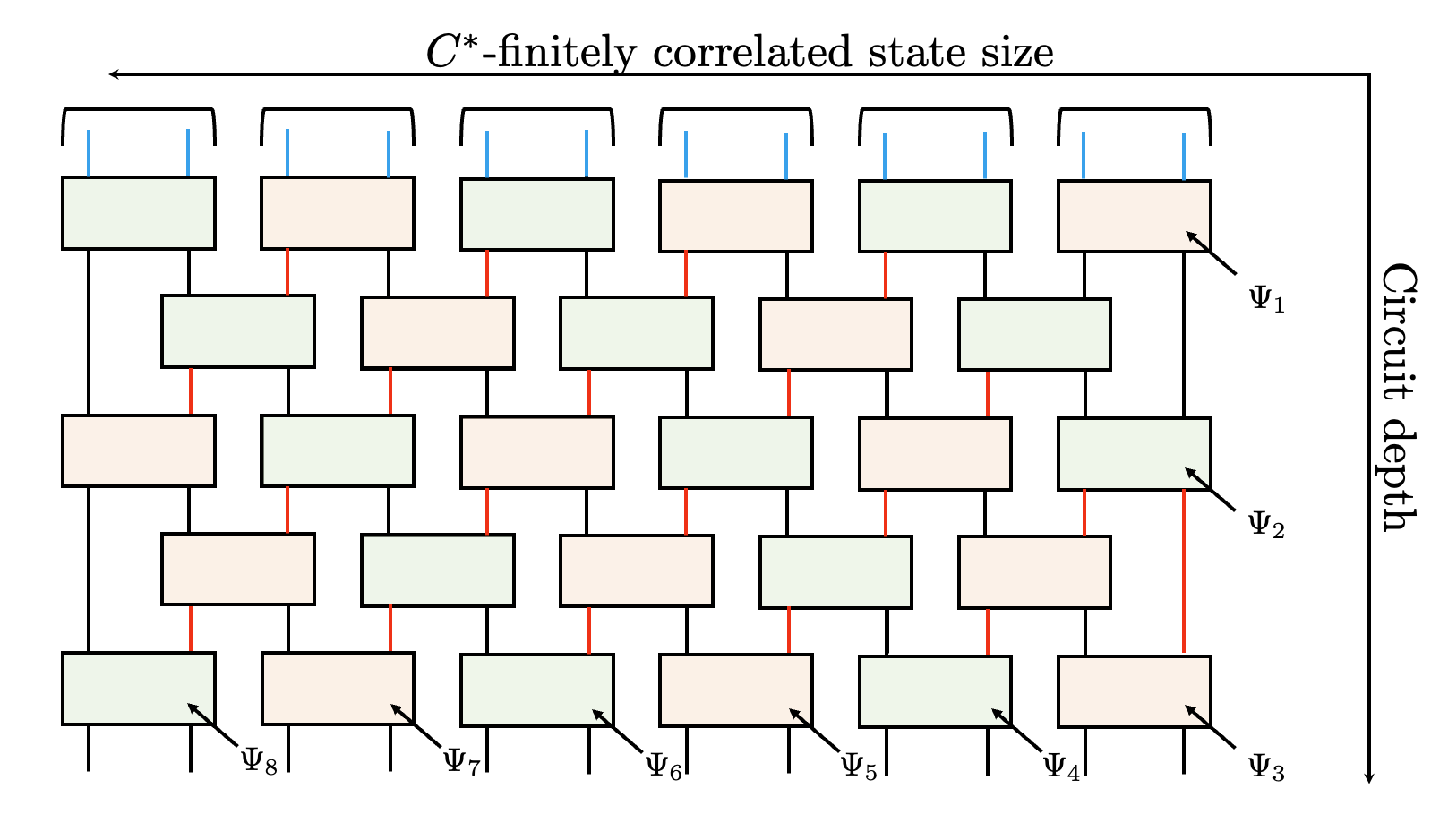}
    \caption{$C^*$-finitely correlated description of quantum circuits. On the right figure, the legs in red correspond to the system $\B$ at the entry of a map $\mathcal{E}^{(j)}$, whereas the blue legs correspond to the system $\mathcal{A}$.}
    \label{fig:circuits}
\end{figure}

An important class of many-body quantum states which are known to be approximated by outputs of (non-unitary) circuits as described above, namely 
Gibbs states of one-dimensional local Hamiltonians. More in general, approximation of Gibbs states in terms of matrix product density operators have been studied \cite{CiracMPDOGibbs,hastings2006solving,Alhambra2021, gondolf2024conditional}.
 The result of~\cref{robustnessalgo} suggests a path for learning one-dimensional Gibbs states in terms of their approximations by matrix product operators.
The recent years have seen the development of many learning algorithms for Gibbs states ~\cite{anshu2020sample,anshu2021sample,anshuweb,haah2022optimal,bakshi2023learning, de2021quantum, rouze2021learning,onorati2023efficient}. 
 These approximations involve blocking sites of the original chain, so that the local dimension of the new chain is $d_{\mathcal{A}}=d^L$ for some $L$ equal to the size of each block of sites and $d$ the original site dimension. To give learning guarantees via these results together with~\cref{robustnessalgo} we would need to
\begin{itemize}
\item (i) The smallest singular values $\sigma_{m_j}({\Omega}^{[j-l+1,j,j+r]})$ of the MPO approximations are large compared with the approximation error. At present we do not have any
https://optimization-online.org/wp-content/uploads/2018/01/6441.pdway to prove bounds for these quantities for MPO approximations of Gibbs states or for outputs of quantum circuits. We only have the upper bound $\sigma^2_{m_j}({\Omega}^{[j-l+1,j,j+r]})\leq \Tr[({\Omega}^{[j-l+1,j,j+r]})^{\dagger}{\Omega}^{[j-l+1,j,j+r]}]/m=\Tr[\omega_{[j-l+1,j+r]}^2]/m$.
\item (ii) To make the final trace norm bound in~\cref{robustnessalgo} non-trivial, the quantity 

$\min_{j=0,\dots,N} \frac{\sigma^3_{m_j}(\Omega^{[j-l+1,j,j+r]})}{20 N m \sqrt{d_{\A}}}$ must also be sufficiently large compared to the approximation error $\xi(l+r+1)$, otherwise $D\epsilon$ may be larger than $1$.  
\item (iii) check that the MPO approximations are $(l,r,m)$-finitely correlated states for sufficiently small $l,r$. This condition is not easy to verify even if the MPO is a circuit. However, it is reasonable to believe that it holds generically as long as $l$ and $r$ are $\Omega(\log m)$; therefore, if it does not hold from the start, it can be verified by an arbitrary small perturbation of the MPO approximation. A formal analysis of this fact and of how the perturbation affects the other conditions may be feasibile. Since the other conditions are anyway hard to verify, we avoid pursuing this.
\end{itemize}

 We mention that the positive MPO approximation in~\cite{gondolf2024conditional} has values of $m$ and $L$ which are compatible with conditions $(i)$ and $(ii)$ being satisfied, but since we do not know how large $l$ and $r$ should be, and we cannot lower bound $\sigma_{m_j}(\Omega^{[j-l+1,j,j+r]})$, we do not explore this further.

These observations suggests that to guarantee that spectral reconstruction algorithm work for Gibbs states, we need either to improve our analysis of MPO approximations or to modify the algorithm and its analysis. We leave these questions to future work.

In contrast, it was shown in \cite[Theorem 1.3]{devroye2020minimax} that learning classical Gibbs measures with $\epsilon$ average recovery in trace distance requires at least $n/\epsilon^2$ samples. Since classical Gibbs states are finitely correlated states, the latter directly implies the following:

\begin{proposition}
Let $A$ be a learning algorithm such that, upon measuring $N$ independent copies of an unknown finite-size, finitely correlated state $\rho$ on $\mathcal{A}_{[1,N]}$, outputs an estimated state $\hat{\rho}$ such that $\mathbb{E}\big[\|\rho-\hat{\rho}\|_1\big]\le \epsilon$, where the expected value refers to the inherent randomness of the measurement outcomes. Then necessarily $N=\Omega(N/\epsilon^2)$.
\end{proposition}

\section{Numerical experiments}\label{sec.numer}
\subsection{The model: ground state of the AKLT Hamiltonian}

In this section, we want to numerically test the performance of Algorithm~\ref{alg:learnFCS} on a concrete family of $C^*$-finitely correlated states that was given in \cite{Fannes1992} {[Section 2, Example 1]}. This family of states has the one site observable algebra $\mathcal{A} = \mathbb{M}_3$ and is parameterized by a single parameter $\theta\in [0,\pi)$. In order to give the explicit realization we need to define the following linear map 
\begin{align}
    V_\theta : \mathbb{C}^2\rightarrow\mathbb{C}^{3}\otimes \mathbb{C}^2,
\end{align}
that is completely defined by
\begin{align}
    V_\theta \ket{\tfrac{1}{2}} = \cos \theta \ket{1,-\tfrac{1}{2}} - \sin \theta \ket{0,\tfrac{1}{2} }, \\
    V_\theta \ket{-\tfrac{1}{2}} = \sin \theta \ket{0,-\tfrac{1}{2}} - \cos \theta \ket{-1,\tfrac{1}{2} },
\end{align}
where $\ket{\pm\frac{1}{2}}$ and $\ket{1},\ket{0},\ket{-1}$ denote orthonormal bases of $\mathbb{C}^2$ and $\mathbb{C}^{3}$ respectively.
Then the states $w_\theta$ that we consider are given by the realization $(\mathcal{V},e,\mathbb{E}_{\theta},\rho)$ described by,
\begin{itemize}\label{eq:aklt_quasirealization}
    \item $\mathcal{V} = \mathbb{M}_2$.
    \item $\mathbb{E}_{\theta,A}(B) = V^*_\theta ( A \otimes B )V_\theta $ for $A\in\mathbb{M}_3$ and $B\in\mathbb{M}_2$.
    \item $\rho (B ) = \frac{1}{2}\Tr (B)$ for $B\in\mathbb{M}_2$.
    \item $e = \1_{\mathbb M_2} \in \mathbb{M}_2$.
\end{itemize}
Thus, for any translation invariant state $w_\theta$ described by the above realization, we can recover any correlation function as 
\begin{align}
    \omega_\theta(A_1 \otimes  \cdots \otimes A_k ) = \frac{1}{2}\Tr \left(\mathbb{E}_{\theta,A_k} \cdots \mathbb{E}_{\theta,A_1}(\1) \right),
\end{align}
for any $A_1,\dots,A_k\in\mathbb{M}_2$. This family of states is interesting because for the particular value of $\cos \theta = \sqrt{\frac{2}{3}}$ the state $w_\theta$ coincides with the ground state of the AKLT Hamiltonian introduced in~\cite{affleck1988valence} and given by 
\begin{align}\label{eq:aklt_hamiltonian}
    H_{\text{aklt}} = \sum_i \frac{1}{2}\mathbf{S}_i \cdot \mathbf{S}_{i+1} + \frac{1}{6}\left( \mathbf{S}_i \cdot \mathbf{S}_{i+1} \right)^2 + \frac{1}{3},
\end{align}
where $\mathbf{S}_i = (S_x,S_y,S_z ) $ denote the spin 1 irreducible representation of $SU(2)$ acting at site $i$.

\subsection{Results}
For the simulations, we restrict our attention to the ground state of the AKLT Hamiltonian~\eqref{eq:aklt_hamiltonian} and fix $\cos \theta = \sqrt{\frac{2}{3}}$ in the realization of the previous Section~\ref{eq:aklt_quasirealization}. In order to fix a basis for the one site algebra $\mathbb{M}_3$ we use the normalized Gell-Mann matrices $\lbrace \lambda_i \rbrace_{i=0}^8$ \begin{center}
\begin{align}
    \lambda_0 = \frac{1}{\sqrt{3}}\begin{pmatrix}
        1 & 0 & 0 \\
        0 & 1 & 0 \\
        0 & 0 & 1
    \end{pmatrix} \quad
    \lambda_1  = \frac{1}{\sqrt{2}}\begin{pmatrix}
        0 & 1 & 0 \\
        1 & 0 & 0 \\
        0 & 0 & 0
    \end{pmatrix} \quad
    \lambda_2  = \frac{1}{\sqrt{2}}\begin{pmatrix}
        0 & -i & 0 \\
        i & 0 & 0 \\
        0 & 0 & 0
    \end{pmatrix} \nonumber \\
        \lambda_3 = \frac{1}{\sqrt{2}}\begin{pmatrix}
        1 & 0 & 0 \\
        0 & -1 & 0 \\
        0 & 0 & 0
    \end{pmatrix} \quad
    \lambda_4  = \frac{1}{\sqrt{2}}\begin{pmatrix}
        0 & 0 & 1 \\
        0 & 0 & 0 \\
        1 & 0 & 0
    \end{pmatrix} \quad
    \lambda_5  = \frac{1}{\sqrt{2}}\begin{pmatrix}
        0 & 0 & -i \\
        0 & 0 & 0 \\
        i & 0 & 0
    \end{pmatrix} \\
         \lambda_6 = \frac{1}{\sqrt{3}}\begin{pmatrix}
        0 & 0 & 0 \\
        0 & 0 & 1 \\
        0 & 1 & 0
    \end{pmatrix} \quad
    \lambda_7 = \frac{1}{\sqrt{2}}\begin{pmatrix}
        0 & 0 & 0 \\
        0 & 0 & -i\\
        0 & i & 0
    \end{pmatrix} \quad
    \lambda_8  = \frac{1}{\sqrt{6}}\begin{pmatrix}
        1 & 0 & 0 \\
        0 & 1 & 0 \\
        0 & 0 & -2
    \end{pmatrix}, \nonumber
\end{align}
\end{center}
where $\lambda_0$ is the normalized identity such that $\Tr (\lambda_i\lambda_j) = \delta_{i,j}$. We are interested in recovering the reduced density matrices of $k$ contiguous sites that are described by
\begin{align}\label{eq:reduced_density_basisexpansion}
    \rho^{(k)}_\theta = \sum_{i_1,\dots,i_k} \omega_\theta (\lambda_{i_1}\otimes \cdots \otimes \lambda_{i_k} ) \lambda_{i_1}\otimes \cdots \otimes \lambda_{i_k} ,
\end{align}
where $i_1,\dots,i_k \in\lbrace 0,1,\dots,8 \rbrace^{\otimes k}$. For the observable regular realization (Proposition~\ref{prop.observablequasireal}), we fix the right and left finite-dimensional unital subalgebras $\mathcal{C}_R = \mathcal{C}_L = \mathbb{M}_3$ for two contiguous sites with basis $\lbrace \lambda_i \rbrace_{i=0}^8$ for both $\mathcal{C}_R$ and $\mathcal{C}_L$. In order to test our algorithm, we will use some simulated data. Specifically, since we have access to the realization~\eqref{eq:aklt_quasirealization} we can compute the maps $\Omega$ and $\Omega_{A}$ for $A\in\lbrace \lambda_i\rbrace_{i=0}^8$ which are the observables that we want to read from the experiment. For completeness, we numerically check that $\text{rank}(\Omega ) = 4$ which is consistent with the realization given in Section~\ref{eq:aklt_quasirealization} since $\text{dim}(\mathcal{V}) = 4$. Then we fix the errors $\epsilon,\epsilon'>0$ and generate the simulated observables $\hat{\Omega}$ and $\hat{\Omega}_A$ as follows
\begin{align}
   \hat{\Omega} = \Omega + \epsilon \frac{P}{\| P \|_2} \quad
   \hat{\Omega}_{\lambda_{i}} = \hat{\Omega}_{\lambda_i} + \epsilon' \frac{P'_i}{\| P \|_2},
\end{align}
for all $i = 0,\dots,8$ and $P,P_i'\in\mathbb{R}^{9\times 9}$ are matrices where each entry is drawn from a Gaussian distribution with zero mean and unit variance. With this construction, we can fix the errors $\| \Omega - \hat{\Omega} \|_{2}$ and $\| \Omega_{(\cdot )} - \hat{\Omega}_{(\cdot )} \|_{2}$. Given the singular value decomposition of $\hat{\Omega}$ we need to achieve the invertibility condition of Theorem~\ref{theoerrprop} that says that $\hat{U}^{\intercal}U$ is invertible. In general $\hat{U}^{\intercal}\in\mathbb{R}^{9\times 9}$ since the perturbation $P$ will make $\hat{\Omega}$ full rank. In order to avoid this issue we can use as input rank $m=4$ and we truncate $\hat{U}$ such that we keep the columns corresponding to the 4 largest singular values of $\hat{\Omega}$. Another possibility is to use Lemma~\ref{lem:error_perturbation} in order to justify that the singular values with absolute values smaller than a threshold $\eta/2$ can be considered 0s. Thus, setting $\eta$ small enough following conditions in Definition~\ref{defrelclassc} we can truncate $\hat{U}$ keeping the columns corresponding to singular values greater than the threshold $\eta/2$. We do not simulate estimates of $\Omega(\1)$ and $\tau\Omega$, as we use the corresponding values of $\Omega$. This is fine for the purpose of this numerical illustration, as $\Omega$ acts on just two sites and thus the Hilbert-Schmidt error on $\widehat{\Omega(\1)}$ and $\widehat{\tau\Omega}$ computated from $\hat{\Omega}$ is of the same order of $\| \Omega - \hat{\Omega} \|_{2}$.

Using the simulated observables $\hat{\Omega},\hat{\Omega}_{\lambda_i}$ and the truncated $\hat{U}$ we use the spectral state decomposition of Definition~\ref{defstaterecon}
and compute the correlations
\begin{align}
    \hat{\omega}_{\theta}(\lambda_{i_1}\otimes \cdots \otimes \lambda_{i_k}) = \hat{\rho}\hat{\mathbb{K}}_{\lambda_{i_1}}\cdots\hat{\mathbb{K}}_{\lambda_{i_k}}\hat{e},
\end{align}
for different numbers of sites $k$. Then we reconstruct the full state $\hat{\rho}_k$ as in~\eqref{eq:reduced_density_basisexpansion} replacing $\omega_\theta$ by $\hat{\omega}_\theta$. Finally we compute the trace distance $\frac{1}{2} \| \hat{\rho}_k - \rho_k \|_1$ and plot our results in Figure~\ref{fig:results_simulation}.

\section{Matrix perturbation theory}\label{appendixpert}
We list the matrix perturbation theory results that we use, following the analysis of~\cite{Hsu2008,Siddiqi2009, balle2013learning}.

The following perturbation bounds can be found in \cite{stewart1990matrix}. Note that the original reference uses the convention of matrices with more rows than columns, while we need the opposite one in our analysis. This is not an issue as the norms and the Moore-Penrose pseudo-inverse are respectively invariant and commuting with transposition.

\begin{lemma}[Theorem 4.11, in~\cite{stewart1990matrix}, p.204]\label{lem:error_perturbation}
    Let $A,\tilde{A}\in\mathbb{R}^{m\times n}$ with $n\geq m$. Then if $\sigma_1 \geq \cdots \geq \sigma_n$ and $\tilde{\sigma}_1 \geq \cdots \geq \tilde{\sigma}_n$ are the singular values of $A$ and $A'$ respectively then 
    \begin{align}
        | \sigma_i - \tilde{\sigma}_i|\leq \|A - \tilde{A} \|_{2\rightarrow 2} \quad \text{for} \quad i = 1,\dots,n.
    \end{align}
\end{lemma}

\begin{lemma}[Theorem 3.8 in~\cite{stewart1990matrix}, p.143]\label{lem:error_pseudoinverses}
    Let $A,\tilde{A}\in\mathbb{R}^{m\times n}$, then the error for the pseudo-inverses has the following bound
    \begin{align}
        \|  \tilde{A}^+ - A^+\|_{2\rightarrow 2} \leq \frac{1+\sqrt{5}}{2}\max \lbrace \|\tilde{A}^+ \|^2_{2\rightarrow 2},\| A^+ \|^2_{2\rightarrow 2}\rbrace \|  \tilde{A}- A\|_{2\rightarrow 2}.
    \end{align}
\end{lemma}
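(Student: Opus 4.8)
This is the classical Wedin perturbation bound for the Moore--Penrose pseudoinverse, so the cleanest route is to invoke \cite[Thm.~3.8]{stewart1990matrix} after checking that our conventions---real matrices, the spectral norm $\|\cdot\|_{2\to2}$, and the golden-ratio constant $\tfrac{1+\sqrt5}{2}$---match those of the reference. If a self-contained argument is wanted, the plan is to reconstruct Wedin's identity and then estimate its three pieces, being careful to extract the sharp constant rather than a crude one.

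The starting point is the algebraic identity
\[
\tilde A^+ - A^+ = -\tilde A^+(\tilde A-A)A^+ + \tilde A^+(\tilde A^+)^{\intercal}(\tilde A-A)^{\intercal}(I-AA^+) + (I-\tilde A^+\tilde A)(\tilde A-A)^{\intercal}(A^+)^{\intercal}A^+ ,
\]
which holds for any pair of real matrices. I would verify it by repeatedly inserting the defining relations $A^+=A^+AA^+$ and $\tilde A^+=\tilde A^+\tilde A\tilde A^+$ together with the self-adjointness of the four orthogonal projectors $AA^+$, $A^+A$, $\tilde A\tilde A^+$, $\tilde A^+\tilde A$. This bookkeeping is routine but is the one genuinely fiddly step.

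Writing $E=\tilde A-A$ and calling the three summands $T_1,T_2,T_3$, submultiplicativity together with $\|I-AA^+\|=\|I-\tilde A^+\tilde A\|\le1$ gives $\|T_1\|\le\|\tilde A^+\|\,\|E\|\,\|A^+\|$, $\|T_2\|\le\|\tilde A^+\|^2\|E\|$ and $\|T_3\|\le\|A^+\|^2\|E\|$. A plain triangle inequality only yields the constant $3$, so to reach $\tfrac{1+\sqrt5}{2}$ I would exploit the orthogonality structure of the three terms: $T_3$ maps into $\ker\tilde A$ while $T_1,T_2$ map into $(\ker\tilde A)^{\perp}=\operatorname{range}(\tilde A^{\intercal})$, and on the input side $T_2$ is supported on $(\operatorname{range}A)^{\perp}$ whereas $T_1,T_3$ are supported on $\operatorname{range}A$. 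Splitting a unit input as $v=v_R+v_{R^{\perp}}$ accordingly and using these orthogonalities reduces $\|T_1+T_2+T_3\|^2$ to the maximum over $\|v_R\|^2+\|v_{R^{\perp}}\|^2=1$ of the quadratic form with matrix
\[
\begin{pmatrix} \|T_1\|^2+\|T_3\|^2 & \|T_1\|\,\|T_2\| \\ \|T_1\|\,\|T_2\| & \|T_2\|^2 \end{pmatrix} .
\]
Bounding this matrix entrywise by $\max\{\|A^+\|^2,\|\tilde A^+\|^2\}^2\,\|E\|^2\,\bigl(\begin{smallmatrix}2&1\\1&1\end{smallmatrix}\bigr)$ and using that the largest eigenvalue of $\bigl(\begin{smallmatrix}2&1\\1&1\end{smallmatrix}\bigr)$ is $\tfrac{3+\sqrt5}{2}=\bigl(\tfrac{1+\sqrt5}{2}\bigr)^2$ delivers exactly the claimed golden-ratio factor after taking square roots.

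The main obstacle is not conceptual: it is, first, the careful verification of Wedin's identity, and second, organizing the orthogonality bookkeeping so that the constant emerges as $\tfrac{1+\sqrt5}{2}$ rather than the weaker $\sqrt3$ or $3$. Since all of this is precisely \cite[Thm.~3.8]{stewart1990matrix}, within the paper I would simply cite it and confirm the constant.
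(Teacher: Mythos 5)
The paper gives no proof of this lemma at all---it is imported verbatim from Stewart and Sun with the citation as its justification, which is exactly what you propose to do. Your additional self-contained sketch (Wedin's identity, the range/kernel orthogonality of the three terms, and the eigenvalue $\tfrac{3+\sqrt5}{2}$ of $\bigl(\begin{smallmatrix}2&1\\1&1\end{smallmatrix}\bigr)$ yielding the golden-ratio constant) is a correct reproduction of the classical argument, so there is nothing to fix.
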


We also need the following results.

\begin{lemma}[Corollary 17 of~\cite{Siddiqi2009}]\label{cor:svd}
Let $A \in \mathbb{R}^{m \times n}$, with $m \geq n$, have rank $k\leq n$, and let $U \in
\mathbb{R}^{m \times k}$ be the matrix of $n$ left singular vectors corresponding
to the non-zero singular values $\sigma_1 \geq \ldots \geq \sigma_k > 0$ of
$A$. Let $\tilde A = A + E$. Let $\tilde U \in \mathbb{R}^{m \times k}$ be the matrix of
$k$ left singular vectors corresponding to the largest $k$ singular values
$\tilde \sigma_1 \geq \ldots \geq \tilde \sigma_n$ of $\tilde A$, and let $\tilde
U_\perp \in \mathbb{R}^{m \times (m-k)}$ be the remaining left singular vectors.
Assume $\|E\|_{2\rightarrow 2} \leq \epsilon \sigma_k$ for some $\epsilon < 1$. Then:
\begin{enumerate}
\item $\tilde \sigma_k \geq (1-\epsilon) \sigma_k$,
\item $\|\tilde U_\perp^\intercal U\|_{2\rightarrow 2} \leq \|E\|_{2\rightarrow 2}/\tilde \sigma_k$.
\end{enumerate}
\end{lemma}

\begin{lemma}\label{specialpert}[Special case of Corollary 2.4 of ~\cite{li1999lidskii}]
Let $A\in\mathbb{R}^{d_1\times d_2}$ of rank $m$, and $S\in\mathbb{R}^{d_1\times d_1}$, $T\in\mathbb{R}^{d_2\times d_2}$. Then, for any $1\leq i\leq m$
\begin{align}
\sigma_{d_1}(S)\sigma_{d_2}(T)\leq \frac{\sigma_i(SAT)}{\sigma_i(A)}\leq \sigma_{1}(S)\sigma_{1}(T)
\end{align}
\end{lemma}

\begin{lemma} \label{lemma:subspace}
In the notations of Section \ref{sec.statereconstruction}, suppose $\|\Omega-\hat{\Omega}\|_{2\rightarrow 2} \leq \eps \cdot \sigma_m(\Omega)$ for some
$\eps < 1/2$.
Then $\eps_0 := \|\Omega-\hat{\Omega}\|_{2\rightarrow 2}^2/((1-\eps)\sigma_m(\Omega))^2 <1$ and:
\begin{enumerate}
\item $\sigma_m(\hat U^\intercal \hat \Omega) \geq (1-\eps) \sigma_m( \Omega)$,
\item $\sigma_m(\hat U^{\intercal} U) \geq \sqrt{1-\eps_0}$, 
\item $\sigma_m(\hat U^\intercal  \Omega) \geq \sqrt{1-\eps_0} \sigma_m( \Omega)$.
\end{enumerate}
\end{lemma}
\begin{proof}
Since $\sigma_m(\hat U^\intercal \hat \Omega) = \sigma_m(\hat \Omega)$, the
(1.) is immediate from Lemma~\ref{cor:svd}.
(2.) follows since for any $x \in \mathbb{C}^m$, $\|\hat U^\intercal U x\|_{2} \ge \|x\|_2 \sqrt{1 - \|\hat
U_\perp^\intercal U\|_{2\rightarrow 2}^2} \geq \|x\|_2 \sqrt{1 - \eps_0}$ by
Lemma~\ref{cor:svd} and the fact that $\eps_0 < 1$ (see Lemma 21 of~\cite{Siddiqi2009} for a full proof). (3.) follows from Lemma~\ref{specialpert} noticing that 
$\sigma_{m}(\hat U^{\intercal}\Omega)=\sigma_{m}((\hat U^{\intercal} U) (U^{\intercal}\Omega))$, $\sigma_{m}(U^{\intercal} \Omega)=\sigma_{m}(\Omega)$,
and choosing $S=\hat U ^{\intercal}U$, and $A$ obtained from padding $D$ with zeros and $T$ a unitary which extends $O$.
\end{proof}

\end{appendix}

\end{document}